\newif\ifconf
  \setlist[itemize]{leftmargin=*}
  \setlist[enumerate]{leftmargin=*}
  \crefname{step}{Step}{Steps}
  \crefname{item}{Item}{Items}
  \crefname{condition}{Condition}{Conditions}
  \crefname{enumi}{Step}{Steps}
  \crefname{case}{Case}{Cases}
\crefname{algorithm}{reduction}{reductions}
\Crefname{algorithm}{Reduction}{Reductions}
\theoremstyle{plain} 
\newtheorem{itheorem}{Theorem}
\newtheorem{idefinition}{Definition}
\newtheorem{theorem}{Theorem}[section]
\newtheorem{lemma}[theorem]{Lemma}
\newtheorem{proposition}[theorem]{Proposition}
\newtheorem{claim}[theorem]{Claim}
\newtheorem{definition}[theorem]{Definition}
\newtheorem{corollary}[theorem]{Corollary}
\newtheorem{openproblem}{Open Problem}
\theoremstyle{definition} 
\newtheorem{construction}[theorem]{Construction}
\theoremstyle{remark}
\newcommand{\N}{\mathbb{N}}
\newcommand{\Field}{\mathbb{F}}
\newcommand{\Bits}{\{0,1\}}
\newcommand{\Code}{C}
\newcommand{\Domain}{\text{Dom}}
\newcommand{\BlockLength}{\ell}
\newcommand{\ReedMuller}{\mathrm{RM}}
\newcommand{\Distance}{d}
\newcommand{\HammingDist}{\Delta}
\newcommand{\rev}{\mathsf{rev}}
\newcommand{\revvec}[1]{\vec{#1}_{\rev}}
\newcommand{\Alphabet}{\Sigma}
\newcommand{\Prover}{\mathcal{P}}
\newcommand{\Verifier}{\mathcal{V}}
\newcommand{\MalVerifier}{\Verifier^*}
\newcommand{\Instance}{x}
\newcommand{\Witness}{w}
\newcommand{\Proof}{\pi}
\newcommand{\QueryBound}{q^*}
\newcommand{\MalProof}{\pi^*}
\newcommand{\Simulator}{\mathsf{Sim}}
\newcommand{\View}{\mathsf{View}}
\newcommand{\QueryOut}{\ell_{\text{out}}}
\newcommand{\Randomness}{\mu}
\newcommand{\QueryAnsSet}{T}
\newcommand{\Query}{\alpha}
\newcommand{\Degree}{d}
\newcommand{\NumVars}{m}
\newcommand{\Subcube}{H}
\newcommand{\SCPoly}{F}
\newcommand{\Polys}[3]{#1^{\leq #2}[X_{1},\ldots,X_{#3}]}
\newcommand{\RandPoly}{Q}
\newcommand{\RandLDPoly}{T}
\newcommand{\LagrangePoly}[1]{L_{#1}}
\newcommand{\Language}{\mathcal{L}}
\newcommand{\Relation}{\mathcal{R}}
\newcommand{\OracleTable}{\QueryAnsSet}
\newcommand{\Oracle}{\mathcal{O}}
\newcommand{\AnswerLength}{a}
\newcommand{\NumQueries}{q}
\newcommand{\DecComplexity}{d}
\newcommand{\SoundErr}{s}
\newcommand{\Robustness}{\rho}
\newcommand{\Ecc}{\mathsf{ECC}}
\newcommand{\NonBooProver}{\Prover_{\AnswerLength}}
\newcommand{\NonBooVerifier}{\Verifier_{\AnswerLength}}
\newcommand{\NonBooProof}{\Proof_{\AnswerLength}}
\newcommand{\EccProof}{\tau}
\newcommand{\Rout}{r_{\text{out}}}
\newcommand{\Rin}{r_{\text{in}}}
\newcommand{\Dout}{d_{\text{out}}}
\newcommand{\Din}{d_{\text{in}}}
\newcommand{\Qin}{q_{\text{in}}}
\newcommand{\Qout}{q_{\text{out}}}
\newcommand{\Eout}{\varepsilon_{\text{out}}}
\newcommand{\Ein}{\varepsilon_{\text{in}}}
\newcommand{\RobOut}{\rho_{\text{out}}}
\newcommand{\ProxIn}{\delta_{\text{in}}}
\newcommand{\ProverIn}{\Prover_{\text{in}}}
\newcommand{\ProverOut}{\Prover_{\text{out}}}
\newcommand{\VerifierIn}{\Verifier_{\text{in}}}
\newcommand{\VerifierOut}{\Verifier_{\text{out}}}
\newcommand{\ProverComp}{\Prover_{\text{comp}}}
\newcommand{\VerifierComp}{\Verifier_{\text{comp}}}
\newcommand{\CircVal}{\textsc{Circuit Value}}
\newcommand{\ProofOut}{\Proof_{\text{out}}}
\newcommand{\DecisionCkt}{D}
\newcommand{\DecisionCktOut}{\DecisionCkt_{\text{out}}}
\newcommand{\QuerySet}{I}
\newcommand{\Rand}{r}
\newcommand{\QueryLoc}{\gamma}
\newcommand{\QuerySetOutR}[1]{\QuerySet_{\text{out},#1}}
\newcommand{\ZKProver}{\Prover_{0}}
\newcommand{\ZKVerifier}{\Verifier_0}
\newcommand{\ZKProof}{\Proof_0}
\newcommand{\ZKSimulator}{\overline{\Simulator}}
\newcommand{\Adversary}{\mathcal{A}}
\newcommand{\MaskPoly}{R}
\newcommand{\ZeroP}{Z}
\newcommand{\VerRandVec}{\vec{c}}
\newcommand{\VerRand}{c}
\newcommand{\SumcheckInput}{F}
\newcommand{\ProverFunc}{g}
\newcommand{\ProverPoly}{\hat{\ProverFunc}}
\newcommand{\BadF}{\Tilde{F}}
\newcommand{\SumcheckTotal}{\gamma}
\newcommand{\Event}{E}
\newcommand{\Proximity}{\delta}
\newcommand{\SumDist}{\delta_{\Sigma}}
\newcommand{\RMDist}{\delta_{\text{RM}}}
\newcommand{\ProverLocality}{\ell}
\newcommand{\LocalComp}{f}
\newcommand{\ProofLD}{\Proof_{\hat{P}}}
\newcommand{\HonView}{\nu}
\newcommand{\LDView}{\hat{\nu}}
\newcommand{\SCView}{\nu_{\Sigma}}
\newcommand{\RobSC}{\mathrm{RSC}}
\newcommand{\LDTProx}{\varepsilon_P}
\newcommand{\VQueryAlg}{Q}
\newcommand{\VAcc}{\mathrm{Acc}}
\newcommand{\OSAT}{\mathsf{Oracle}\textsf{-}\mathsf{3SAT}}
\newcommand{\NTIME}{\mathsf{NTIME}}
\newcommand{\ZKVars}{k}
\newcommand{\SCLang}{\mathsf{Sum}}
\newcommand{\tdash}{\text{-}}
\newcommand{\NEXP}{\mathsf{NEXP}}
\newcommand{\NP}{\mathsf{NP}}
\newcommand{\SharpP}{\#\mathsf{P}}
\newcommand{\PZK}{\mathsf{PZK}}
\newcommand{\SZK}{\mathsf{SZK}}
\newcommand{\PCP}{\mathsf{PCP}}
\newcommand{\PZKPCP}{\PZK\tdash\PCP}
\newcommand{\RobustPZKPCP}{\mathsf{r}\PZKPCP}
\newcommand{\Poly}{\mathrm{poly}}
\newcommand{\eqdef}{\ {:=} \ }
\newcommand{\Alg}{\mathcal{A}}
\newcommand{\pST}{\; \middle| \;}
\newcommand{\Circuit}{C}
\newcommand{\defemph}[1]{\textbf{\emph{#1}}}
\newcommand{\jack}[1]{\dtcolornote[Jack]{cyan}{#1}}
\newcommand{\donewpage}{\ifconf\else\newpage\fi}
\title{A Zero-Knowledge PCP Theorem}
\author{Anonymized for submission}
\date{}
\author{Tom Gur \thanks{University of Cambridge. Email: \texttt{tom.gur@cl.cam.ac.uk}. Supported by ERC Starting Grant 101163189 and UKRI Future Leaders Fellowship MR/X023583/1.}
\and Jack O'Connor\thanks{University of Cambridge. Email: \texttt{jack.oconnor@cl.cam.ac.uk}.}
\and Nicholas Spooner\thanks{Cornell University. Email: \texttt{nspooner@cornell.edu}.}}
\date{\today}
\date{}
\begin{document}

\maketitle
\begin{abstract}
    \noindent We show that for every polynomial $q^*$ there exist polynomial-size, constant-query, non-adaptive PCPs for $\NP$ which are perfect zero knowledge against (adaptive) adversaries making at most $q^*$ queries to the proof. In addition, we construct exponential-size constant-query PCPs for $\NEXP$ with perfect zero knowledge against any polynomial-time adversary. This improves upon both a recent construction of perfect zero-knowledge PCPs for $\SharpP$ (STOC 2024) and the seminal work of Kilian, Petrank and Tardos (STOC 1997).
\end{abstract}

\donewpage
\ifconf\else
\tableofcontents
\fi
\donewpage

\section{Introduction}
\label{sec:intro}
The PCP theorem \cite{AroraS98,ALMSS92} states that for any language in $\NP$, there exists a polynomial-size proof that can be checked probabilistically by reading only a constant number of bits from the proof; or, succinctly,
\begin{equation*}
    \NP \subseteq \PCP[\log n,1] \;,    
\end{equation*}
where $\PCP[r,q]$ is the class of all languages that admit a PCP verifier that uses $O(r)$ random bits and reads $O(q)$ bits of the proof (note that logarithmic randomness complexity implies polynomial proof length).

While PCPs originated from the study of zero-knowledge proofs \cite{GoldwasserMR89}, there seems to be an intrinsic tension between the two notions: PCPs achieve locality by encoding NP witnesses in a manner that \emph{spreads} global information throughout the proof, whereas zero-knowledge proofs aim to \emph{hide} all information except for the validity of the statement. Moreover, PCPs are fundamentally non-interactive objects, whereas interaction is often crucial for zero knowledge.

 Indeed, one must take care in even \emph{defining} zero knowledge PCPs (ZK-PCPs), as it is impossible to achieve non-trivial zero knowledge against a malicious verifier that reads the entire proof. In their seminal work on ZK-PCPs, Kilian, Petrank and Tardos \cite{KilianPT97} identify two regimes of interest:
\begin{enumerate}[label=(\alph*)]
	\item \label{item:poly-regime} \emph{Polynomial-size} PCPs that are zero-knowledge against a verifier that makes at most a \emph{fixed} polynomial number of queries $\QueryBound$ to the PCP. In this regime, they construct ZK-PCPs for NP with polylogarithmic query complexity. We refer to $\QueryBound$ as the ``query bound''.
	\item \label{item:exp-regime} \emph{Exponential-size} PCPs that are zero-knowledge against \emph{any polynomial-time} verifier. In this regime, they construct ZK-PCPs for NEXP with polynomial query complexity.
\end{enumerate}
However, these constructions fall short of a ``zero-knowledge PCP theorem'' in a fundamental way: the query complexity is polylogarithmic, as opposed to $O(1)$, which is a characteristic property of the PCP theorem.

A major obstacle to achieving $O(1)$ query complexity via the \cite{KilianPT97} approach is that the technique used to obtain zero knowledge leads to an inherently \emph{adaptive} honest verifier (i.e., which makes multiple rounds of queries to the proof). Known query-reduction methods apply only to non-adaptive PCPs. Aside from its theoretical interest, non-adaptivity is crucial for some applications \cite{IshaiWY16}. Finally, these constructions only achieve statistical zero knowledge (SZK-PCP) and not perfect zero knowledge (PZK-PCP).

Building on techniques developed in \cite{BenSassonCFGRS17,ChiesaFGS18,ChenCGOS23}, a recent work \cite{GurOS2024} constructed exponential-size PZK-PCPs with polynomially many non-adaptive queries for $\SharpP$. Our first result shows that such PZK-PCPs exist for $\NEXP$, with constant query complexity.
\begin{itheorem}
\label{ithm:NEXP}
    There exist PCPs for $\NEXP$ of exponential length with a non-adaptive verifier that reads $O(1)$ bits of the proof, which are perfect zero-knowledge against any efficient adversary.
\end{itheorem}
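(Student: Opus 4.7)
The plan is to obtain the theorem by composing an outer non-adaptive \emph{robust} PZK-PCP for $\NEXP$ with polynomial query complexity against a suitable inner PZK-PCP of proximity with constant query complexity. This two-layer structure mirrors the classical query-reducing composition, but every layer must additionally carry perfect zero knowledge against adaptive polynomial-time adversaries, and the composition itself must preserve this.

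First, I would construct the outer PCP by scaling up the construction of \cite{GurOS2024} from $\SharpP$ to $\NEXP$. Their verifier is driven by a Reed--Muller low-degree test and a masked sumcheck, both of which are natural generalisations of classical PCP components; in particular, the masking by a uniformly random low-degree polynomial guarantees perfect zero knowledge. Applying this template to a Babai--Fortnow--Lund-style arithmetisation of an $\NEXP$ computation yields a non-adaptive PZK-PCP for $\NEXP$ of exponential length and polynomial query complexity. A mild robustification, for instance by parallel repetition of the local tests or by passing through a robust local decoder, makes the verifier's decision circuit robust in Hamming distance without damaging zero knowledge, since masking is insensitive to such transformations.

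Second, I would prove a composition theorem stating that a robust non-adaptive PZK-PCP composed with a non-adaptive PZK-PCP of proximity of constant query complexity is again a non-adaptive PZK-PCP of constant query complexity. Completeness and soundness follow the standard composition template for query reduction. The new content is the simulator: on receiving the adversary's non-adaptive query list, it partitions it into outer queries and inner queries per local test, invokes the outer simulator to produce consistent outer answers, then invokes the inner PZK-PCPP simulator on each inner block to produce inner answers consistent with the simulated outer input. Because both simulators are efficient and produce the correct marginals, a hybrid argument between the real and simulated views yields perfect indistinguishability against any polynomial-time adversary.

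The main obstacle will be constructing the inner PZK-PCP of proximity with the right zero-knowledge flavour: the verifier here has oracle access to the purported input as well as to the proof, so zero knowledge must hold even when the adversary is permitted to read bits of that input. I would design such an object by combining an MPC-in-the-head style inner verifier with a bounded-query-tolerant encoding of the input, along the lines of the constructions used in \cite{ChenCGOS23, GurOS2024}, and lift it to perfect zero knowledge using random low-degree masking rather than a statistically hiding commitment. Once these components are in place, the theorem follows by instantiating the composition with parameters chosen so that the composed verifier reads only $O(1)$ bits, preserving both non-adaptivity and perfect zero knowledge.
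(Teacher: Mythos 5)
Your high-level plan (robustify an outer PZK-PCP for $\NEXP$ and compose to reduce queries) matches the paper's structure, but two of the steps you identify rest on misconceptions that, if followed, would lead you into unnecessary or unresolved difficulties.

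First, you treat the construction of an \emph{inner} PZK-PCP of proximity as the main obstacle and propose to build one. The paper's central observation in the composition step is precisely that this is not required: the inner PCPP can be the existing, non-zero-knowledge PCPP for $\CircVal$ of \cite{BGHSV05}. The reason is captured by the notion of \emph{locally computable proofs} (\cref{def:loc-proofs}): each inner proof $\Proof_r$ is a deterministic, efficient function of at most $\Qout$ symbols of the outer proof $\ProofOut$, so the entire composed proof is $\Qout$-locally computable from $\ProofOut$. By \cref{lem:locally-computable-implies-zk}, the composed PCP inherits perfect zero knowledge directly from the outer PCP, at the cost of reducing the query bound by a factor of $\Qout$. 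Your proposed hybrid simulator ``invokes the inner PZK-PCPP simulator on each inner block,'' but the inner proofs at different randomness values $r$ are correlated through the shared outer proof, so per-block inner simulators do not obviously produce a consistent joint view. The locally computable framing sidesteps this by reducing everything to simulating the outer proof alone.

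Second, your account of the outer $\NEXP$ construction glosses over the decisive obstruction. You say that ``applying this template to a Babai--Fortnow--Lund-style arithmetisation'' with random low-degree masking suffices, but the BFLS proof must \emph{contain an encoding of the witness} $A$ (its low-degree extension), and that encoding itself leaks. The masking in \cite{GurOS2024} only hides intermediate partial sums, not the witness. The paper handles this by replacing $\hat A$ with a sumcheck commitment $\hat C$ with $\sum_{c \in H^k} \hat C(\cdot,c) = \hat A$, and then \emph{linearising} the occurrences of $\hat A$ in the BFLS summand so that the triple inner sum over $c_1,c_2,c_3$ can be pulled to the outside and folded into a single zero-knowledge sumcheck, never opening the commitment. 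Without this step there is no clear way to assert the $\OSAT$ claim without publishing an extension of $A$. Finally, the robustification is done not by parallel repetition or a local decoder, but by a specific bundling of the sumcheck proof symbols (\cref{construc:robust-sumcheck}, \cref{lem:robust-sumcheck-tech}), chosen so that a single falsified low-degree constraint corrupts a constant fraction of the verifier's bundled view; this bundling is again shown to preserve zero knowledge because it is locally computable from the unbundled \cite{GurOS2024} proof.
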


Our second result ``scales down'' the above to obtain polynomial-size non-adaptive PZK-PCPs for $\NP$ with constant query complexity.\footnote{For our definition of the class $\PZKPCP$, see \cref{def:PZK-PCP-class}.}
\begin{itheorem}[``Zero-knowledge PCP theorem'']
\label{ithm:NP}
	For any $\QueryBound \leq 2^{\Poly(n)}$, there exist PCPs for $\NP$ of length $\Poly(\QueryBound,n)$ with a non-adaptive verifier that reads $O(1)$ bits of the proof, which are perfect zero-knowledge against any adversary reading at most $\QueryBound$ bits of the proof; i.e.,
	\begin{equation*}
    \NP \subseteq \PZKPCP[\log n,1] \;.    
\end{equation*}
\end{itheorem}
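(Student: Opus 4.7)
The plan is to derive Theorem~\ref{ithm:NP} from Theorem~\ref{ithm:NEXP} by a \emph{scaling-down} argument analogous to the classical passage from an NEXP PCP theorem to an NP PCP theorem. The first step is to observe that the construction behind Theorem~\ref{ithm:NEXP} is not merely a statement about $\NEXP$, but really a parameterized transformation: given any $\NTIME[T]$ machine, it produces a PCP of length $\Poly(T)$ with non-adaptive $O(1)$-query complexity that is perfect zero knowledge against any adversary making at most $\Poly(T)$ queries. This should fall out of inspecting the underlying primitives (low-degree tests, sumcheck, and the PZK masking used for Theorem~\ref{ithm:NEXP}), all of which have complexity bounds expressible in terms of the arithmetization size rather than tied to $n$ specifically.

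With this parameterized form in hand, the reduction to $\NP$ is direct. On input an $\NP$ instance $\Instance$ of length $n$ together with a query bound $\QueryBound \leq 2^{\Poly(n)}$, set $T := \max(n^c, \QueryBound)$, where $n^c$ bounds the running time of a deterministic verifier for the $\NP$ language. Pad $\Instance$ and its witness into a canonical $\NTIME[T]$ instance $\Instance'$ and apply the parameterized construction to $\Instance'$. The resulting PCP has length $\Poly(T) = \Poly(\QueryBound, n)$ and a non-adaptive $O(1)$-query verifier, and its simulator works against any adversary making $\Poly(T) \geq \QueryBound$ queries, which is precisely what Theorem~\ref{ithm:NP} requires. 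The randomness complexity is $O(\log T) = O(\log n + \log \QueryBound)$, which in the canonical regime $\QueryBound = \Poly(n)$ reduces to $O(\log n)$, yielding the class $\PZKPCP[\log n, 1]$.

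The main obstacle is establishing this uniform parameterization of Theorem~\ref{ithm:NEXP}. Most delicately, the PZK simulator must be analyzed as a function of $T$ rather than being tied to the NEXP regime $T = 2^{\Poly(n)}$: its output must be identically distributed to the honest verifier's view against every $\Poly(T)$-query adversary, uniformly in the relationship between $T$ and $n$, and its own running time must scale with $T$ alone. A secondary subtlety is ensuring that the padding step does not break non-adaptivity or the constant-query guarantee, which should hold because the padded instance is recognized by a syntactic extension of the original verifier. Once the parameterization is verified, the scaling-down step itself is a straightforward padding argument with no new technical content beyond careful bookkeeping of the proof-length, simulator-complexity, and query-bound trade-offs.
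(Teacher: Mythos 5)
Your high-level plan---treat the construction behind Theorem~\ref{ithm:NEXP} as a transformation parameterized by the nondeterministic time bound $T$ and the adversary query bound $\QueryBound$, then instantiate with $T = \Poly(n)$---is essentially what the paper does. Theorem~\ref{thm:robust-pzk-pcp-np-nexp} is exactly such a parameterized statement, stated over $\NTIME(T)$, and both the NP and NEXP results are obtained by instantiating it and then applying the (uniform) alphabet-reduction and proof-composition machinery. So you have correctly located the crux: everything depends on whether the NEXP argument uniformizes cleanly.

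It does not uniformize as automatically as you suggest, and this is where your proposal has a real gap. The NEXP construction arithmetizes the Cook--Levin 3-CNF $B$ by taking $\hat{B}$ to be the arithmetic formula corresponding to $B$, which has total degree $d_B = \Poly(|B|) = \Poly(n)$. The field is then chosen with $|\Field| = \Poly(\log T)$ and $d_B \ll |\Field|$. In the NEXP regime $\log T = \Poly(n)$, so this is consistent. But if you simply pad an $\NP$ instance and feed it through the same construction with $T = \Poly(n,\QueryBound)$, you now need $|\Field| > d_B = \Poly(n)$ while $\log T = O(\log n + \log\QueryBound)$; the proof length, which is roughly $|\Field|^{O(\log T/\log\log T + k)}$, becomes super-polynomial. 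To recover a $\Poly(n,\QueryBound)$-length proof the paper switches the arithmetization in the $\NP$ case: since $B$ has $O(\log n)$ variables and a polynomial-size truth table, one can take $\hat{B}$ to be the multilinear extension of $1-B$, which has degree $d_B = O(\log n) = O(\log T)$ and is evaluable in $\Poly(n)$ time. That choice is not available for $\NEXP$ (the truth table is exponential), and the formula arithmetization is too high degree for $\NP$, so the two regimes genuinely require different $\hat{B}$. Your claim that the uniform parameterization ``should fall out of inspecting the underlying primitives'' elides this; the padding argument alone, applied to the NEXP construction as a black box, fails to give polynomial length. (A secondary omission: the constant-query guarantee comes from composing the polylogarithmic-query robust PZK-PCP with an inner non-ZK PCPP via the paper's ZK-preserving composition theorem; this is uniform in $T$, so your plan is right to treat it as not regime-specific, but it should be surfaced rather than bundled into ``$O(1)$-query verifier'' for free.)
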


\subsection{Techniques}
\label{sec:tech}

\parhead{PZK-PCPs for nondeterministic computation}
This paper builds on the prior work of \cite{GurOS2024}, which proves a weaker version of \cref{ithm:NEXP} that only captures (decision) $\SharpP$, by constructing non-adaptive PZK-PCPs for the sumcheck problem.

Readers familiar with the PCP literature may wonder why this construction does not lead immediately to a ZK-PCP for NEXP; indeed, the first construction of a PCP for NEXP is essentially a reduction to sumcheck \cite{BabaiFLS91}. However, as we discuss below, even with a PZK-PCP for the sumcheck problem, the BFLS construction is not zero knowledge.

We start by briefly reviewing the BFLS construction, which is a PCP for the $\NEXP$-complete problem $\OSAT$, defined as follows.
\begin{idefinition}[Oracle 3-SAT]
	Let $B \colon \{0,1\}^{r + 3s + 3} \to \{0,1\}$ be a 3-CNF. We say that $B$ is \emph{implicitly satisfiable} if there exists $A \in \Bits^s \to \Bits$ such that for all $z \in \Bits^r, b_1,b_2,b_3 \in \Bits^s$, \\	 $B(z,b_1,b_2,b_3,A(b_1),A(b_2),A(b_3)) = 1$. Let $\OSAT$ be the language of implicitly satisfiable 3-CNFs.
\end{idefinition}
The BFLS PCP consists of two parts. First, the \emph{multilinear extension} of the witness $A$; i.e., a multilinear polynomial $\hat{A}$ over some finite field $\Field$ such that $\hat{A}(x) = A(x)$ for all $x \in \Bits^s$. Second, a sumcheck PCP\footnote{Strictly speaking, a PCP \emph{of proximity}; we will ignore this distinction for this overview.} for the following claim\footnote{To mitigate some technical issues with soundness, the actual construction uses a slightly different summand polynomial.}:
	\begin{equation}
		\label{eq:intro-osat}
		\sum_{\substack{z \in \{0,1\}^r \\ b_1,b_2,b_3 \in \{0,1\}^s}} \hat{B}(z,b_1,b_2,b_3,\hat{A}(b_1),\hat{A}(b_2),\hat{A}(b_3)) = 2^{r+3s} ~,
	\end{equation}
where $\hat{B}$ is an arithmetisation of the circuit $B$.
	
A natural first step is to use the zero-knowledge sumcheck PCP of \cite{GurOS2024} in place of the standard sumcheck PCP, which hides hard-to-compute information about partial sums. However, this \emph{does not} yet yield a ZK-PCP, because the first part of the construction contains an encoding of the witness $A$, which violates the zero-knowledge condition.
 
 To hide $A$, we would like to use the \emph{sumcheck commitment scheme}, introduced by \cite{ChiesaFGS22} (see also \cite{ChiesaFS17}) for their construction of an \emph{interactive} PCP (IPCP) for $\NEXP$. A sumcheck commitment to the polynomial $\hat{A}(\vec X)$ is a random polynomial $C(\vec X,\vec Y)$ of individual degree $d' \geq 2$ in each $Y$-variable such that
	\begin{equation*}
		\sum_{c \in \{0,1\}^\ZKVars} C(\alpha,c) = \hat{A}(\alpha)
	\end{equation*}
	for all $\alpha \in \Field^s$. It was shown in \cite{ChiesaFGS22} that this commitment perfectly hides $\hat{A}$ against all adversaries that make fewer than $2^k$ queries to $C$.
	
	The construction of \cite{BabaiFLS91} uses the sumcheck protocol to reduce checking \eqref{eq:intro-osat} to three random queries to $\hat{A}$. Building on this idea, the ZK-IPCP construction of \cite{ChiesaFGS22} uses the \emph{interactive} sumcheck protocol to open the sumcheck commitment to $\hat{A}$ at those points. Finally, to ensure that those three evaluations do not leak information about the witness, $\hat{A}$ is chosen to be a random multiquartic---rather than the unique multilinear---extension of $A$.
	
    However, the strategy above strongly relies on the \emph{interactivity} of the ZK-IPCP. Indeed, observe that if we were to ``unroll'' this interaction into a PCP, we would simply write down all of $\hat{A}$! We must therefore establish \eqref{eq:intro-osat} \emph{without} opening the sumcheck commitment. That is, we would like to directly check:
    \begin{equation*}
		\sum_{\substack{z \in \{0,1\}^r \\ b_1,b_2,b_3 \in \{0,1\}^s}} \hat{B}\left(z,b_1,b_2,b_3,\sum_{c \in \{0,1\}^k} C(b_1,c),\sum_{c \in \{0,1\}^k} C(b_2,c),\sum_{c \in \{0,1\}^k} C(b_3,c)\right) = 2^{r+3s} ~,
	\end{equation*}
	To do this, we first ``pull out'' the three inner summations. There are various ways this can be achieved; we follow a linearisation approach. First observe that (assuming $\hat{A}$  takes boolean values on $\{0,1\}^s$), the LHS of \eqref{eq:intro-osat} is equal to
	\begin{equation*}
		\sum_{a_1,a_2,a_3 \in \{0,1\}} \sum_{\substack{z \in \{0,1\}^r \\ b_1,b_2,b_3 \in \{0,1\}^s}} \hat{B}(z,b_1,b_2,b_3,a_1,a_2,a_3) \cdot \prod_{i=1}^3 (\hat{A}(b_i) - (1-a_i))~,
	\end{equation*}
	since the product expression ``zeroes out'' any term of the sum for which some $a_i \neq \hat{A}(b_i)$.
	Next, observe that for any $b_1,b_2,b_3,a_1,a_2,a_3$, provided that $\Field$ has characteristic different from $2$,
	\begin{equation*}
		\prod_{i=1}^3 (\hat{A}(b_i) - (1-a_i))
		 = \prod_{i=1}^3 \sum_{c \in \{0,1\}^k}  \left(C(b_i,c) - \frac{1-a_i}{2^k}\right) 
		 = \sum_{c_1,c_2,c_3 \in \{0,1\}^k} \prod_{i=1}^3 \left(C(b_i,c_i) - \frac{1-a_i}{2^k}\right)~.
	\end{equation*}
	Taken together, we see that checking \eqref{eq:intro-osat} is equivalent to checking
	\begin{equation*}
		\sum_{c_1,c_2,c_3 \in \{0,1\}^k} \sum_{a_1,a_2,a_3 \in \{0,1\}} \sum_{\substack{z \in \{0,1\}^r \\ b_1,b_2,b_3 \in \{0,1\}^s}} \hat{B}(z,b_1,b_2,b_3,a_1,a_2,a_3) \prod_{i=1}^3 \left(C(b_i,c_i) - \frac{1-a_i}{2^k}\right) = 2^{r+3s}~,
	\end{equation*}
	which can be proven in zero knowledge using the \cite{GurOS2024} ZK-PCPP for sumcheck. In particular, the PCPP simulator requires only polynomially many evaluations of the summand to simulate polynomially many queries to the proof. By setting $k = \omega(\log n)$ we can simulate those evaluations by lazily simulating $C$ as a uniformly random polynomial (via an algorithm of \cite{BenSassonCFGRS17}), and then evaluating the summand directly. The query complexity of the verifier is $\Poly(n)$.
 
	Our result for $\NP$ is obtained in a similar way, scaling the parameters appropriately. In particular, for a given adversary query bound $\QueryBound$, we can set $k = O(\log q^*)$. The query complexity of the verifier is then $\Poly(\log n, \log q^*)$. We also point out that the honest prover in our construction for $\NP$ is efficient, given a valid witness as input.

\parhead{Proof composition and zero knowledge} In order to obtain constant query ZK-PCPs for $\NP$, we would like to apply the proof composition paradigm \cite{BenSassonGHSV06} to our ZK-PCPs. This involves composing a robust outer PCP with an inner PCP of proximity to obtain a PCP which inherits the randomness complexity of the former and query complexity of the latter. To do this, we first need to strengthen our ZK-PCPs to satisfy \emph{robust} soundness (i.e., the local view of the verifier must be far from an accepting view, with high probability). Then we show that proof composition \emph{preserves} the zero knowledge of the outer PCP. To the best of our knowledge, this is the first composition theorem for ZK-PCPs.

Our first step is to obtain robust ZK-PCPs for $\NP$ with polylogarithmic query complexity. While the PZK-PCP of \cite{GurOS2024} (upon which our PZK-PCP builds) is an algebraic construction, it does \emph{not} have constant robust soundness as written. We present a modification of the \cite{GurOS2024} PCP for sumcheck, following the ``query bundling'' approach of \cite{BenSassonGHSV06}, that has constant robust soundness.

The key challenge is to show that this modification preserves zero knowledge. In fact, we will show that a much more general class of ``local'' transformations preserve zero knowledge. This class includes not only query bundling, but also the subsequent steps of alphabet reduction and proof composition. To capture this class formally, we define a new notion, \emph{locally computable proofs}.

\begin{idefinition}[Locally computable algorithms (informal; see \cref{def:loc-proofs})]
     Let $\Adversary$ and $\Adversary_0$ be randomized algorithms. We say that $\Adversary$ is \defemph{$\ell$-locally computable} from $\Adversary_0$ if there exists an efficient, deterministic oracle algorithm $\LocalComp$ making at most $\ProverLocality$ queries to its oracle such that, for every input $x$, the following two distributions are identically distributed:
    \begin{equation*}
        \Adversary(x), \quad (f^{\Proof_0} ~|~ \Proof_0 \gets \Adversary_0(x)).
    \end{equation*}
\end{idefinition}

We show that if the PCP prover algorithm $\Prover$ is $\ell$-locally computable from a zero-knowledge PCP prover $\Prover_0$, then provided $\ell$ is asymptotically smaller than the query bound on the ZK-PCP, $\Prover$ inherits the zero knowledge guarantee of $\Prover_0$. Intuitively, this is true because if a proof $\Proof$ is locally computable from a proof $\Proof_0$, and $\Proof_0$ is zero knowledge, then we can apply $f$ to the simulator for $\Proof_0$ to obtain a simulator for $\Proof$. This notion is surprisingly versatile, and allows us to prove zero knowledge in an array of distinct settings:

\begin{itemize}
    \item \parhead{Robustification of \cite{GurOS2024}} We show that our modified \cite{GurOS2024} construction is locally computable from the original construction, and thus inherits zero knowledge.
    
    \item \parhead{Alphabet reduction} Recall that alphabet reduction allows us to transform a robust PCP over a large alphabet into a boolean PCP, while maintaining robustness. It is performed by encoding each symbol of the PCP with a good error correcting code. More formally, if $\Proof$ is a distribution over robust PCP proofs over the alphabet $\Bits^a$ for some $a \in \N$, and $\Ecc\colon \Bits^a \to \Bits^b$ is a systematic binary error-correcting code of constant relative distance and rate, then we can obtain a robust boolean PCP by defining a new proof $\tau(\alpha) \eqdef \Ecc(\Proof(\alpha))$ for every proof index $\alpha$, and writing $\tau$ over the alphabet $\Bits$. Then $\tau$ is $1$-locally computable from $\Proof$ by the following function:
    \begin{equation*}
        f^{\Proof}(\Query, i) = \Ecc(\Proof(\Query))_i,
    \end{equation*}
    where $i \in [b]$. We note that prior work on alphabet reduction for ZKPCPs \cite{HazayVW22} required a much more complex construction because their ZKPCP achieves only a quadratic gap between the honest verifier's query complexity and the query bound.

    \item \parhead{Proof composition} Recall that composition of an outer PCP system $(\ProverOut, \VerifierOut)$ for $\Language$ with an inner PCP of proximity $(\ProverIn, \VerifierIn)$ for circuit evaluation proceeds as follows. Let $\ProofOut \gets \ProverOut$. For every choice $r \in \Bits^{\Rout}$ of $\VerifierOut$'s randomness, the composed prover computes $\VerifierOut$'s query set $Q(r)$ and an ``inner proof'' $\Proof_r \eqdef \ProverIn(\VerifierOut, \ProofOut|_{Q(r)})$, which attests to the fact that if $\VerifierOut$ performed its verification of $\ProofOut$ using randomness $r$, then it would have accepted. Each inner proof is a function of at most $\QueryOut$ many locations of $\ProofOut$, where $\QueryOut$ is the query complexity of $\VerifierOut$. The composed proof is given by $(\ProofOut, (\Proof_r)_{r \in \Bits^{\Rout}})$. 

    Hence the composed proof is $\QueryOut$-locally computable from $\ProofOut$, by the following function: 
    \begin{equation*}
        f^{\ProofOut}(\Oracle, i) = \begin{cases}
            \ProofOut(i) ~&\text{if}~\Oracle = \ProofOut
            \\
            \ProverIn(\VerifierOut, \ProofOut|_{Q(r)})_i ~&\text{if}~\Oracle = \Proof_r,~\text{for some}~ r\in\Bits^{\Rout}
        \end{cases}
        .
    \end{equation*}
    Note that the composed PCP is locally computable from the outer PCP, so only the outer PCP needs to be zero knowledge to ensure the composed PCP is zero knowledge. Therefore we can employ the existing (non-ZK) PCP of proximity for circuit evaluation of \cite{BenSassonGHSV06} as the inner PCPP.
\end{itemize}

\subsection{Open problems}
This work shows that for any polynomial $\QueryBound$, any language in NP has a polynomial-sized proof that can be probabilistically checked by probing only $O(1)$ bits, but where any set of $\QueryBound$ bits carries no information about the witness. This can be viewed a zero-knowledge PCP theorem that matches parameters of the original PCP theorem \cite{AroraS98,ALMSS92}. Since then, stronger versions of the PCP theorem have been shown. It is tempting to ask whether zero-knowledge PCPs can match the strongest constructions of standard PCPs.

In particular, one of the most immediate open questions is whether it is possible to obtain ZK-PCPs with nearly-linear length. Optimising the proof length of PCPs received much attention for decades after the first proof of the PCP theorem, where the current state of the art achieves quasilinear proof length \cite{BS08,Din07}. We ask whether the same can be obtained for zero-knowledge PCPs.
\begin{openproblem}[Short ZK-PCPs]
    Do there exist $O(1)$-query ZK-PCPs for $\NP$ with proof length $\tilde{O}(n)$?
\end{openproblem}
We remark that our algebraic zero-knowledge techniques are based on Reed-Muller arithmetisation and the sumcheck protocol, whereas (non-ZK) constructions of quasilinear length PCPs are based on Reed-Solomon arithmetisation and combinatorial gap amplification, hence new ideas are necessary for such strengthening of our theorem.

Even more ambitiously, one could ask whether it is possible to transform any construction of a (non-ZK) PCP to a ZK-PCP while preserving its parameters. 
\begin{openproblem}[PCP to ZK-PCP transformation]
    Is there a black-box transformation that imbues a PCP construction with zero knowledge?
\end{openproblem}
We note that prior to \cite{GurOS2024}, all works on ZK-PCPs (see below) followed this approach, but those transformations either introduce adaptivity or only achieve weak ZK guarantees, and none preserve query complexity. Similiar transformations are known, e.g., for multi-prover interactive proofs \cite{BenOrGKW88} and their quantum analogues \cite{GriloSY19,MastelS24} whereas in other models, such as in zero-knowledge streaming interactive proofs \cite{cormode2023streaming}, we have a zero knowledge sumcheck protocol but no generic transformation is known.

We remark that our techniques make whitebox use of the structure of the \cite{BabaiFLS91} PCP, and rely strongly on the \cite{GurOS2024} ZK-PCP for sumcheck; hence, obtaining a generic transformation would require new ideas.  

\parhead{The quantum PCP conjecture.} Finally, we highlight a connection between zero-knowledge PCPs and one of the most imporant open problems in quantum complexity theory: the quantum PCP (QPCP) conjecture.

Most classical constructions of PCPs rely on an encoding of the $\NP$ witness via a locally-testable and (relaxed) locally-decodable code \cite{BenSassonGHSV06,gur2020relaxed}. Even though there is growing evidence that good quantum LTCs may exist \cite{aharonov2015quantum,leverrier2022towards,anshu2023nlts,dinur2024expansion}, quantum codes \emph{cannot} be locally decodable due to the no-cloning theorem. This is one of the main barriers towards applying algebraic and coding-theoretic techniques to QPCPs.

Quantum codes are fundamentally tied to zero knowledge. It is a well-known fact that the erasure of a subset of qubits of a codeword is correctable if and only if the reduced density matrix on that subset is independent of the encoded state. In other words, roughly speaking, a quantum code has good distance if and only if it satisfies a quantum analogue of the PZK property for PCPs. Thus PZK-PCPs are perhaps the closest classical analogue of QPCPs, and studying them may help to shed light on the QPCP conjecture.

\subsection{Related work}
The first zero-knowledge PCPs appeared in the work of Kilian, Petrank and Tardos \cite{KilianPT97}. Later works \cite{IshaiMS12,IshaiSVW13,IshaiW14, IshaiMSX15} simplified this construction, and extended it to PCPs of proximity and the closely related notion of zero-knowledge locally testable codes (LTCs). These constructions rely on an adaptive honest verifier, and hence it is unclear how to use proof composition to improve their parameters. \cite{IshaiMS12,IshaiMSX15} showed that PCPs which are zero knowledge against any efficient adversary and where the proof oracle is described by a polynomial-sized circuit exist only for languages in $\SZK$.

Another line of work, motivated by cryptographic applications, focuses on obtaining SZK-PCPs for NP with a \emph{non-adaptive} honest verifier from leakage resilience. These results come with caveats, achieving either a weaker notion of zero knowledge known as \emph{witness indistinguishability} \cite{IshaiWY16}, or simulation against adversaries making only quadratically many more queries than the honest verifier \cite{HazayVW22}. See \cite{Weiss22} for a survey of this line of work.

In related models that allow for interaction, zero-knowledge proofs are easier to construct. We know that $\mathsf{PZK\text{-}MIP} = \mathsf{MIP}$ ($= \NEXP$) \cite{BenOrGKW88}, where $\mathsf{MIP}$ is the class of languages with a multi-prover interactive proofs. The quantum analogue of this result, $\mathsf{PZK\text{-}MIP}^* = \mathsf{MIP}^*$ ($= \mathsf{RE}$), is also known to hold \cite{ChiesaFGS22,GriloSY19,MastelS24}. The constructions in this work draw inspiration from a similar result for interactive PCPs (IPCPs) \cite{KalaiR08}, an interactive generalisation of PCPs (and special case of IOPs): $\mathsf{PZK\text{-}IPCP} = \mathsf{IPCP} = \mathsf{NEXP}$ \cite{BenSassonCFGRS17,ChiesaFS17}.

\donewpage

\section{Preliminaries}
\label{sec:prelims}
Throughout $\Field$ is a finite field. $\CircVal$ denotes the $\mathsf{P}$-complete language $\{(\Circuit, x) \in \Bits^*: \Circuit~\text{is an encoding of a boolean circuit}$ such that $\Circuit(x) = 1\}$. For an oracle $\Proof$, we use $\Domain(\Proof)$ to denote the domain of $\Proof$. For $t \in \N$, and oracles $\Proof_1,\dots,\Proof_t$, we denote their concatenation by $(\Proof_1,\dots,\Proof_t)$, and 
use the notation $(\Proof_i, q)$ to denote the index $q \in \Domain(\Proof_i)$. For a pair language $\Language$, for every string $x \in \Bits^n$, we denote the set $\Language[\Instance] \eqdef \{y : (\Instance,y) \in \Language \}$. For a relation $\Relation$, we denote the language corresponding to $\Relation$ by $\Language(\Relation) \eqdef \{\Instance : (\Instance, \Witness) \in \Relation \}$.

\parhead{Algorithms} We write $\mathcal{A}^{\Proof}(\Instance)$ to denote the output of $\mathcal{A}$ when given input $\Instance$ (explicitly) and oracle access to $\Proof$. We use the abbreviation PPT to denote probabilistic polynomial-time. If an oracle algorithm makes at most $\QueryBound(n)$ oracle queries we say that it is $\QueryBound$-query-bounded. We generally omit the internal randomness of an algorithm from probability statements; that is, we write $\Pr[\Alg(x) = 0]$ to mean $\Pr_{r \gets \Bits^n}[\Alg(x; r) = 0]$.

\parhead{Polynomials} For $\Degree \in \N$, we write $\Polys{\Field}{\Degree}{\NumVars}$ for the ring of polynomials in $\NumVars$ variables over $\Field$ of \emph{total} degree at most $\Degree$. For $\vec{\Degree} \eqdef (\Degree_1,\dots, \Degree_\NumVars) \in \N^\NumVars$, we will write $\Polys{\Field}{\vec{\Degree}}{\NumVars}$ for the ring of polynomials in $\NumVars$ variables over $\Field$ of \emph{individual} degree $\Degree_i$ in the $i$-th variable. For a product set $S = S_1 \times \cdots \times S_\NumVars \subseteq \Field^\NumVars$, $\vec a \in S$, and $\vec \Degree = (|S_1|-1,\ldots,|S_\NumVars|-1)$, we denote by $\LagrangePoly{S,\vec a}$ the unique element of $\Polys{\Field}{\vec{\Degree}}{\NumVars}$ such that for all $\vec b \in S$,
\[
	\LagrangePoly{S,\vec a}(\vec b) = \begin{cases}
 			1 & \text{if $\vec b = \vec a$, and} \\
 			0 & \text{otherwise.}
 		\end{cases}
\]

\parhead{Vector reversal and truncation} For a vector $\vec{\alpha} = (\alpha_1, \dots, \alpha_n) \in \Field^n$, we denote by $\revvec{\alpha}$ the reversed vector $(\alpha_n, \alpha_{n-1}, \dots, \alpha_1)$. For any $i \in [n]$, we denote the truncation of $\vec{\alpha}$ to its first $i$ entries by $\vec{\alpha}_i \eqdef (\alpha_1, \dots, \alpha_i)$.

\subsection{Coding theory}
Let $\Alphabet$ be an alphabet and let $\BlockLength \in \N$. A \emph{code} $\Code$ is a subset $\Code \subseteq \Alphabet^\BlockLength$. Given two strings $x, y \in \Alphabet^n$, we denote the relative Hamming distance between $x$ and $y$ by $\HammingDist(x, y) \eqdef \left|\{i \in [n]:x_i \neq y_i\}\right|/n$. We say that a vector $x$ is $\varepsilon$-far from a set $S \subseteq \Alphabet^n$ if $\min_{y \in S} \HammingDist(x,y) \geq \varepsilon$.

\parhead{Reed--Muller codes} The Reed--Muller (RM) code is the code consisting of evaluations of multivariate low-degree polynomials over a finite field. Given a finite field $\Field$, and positive integers $\NumVars$ and $\Degree$ we denote by $\ReedMuller[\Field, \NumVars,\Degree]$ the linear code consisting of evaluations of $\NumVars$-variate polynomials over $\Field$ of total degree at most $\Degree$. 

\subsection{PCPs}

\begin{definition}[PCP]
    A \defemph{probabilistically checkable proof (PCP)} for a relation $\Relation$ consists of a prover $\Prover$ and a PPT verifier $\Verifier$ such that the following holds.
    \begin{enumerate}
        \item \parhead{Completeness} For every $(\Instance,\Witness)\in \Relation$,
        \begin{equation*}
            \Pr_{\Proof \gets \Prover(\Instance, \Witness)}[\Verifier^{\Proof}(\Instance) = 1] = 1.
        \end{equation*}

        \item \parhead{Soundness} For every $\Instance \notin \Language(\Relation)$ and every oracle $\MalProof$,
        \begin{equation*}
            \Pr[\Verifier^{\MalProof}(\Instance) = 1] \leq \frac{1}{2}.
        \end{equation*}
    \end{enumerate}
    We say that a PCP is \defemph{efficient} if $\Prover(\Instance, \Witness)$ can be computed efficiently for any $(\Instance, \Witness) \in \Relation$. 
\end{definition}
We formally define PCPs for relations (rather than langauges) to allow us to discuss efficient PCP provers in the $\NP$ setting.

\begin{definition}[PCPP]
    For $\Proximity\colon \N \to [0,1]$, a \defemph{probabilistically checkable proof of proximity (PCPP)} for a pair language $\Language$ with proximity parameter $\Proximity$ consists of a prover $\Prover$ and a verifier $\Verifier$ such that the following holds for every pair of strings $(x, y)$.
    \begin{enumerate}
        \item \parhead{Completeness} If  $(\Instance,y) \in \Language$, 
        \begin{equation*}
            \Pr_{\Proof \gets \Prover(\Instance, y)}[\Verifier^{(y, \Proof)}(\Instance) = 1] = 1.
        \end{equation*}

        \item \parhead{Soundness} If $y$ is $\delta$-far from the set $\Language[\Instance] \eqdef \{y : (\Instance,y) \in \Language \}$, then for every oracle $\MalProof$,
        \begin{equation*}
            \Pr[\Verifier^{(y, \MalProof)}(\Instance) = 1] \leq \frac{1}{2}.
        \end{equation*}
    \end{enumerate}
\end{definition}

\begin{definition}[Non-adaptive PCP verifiers]
    A \defemph{non-adaptive PCP verifier} is an algorithm of the form $\Verifier^\Proof(\Instance; \Randomness) = \DecisionCkt (\Instance, \Proof|_{\VQueryAlg(\Instance; \Randomness)};\Randomness)$, where $\DecisionCkt \colon \Bits^n \times \Alphabet^\NumQueries \times \Bits^{\Rand(n)} \to \Bits$ is the \defemph{decision algorithm}, and $\VQueryAlg\colon \Bits^n \times \Bits^{r(n)} \to \Domain(\Proof)^\NumQueries$ is the \defemph{query algorithm}.
\end{definition}

\begin{definition}[Accepting view of a PCP verifier]
    For non-adaptive $\NumQueries$-query PCP verifiers of the form $\Verifier^\Proof(\Instance; \Randomness) = \DecisionCkt (\Instance, \Proof|_{\VQueryAlg(\Instance; \Randomness)};\Randomness)$, where $\DecisionCkt \colon \Bits^n\times\Alphabet^\NumQueries \times \Bits^{\Rand(n)} \to \Bits$, and $\VQueryAlg\colon \Bits^n \times \Bits^{r(n)} \to \Domain(\Proof)^\NumQueries$, we define, for each choice of randomness $\Randomness \in \Bits^{\Rand(n)}$ the \defemph{set of accepting views of $\Verifier$} to be:
    \begin{equation*}
        \VAcc(\Verifier(x; \Randomness)) \eqdef \{a \in \Alphabet^\NumQueries : \DecisionCkt(x, a; \Randomness) = 1\}.
    \end{equation*}
\end{definition}
\begin{definition}[Accepting view of a PCP of proximity verifier]
    For non-adaptive $\NumQueries$-query PCP of proximity verifiers of the form $\Verifier^{(y,\Proof)}(\Instance; \Randomness) = \DecisionCkt (x, (y ,\Proof)|_{\VQueryAlg(\Instance; \Randomness)};\Randomness)$, where $\DecisionCkt \colon \Bits^n \times \Alphabet^\NumQueries \times \Bits^{\Rand(n)} \to \Bits$, and $\VQueryAlg\colon \Bits^n \times \Bits^{r(n)} \to \Domain((y,\Proof))^\NumQueries$, we define, for each choice of randomness $\Randomness \in \Bits^{\Rand(n)}$, the \defemph{set of accepting views of $\Verifier$} to be:
    \begin{equation*}
        \VAcc(\Verifier(x; \Randomness)) \eqdef \{a \in \Alphabet^\NumQueries : \DecisionCkt(x, a; \Randomness) = 1\}.
    \end{equation*}
\end{definition}
In both of the above definitions when it is clear from context we will omit $\Randomness$ and write $\VAcc(\Verifier(x))$. 

\begin{definition}[Robust soundness]
    A non-adaptive PCP for a relation $\Relation$ has \defemph{robustness $\Robustness$} if for every $x \notin \Language(\Relation)$, it holds that, for every oracle $\MalProof$,
    \begin{equation*}
      \Pr_{\Randomness}\left[\HammingDist(\MalProof|_{\VQueryAlg(\Instance)}, \VAcc(\Verifier(\Instance; \Randomness))) \leq \Robustness\right] \leq \frac{1}{2}.
    \end{equation*}
    A non-adpative PCP of proximity for a language $\Language$ has \defemph{robustness $\Robustness$} if for every $(x,y)$ such that $y$ is $\delta$-far from $\Language[x]$, it holds that, for every oracle $\MalProof$,
    \begin{equation*}
    \Pr_{\Randomness}\left[\HammingDist((y, \MalProof )|_{\VQueryAlg(\Instance)}, \VAcc(\Verifier(\Instance;\Randomness))) \leq \Robustness\right] \leq \frac{1}{2}.
    \end{equation*}
\end{definition}

\begin{definition}
	For $\Robustness\colon \N \to [0,1]$, a PCP has \defemph{expected robustness $\Robustness$} if for every $x \notin \Language(\Relation)$, we have for every oracle $\MalProof$, 
	\begin{equation*}
		\mathop{\mathbb{E}}_{\Randomness}\left[\HammingDist(\MalProof|_{\VQueryAlg(x)}, \VAcc(\Verifier(x;\Randomness)))\right] \geq \Robustness(|x|).
	\end{equation*}
	A PCP of proximity has \defemph{expected robustness $\Robustness$} if for every $(x,y)$ such that $y$ is $\delta$-far from $\Language[x]$, it holds that, for every oracle $\MalProof$,
    \begin{equation*}
		\mathop{\mathbb{E}}_{\Randomness}\left[\HammingDist((y, \MalProof)|_{\VQueryAlg(x)}, \VAcc(\Verifier(x;\Randomness)))\right] \geq \Robustness(|x|).
	\end{equation*}
\end{definition}

The following proposition relating robustness and expected robustness appears as Proposition 2.10 in \cite{BenSassonGHSV06}.
\begin{proposition}[\cite{BenSassonGHSV06}]
	\label{prop:exp-rob-to-rob}
	If a PCPP has expected robustness $\Robustness$, then for every $\varepsilon \leq \Robustness$, it has robust-soundness error $1-\varepsilon$ with robustness parameter $\Robustness - \varepsilon$. 
\end{proposition}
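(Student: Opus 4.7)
The plan is to prove this by a reverse-Markov-type inequality on the random variable
\[
X(\Randomness) \eqdef \HammingDist\bigl((y,\MalProof)|_{\VQueryAlg(\Instance)}, \VAcc(\Verifier(\Instance;\Randomness))\bigr),
\]
which takes values in $[0,1]$ since it is a relative Hamming distance over a fixed-size local view. The PCP case is identical with $y$ omitted, so I will treat both cases together. By the expected-robustness hypothesis, for every $(x,y)$ with $y$ that is $\delta$-far from $\Language[x]$ and every malicious $\MalProof$, we have $\mathbb{E}_\Randomness[X(\Randomness)] \geq \Robustness$. The goal is to show that for every $\varepsilon \leq \Robustness$,
\[
\Pr_{\Randomness}\bigl[X(\Randomness) \leq \Robustness - \varepsilon\bigr] \leq 1 - \varepsilon,
\]
which is exactly robust soundness with error $1-\varepsilon$ and robustness parameter $\Robustness - \varepsilon$.

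The argument is a one-line conditioning computation. Let $p \eqdef \Pr_\Randomness[X(\Randomness) > \Robustness - \varepsilon]$. Splitting the expectation according to whether $X$ exceeds $\Robustness - \varepsilon$ and using $X \in [0,1]$,
\[
\Robustness \;\leq\; \Expec[X] \;\leq\; (\Robustness - \varepsilon)(1-p) + 1 \cdot p \;=\; (\Robustness - \varepsilon) + (1 - \Robustness + \varepsilon)\, p.
\]
Rearranging gives $p \geq \varepsilon / (1 - \Robustness + \varepsilon)$. Since $\varepsilon \leq \Robustness \leq 1$, the denominator is at most $1$, so $p \geq \varepsilon$, i.e., $\Pr[X \leq \Robustness - \varepsilon] \leq 1 - \varepsilon$, as required.

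No real obstacle arises: the statement is essentially a deterministic fact about bounded random variables with a given lower bound on the mean. The only points to be careful about are (i) confirming that $X$ is indeed bounded in $[0,1]$ (true because it is a relative Hamming distance over the verifier's fixed-length query set), and (ii) checking the edge cases $\varepsilon = \Robustness$ (where the bound becomes $\Pr[X \leq 0] \leq 1 - \Robustness$, still meaningful) and $\Robustness = 1$ (where the bound becomes $p \geq \varepsilon / \varepsilon = 1$, consistent with the expectation being forced to its maximum). The same derivation works verbatim for both the PCP and PCPP formulations, since the only difference is whether $y$ appears inside the view queried by $\VQueryAlg$.
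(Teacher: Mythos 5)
Your proof is correct, and it is the standard argument (in effect, Markov's inequality applied to $1-X$, or equivalently the conditioning calculation you wrote), which is precisely how this fact is established in \cite{BenSassonGHSV06}; the paper at hand merely cites it. The only thing I would add is that you could state the slightly sharper bound $\Pr[X \leq \Robustness - \varepsilon] \leq (1-\Robustness)/(1-\Robustness+\varepsilon)$, which your inequality $p \geq \varepsilon/(1-\Robustness+\varepsilon)$ already gives before you relax the denominator to $1$, but the weaker form $1-\varepsilon$ is all the proposition claims.
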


\subsection{Zero-knowledge PCPs}

\begin{definition}[View]
	For a PCP $(\Prover, \Verifier)$ and a (possibly malicious) verifier $\MalVerifier$, $\View_{\MalVerifier,\Prover}(\Instance,\Witness)$ denotes the view of $\MalVerifier$ with input $\Instance$ and oracle access to $\Proof \gets \Prover(\Instance,\Witness)$. That is, $\View_{\MalVerifier,\Prover}(\Instance,\Witness)$ comprises $\MalVerifier$'s random coins and all answers to $\MalVerifier$'s queries to $\Proof$. 
\end{definition}

\begin{definition}[Perfect zero-knowledge PCP]
    We say that a PCP system $(\Prover, \Verifier)$ for a relation $\Relation$ is \defemph{perfect zero knowledge} (a PZK-PCP) with query bound $\QueryBound$ if for every (possibly malicious) adaptive $\QueryBound$-query-bounded verifier $\MalVerifier$ there exists a PPT algorithm $\Simulator_{\MalVerifier}$ (the simulator), such that for every $(\Instance,\Witness)\in \Relation$, $\Simulator_{\MalVerifier}(\Instance)$ is distributed identically to $\View_{\MalVerifier,\Prover}(\Instance,\Witness)$.
\end{definition}

All of our constructions satisfy the stronger notion of \emph{black-box} zero knowledge, defined next.

\jack{Do we want sim to be poly in $\QueryBound$ here too?}

\begin{definition}[Black-box perfect zero-knowledge PCP]
    We say that a PCP system $(\Prover, \Verifier)$ for a relation $\Relation$ is \defemph{black-box} perfect zero knowledge with query bound $\QueryBound$ if there exists a PPT algorithm $\Simulator$ (the simulator), such that for every (possibly malicious) adaptive $\QueryBound$-query-bounded verifier $\MalVerifier$, and for every $(\Instance,\Witness) \in \Relation$, $\Simulator^{\MalVerifier}(\Instance)$ is distributed identically to $\View_{\MalVerifier,\Prover}(\Instance,\Witness)$.
\end{definition}

We define the complexity class $\PZKPCP$ and its robust version.

\begin{definition}[$\PZKPCP, \RobustPZKPCP$]
    \label{def:PZK-PCP-class}
    For functions $r, q, \QueryBound \colon \N \to \N$, we define the complexity class \defemph{$\PZKPCP_{\QueryBound, \Alphabet}[r(n), q(n)]$} to be the set of all languages that admit perfect zero-knowledge PCPs written over the alphabet $\Alphabet$ with randomness complexity $O(r(n))$, query complexity $O(q(n))$ and query bound $\QueryBound(n)$. Further, we define the complexity class \defemph{$\RobustPZKPCP_{\QueryBound, \Alphabet}[r(n), q(n)]$} to be the set of all languages that admit perfect zero-knowledge PCPs (with the same parameters) with robustness $\Omega(1)$.

    We define the complexity class 
    \begin{equation*}
        \PZKPCP[r, q] \eqdef \bigcap_{\QueryBound} \PZKPCP_{\QueryBound, \Bits}[r + \log \QueryBound, q]
    \end{equation*}
    where the intersection ranges over all functions $q^*(n)$ that are bounded by $2^{p(n)}$ for some polynomial $p$.
\end{definition}
That is, $\Language \in \PZKPCP[r,q]$ if for any query bound $q^*$ (that is at most exponential in $n$), $\Language$ has a perfect zero-knowledge PCP of length $\Poly(2^r,q^*)$ over the boolean alphabet with query complexity $O(q)$. 

We also define zero-knowledge PCPs of proximity.

\begin{definition}[View of PCP of proximity]
	For a PCP of proximity $(\Prover, \Verifier)$ and a (possibly malicious) verifier $\MalVerifier$, $\View_{\MalVerifier,\Prover}(\Instance,y)$ denotes the view of $\MalVerifier$ with explicit input $\Instance$ and oracle access to both implicit input $y$ and $\Proof \gets \Prover(\Instance)$. That is, $\View_{\MalVerifier,\Prover}(\Instance,\Witness)$ comprises $\MalVerifier$'s random coins and all answers to $\MalVerifier$'s queries to $(y,\Proof)$. 
\end{definition}

\begin{definition}[Black-box perfect zero-knowledge PCP of proximity]
    We say that a PCP of proximity system $(\Prover, \Verifier)$ for a pair language $\Language$ is \defemph{black-box} perfect zero knowledge with query bound $\QueryBound$ if there exists an algorithm $\Simulator$, such that for every (possibly malicious) adaptive $\QueryBound$-query-bounded verifier $\MalVerifier$, and for every $(\Instance,y) \in \Language$, $\Simulator^{(\MalVerifier,y)}(\Instance)$ is distributed identically to $\View_{\MalVerifier,\Prover}(\Instance,y)$, and moreover, $\Simulator$ runs in time $\Poly(|x|, \QueryBound)$.
\end{definition}
\subsection{Low-degree testing}

\begin{definition}[Vector-valued Reed--Muller code]
    The vector-valued Reed--Muller code consists of tuples of evaluations of multivariate low-degree polynomials over a finite field. Given a finite field $\Field$, and positive integers $\NumVars$, $\Degree$ and $k$ we denote by $\ReedMuller^k[\Field, \NumVars, \Degree]$ the linear code over the alphabet $\Field^k$ consisting of $k$-tuples of evaluations of $m$-variate polynomials over $\Field$ of total degree at most $\Degree$. 
    More formally, 
    \begin{equation*}
        \ReedMuller^k[\Field, \NumVars, \Degree] \eqdef \{(p_1(\vec{x}), \dots, p_{k}(\vec{x})) \in \Field^k : \vec{x} \in \Field^\NumVars,~ \forall i \in [k], ~p_i(\vec{X}) \in \Polys{\Field}{\Degree}{\NumVars}\}.
    \end{equation*}
\end{definition}

The following theorem appears as Proposition 5.7 in \cite{Paradise2021}.
\begin{theorem}[Robust vector-valued low-degree test \cite{Paradise2021}] 
\label{thm:bundled-ldt}
    For $\delta>0$, provided $|\Field| > 25k$ there exists a test that given oracle access to a function $F\colon \Field^\NumVars\to\Field^k$, makes $|\Field|$ queries to $F$, runs in time $\Poly(|\Field|, \NumVars, \Degree, k)$, and:
    \begin{itemize}
        \item if $F \in \ReedMuller^k[\Field, \NumVars, \Degree]$ then the test accepts with probability $1$;
        \item if $F$ is $\varepsilon$-far from $\ReedMuller^k[\Field, \NumVars,\Degree]$ then the expected distance of the tester's view from any accepting view is at least $\Omega(\varepsilon)$.
    \end{itemize}
\end{theorem}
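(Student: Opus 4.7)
The natural test to consider is a ``bundled'' line-restriction low-degree test. On input oracle access to $F \colon \Field^\NumVars \to \Field^k$, sample a uniformly random affine line $\ell(t) = a + tb$ with $a, b \gets \Field^\NumVars$, query $F(\ell(t))$ for every $t \in \Field$ (obtaining $|\Field|$ symbols from the alphabet $\Field^k$), and accept iff for each coordinate $i \in [k]$, the sequence $(F_i(\ell(t)))_{t \in \Field}$ is the evaluation table of a univariate polynomial of degree at most $\Degree$. This makes $|\Field|$ queries, runs in $\mathrm{poly}(|\Field|, \NumVars, \Degree, k)$ time via coordinate-wise univariate interpolation and degree checking, and has perfect completeness: if $F \in \ReedMuller^k[\Field, \NumVars, \Degree]$, then every $F_i$ restricted to any line is a univariate polynomial of degree at most $\Degree$, so the test accepts with probability $1$.

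For robust soundness, the plan is to argue the contrapositive via the standard self-correction framework, extended to the vector-valued setting. Let $\eta := \mathbb{E}_\ell[\HammingDist(F|_\ell, \ReedMuller^k[\Field, 1, \Degree])]$ denote the expected distance of the tester's view (measured over alphabet $\Field^k$) from the nearest accepting view, and suppose $\eta$ is below some fixed absolute threshold. For each line $\ell$ on which the local distance is below the threshold, let $h^\ell \in \ReedMuller^k[\Field, 1, \Degree]$ be the (unique) nearest bundled line-codeword, and define the corrected function $\tilde F(x) := \mathrm{plurality}_{\ell \ni x}\; h^\ell(t_x)$, where $t_x$ denotes the parameter at which $\ell$ passes through $x$. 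A line-line intersection analysis in the spirit of Rubinfeld--Sudan and Arora--Safra (plus a plane-test argument to certify global low degree) should then show both that $\tilde F \in \ReedMuller^k[\Field, \NumVars, \Degree]$ and that $\HammingDist(F, \tilde F) = O(\eta)$; combined with $\HammingDist(F, \ReedMuller^k[\Field, \NumVars, \Degree]) \geq \varepsilon$, this yields $\eta = \Omega(\varepsilon)$ as desired.

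The main obstacle is the bundling step. A naive coordinate-wise reduction to the scalar robust low-degree test of \cite{BenSassonGHSV06} only yields $\eta = \Omega(\varepsilon / k)$, since the per-coordinate expected distance $\eta_i := \mathbb{E}_\ell[\HammingDist(F_i|_\ell, \ReedMuller[\Field, 1, \Degree])]$ can be as small as $\eta / k$ when the ``bad sets'' across the $k$ coordinates are nearly disjoint. The hypothesis $|\Field| > 25 k$ is used precisely to ensure that the bundled line code $\ReedMuller^k[\Field, 1, \Degree]$ has minimum distance bounded safely above $1/2$ with sufficient slack to absorb the $k$-fold union bound needed in the line-line analysis. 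This is what allows the self-correction argument to operate at the bundle level, rather than per-coordinate, and thereby recover the full $\Omega(\varepsilon)$ robustness rather than the loose $\Omega(\varepsilon/k)$ bound.
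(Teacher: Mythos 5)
The paper does not prove \cref{thm:bundled-ldt} at all; it is imported as a black box (Proposition~5.7 of \cite{Paradise2021}), so there is no in-paper argument to compare against, and your attempt must be judged on its own merits. On those merits, your high-level plan (a bundled line-restriction test, followed by plurality self-correction) is a reasonable outline, and you correctly identify the central difficulty: a naive coordinate-wise reduction to the scalar robust low-degree test of \cite{BenSassonGHSV06} loses a factor of $k$, because $F$ being $\varepsilon$-far from $\ReedMuller^k[\Field,\NumVars,\Degree]$ only guarantees that \emph{some} coordinate $F_i$ is $\varepsilon/k$-far from $\ReedMuller[\Field,\NumVars,\Degree]$.

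However, the proposal does not actually close that gap, and the explanation you offer for the hypothesis $|\Field| > 25k$ is incorrect. You claim the condition ensures the bundled line code $\ReedMuller^k[\Field,1,\Degree]$ has minimum relative distance above $1/2$. That is not so: two distinct bundled line-codewords $(p_1,\dots,p_k)$ and $(q_1,\dots,q_k)$ disagree at every $t$ where \emph{any} single coordinate polynomial disagrees, so if $p_j \neq q_j$ for some $j$ the codewords already disagree on at least $|\Field| - \Degree$ points; the minimum relative distance of $\ReedMuller^k[\Field,1,\Degree]$ is therefore exactly $1 - \Degree/|\Field|$, identical to the scalar Reed--Solomon line code and \emph{independent of $k$}. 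Pushing it above $1/2$ requires $|\Field| > 2\Degree$, not a bound in terms of $k$. The genuine role of a field-size-versus-$k$ hypothesis has to appear in the part of the argument you leave as an assertion --- controlling how often local bundled decoding along a random line disagrees with the restriction of a globally nearest codeword (or in whatever aggregation step replaces the coordinate-wise union bound) --- and this is exactly the step (``a line-line intersection analysis... plus a plane-test argument... should then show both that $\tilde F \in \ReedMuller^k$ and that $\HammingDist(F,\tilde F) = O(\eta)$'') that you have not carried out. As it stands, the proposal hand-waves the crux and misattributes the role of the one quantitative hypothesis, so it does not constitute a proof of the claim.
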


\section{Zero-knowledge proof composition}
In this section, we will prove that PCP proof composition (as in \cite{BenSassonGHSV06}) preserves zero knowledge. Towards this end, we introduce the notion of \emph{locally computable proofs}, which are proofs in which each symbol can be deterministically and efficiently computed from a local view of another proof. For example, in proof composition, the each inner proof is locally computable from the outer (zero-knowledge) proof.

In \cref{sec:loc-proofs}, we formally define locally computable proofs and show that if a PCP is locally computable from a PZK-PCP, then the locally computable PCP also exhibits perfect zero knowledge. In \cref{sec:alphabet-reduction}, we show how to convert a PZK-PCP over an arbitrary alphabet into a boolean PZK-PCP. In \cref{sec:proof-comp}, we show that proof composition preserves perfect zero knowledge. Specifically, if the outer PCP of the composition is perfect zero knowledge, then the composed PCP will also be perfect zero knowledge.

\subsection{Locally computable proofs}
\label{sec:loc-proofs}

We introduce the notion of $\ProverLocality$-locally computable proofs. This is a new notion that we consider to be of independent interest and will be useful towards showing that query bundling, alphabet reduction and proof composition preserve zero knowledge.

\begin{definition}
\label{def:loc-proofs}
    Let $\Adversary$ and $\Adversary_0$ be randomized algorithms, and let $\ell\colon \N\to\N$. We say that $\Adversary$ is \defemph{$\ell$-locally computable} from $\Adversary_0$ on $C \subseteq \Bits^*$ if there exists an efficient, deterministic oracle algorithm $\LocalComp$ making at most $\ProverLocality$ queries to its oracle such that, for every $x \in C$, the following two distributions are  identical:
    \begin{equation*}
        \Adversary(x); \qquad (f^{\Proof_0} ~|~ \Proof_0 \gets \Adversary_0(x)).
    \end{equation*}
\end{definition}

\begin{lemma}
\label{lem:locally-computable-implies-zk}
    Let $(\ZKProver, \ZKVerifier)$ be a PZK-PCP for a relation $\Relation$ with query bound $\QueryBound$, and let $(\Prover, \Verifier)$ be a PCP for $\Language$ such that $\Prover$ is $\ProverLocality$-locally computable from $\ZKProver$ on $\Language$. Then $(\Prover, \Verifier)$ is perfect zero knowledge with query bound $\QueryBound/\ell$.
\end{lemma}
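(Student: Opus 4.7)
The plan is to reduce the zero-knowledge property of $(\Prover, \Verifier)$ to that of $(\ZKProver, \ZKVerifier)$ by exhibiting, for any $\QueryBound/\ell$-query-bounded adaptive adversary $\MalVerifier$ for $(\Prover, \Verifier)$, a corresponding $\QueryBound$-query-bounded adaptive adversary $\MalVerifier_0$ for $(\ZKProver, \ZKVerifier)$ that ``wraps'' $\MalVerifier$ using the local compiler $f$. Concretely, $\MalVerifier_0^{\Proof_0}(\Instance)$ internally runs $\MalVerifier(\Instance)$, using its own random tape as $\MalVerifier$'s random coins, and whenever $\MalVerifier$ makes a query at position $q \in \Domain(\Prover(\Instance,\Witness))$, $\MalVerifier_0$ evaluates $f^{\Proof_0}(q)$ by making at most $\ell$ queries to $\Proof_0$ and forwards the result to $\MalVerifier$. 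Since $\MalVerifier$ makes at most $\QueryBound/\ell$ queries, $\MalVerifier_0$ makes at most $\QueryBound$ queries, so PZK for $(\ZKProver, \ZKVerifier)$ yields a PPT simulator $\Simulator_0$ with $\Simulator_0(\Instance) \equiv \View_{\MalVerifier_0, \ZKProver}(\Instance, \Witness)$ for every $(\Instance,\Witness) \in \Relation$.

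We now build $\Simulator_{\MalVerifier}$ from $\Simulator_0$ as follows: sample $v_0 \gets \Simulator_0(\Instance)$, parse $v_0$ as (coins of $\MalVerifier_0$, answers to $\MalVerifier_0$'s queries to $\Proof_0$), and then re-execute $\MalVerifier$ deterministically with the embedded coins, answering each query $q$ of $\MalVerifier$ by running $f^{\Proof_0}(q)$ against the recorded $\Proof_0$-answers in $v_0$ (all required $\Proof_0$ positions were necessarily probed by $\MalVerifier_0$ in the real execution, so the relevant entries of $v_0$ are consistent). Output $\MalVerifier$'s coins together with the resulting sequence of answers. This is polynomial-time since $\Simulator_0$ is PPT, $f$ is polynomial-time, and $\MalVerifier$ runs in polynomial time.

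For correctness, observe that there is a deterministic map $\Phi$ taking any view of $\MalVerifier_0$ to the corresponding view of $\MalVerifier$: $\Phi$ simply replays $\MalVerifier$ with the embedded coins and answers each of its queries via $f$ on the stored $\Proof_0$-answers. By construction, $\Simulator_{\MalVerifier}(\Instance) = \Phi(\Simulator_0(\Instance))$, and in the real world $\Phi\bigl(\View_{\MalVerifier_0,\ZKProver}(\Instance,\Witness)\bigr) = \View_{\MalVerifier, \Prover'}(\Instance,\Witness)$, where $\Prover'(\Instance,\Witness)$ denotes the prover that samples $\Proof_0 \gets \ZKProver(\Instance,\Witness)$ and outputs the oracle $q \mapsto f^{\Proof_0}(q)$. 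The $\ell$-local-computability hypothesis is exactly the statement that $\Prover'(\Instance,\Witness)$ and $\Prover(\Instance,\Witness)$ are identically distributed, so $\View_{\MalVerifier,\Prover'} \equiv \View_{\MalVerifier,\Prover}$. Chaining these identities via $\Simulator_0(\Instance) \equiv \View_{\MalVerifier_0,\ZKProver}(\Instance,\Witness)$ and applying the deterministic $\Phi$ to both sides yields $\Simulator_{\MalVerifier}(\Instance) \equiv \View_{\MalVerifier, \Prover}(\Instance, \Witness)$, as required.

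The only genuinely delicate point is the coupling: we must ensure the coins used in the re-execution of $\MalVerifier$ inside $\Simulator_{\MalVerifier}$ are exactly those that generated the $\Proof_0$-answers in $v_0$. This is guaranteed by having $\MalVerifier_0$ use $\MalVerifier$'s coins as a prefix (or sub-function) of its own tape and the determinism of $f$; no other subtleties arise since $f$ is stateless and $\MalVerifier$'s adaptive behaviour is reproduced verbatim.
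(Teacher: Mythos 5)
Your proposal is correct and matches the paper's proof in all essentials: your $\MalVerifier_0$ is exactly the paper's hybrid adversary $\Adversary_{\MalVerifier}$ (\cref{construc:composed-adversary}), your deterministic replay map $\Phi$ applied to $\Simulator_0$ is exactly the paper's simulator $\Simulator_{\MalVerifier}$ (\cref{construc:full-sim}), and the two-step argument (local computability makes the wrapped view identical to the real view; PZK of $\ZKProver$ lets you simulate the wrapper's view at cost $\ell$ per outer query) is the same. The only cosmetic difference is that you explicitly flag the coin-coupling and consistency-of-replay issue, which the paper handles implicitly by having the simulator re-run $\Adversary_{\MalVerifier}$ against the sampled oracle table.
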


While we state and prove \cref{lem:locally-computable-implies-zk} for the case of PCPs, the analysis straightforwardly extends to the case of PCPs of proximity.

The intuition behind \cref{lem:locally-computable-implies-zk} is that if $\ZKProof \gets \ZKProver$ is zero knowledge, then since any local view of $\ZKProof$ is efficiently simulatable and $\Proof\gets \Prover$ is computable from a local view of $\ZKProof$, we can combine these two functionalities to obtain a simulator for $\Prover$. More formally, let $\MalVerifier$ be a (possibly adaptive) malicious verifier of $(\Prover, \Verifier)$. Our goal is to construct a simulator for the view of $\MalVerifier$ and our strategy for doing so consists of two parts. In \cref{construc:composed-adversary}, we construct a hybrid simulator $\Adversary_{\MalVerifier}$ for $\MalVerifier$, which is given oracle access to the (zero-knowledge) proof $\ZKProof \gets \ZKProver$. This hybrid simulator uses the local computability of $\Proof \gets \Prover$ and its oracle access to $\ZKProof$ to simulate $\MalVerifier$'s view of $\Proof$. \cref{construc:full-sim} is our simulator for $\MalVerifier$. As $\Adversary_{\MalVerifier}$ is itself a malicious verifier of the proof $\ZKProof$, we can invoke the zero-knowledge property of $\ZKProof$ to simulate the view of $\Adversary_{\MalVerifier}$. Our simulator for $\MalVerifier$ works by running $\Adversary_{\MalVerifier}$ and answering its queries to $\ZKProof$ using the simulator for $\ZKProof$.

\begin{mdframed}[nobreak=true]
    \begin{construction} 
    \label{construc:composed-adversary} 
    A hybrid simulator for any malicious verifier $\MalVerifier$ of the PCP $(\Prover, \Verifier)$, a PCP which is $\ProverLocality$-locally computable from $(\ZKProver,\ZKVerifier)$ by a function $\LocalComp$. Receives oracle access to $\ZKProof\gets\ZKProver$.
    
    \noindent
        $\Adversary_{\MalVerifier}^{\ZKProof}(x, \Rand)$:
        \begin{enumerate}[nolistsep]
            \item Initialise an empty list $\OracleTable$.
            \item Run $\MalVerifier$ on random coins $\Rand$. Every time $\MalVerifier$ makes a query $\Query$ to $\Proof$, compute $\beta \eqdef f^{\ZKProof}(\Query)$, add $(\Query, \beta)$ to $\OracleTable$ and feed $\beta$ to $\MalVerifier$ as the response from $\Proof$.
            \item Output $(r, \OracleTable)$.
        \end{enumerate}
    \end{construction}
\end{mdframed}

\begin{mdframed}[nobreak=true]
    \begin{construction}
        \label{construc:full-sim}
        A perfect zero-knowledge simulator for a PCP system $(\Prover,\Verifier)$ which is $\ProverLocality$-locally computable from a PZK-PCP $(\ZKProver, \ZKVerifier)$. Let $\ZKSimulator_{\MalVerifier}$ denote the simulator for $(\ZKProver, \ZKVerifier)$ for a malicious verifier $\MalVerifier$. 
        
        \noindent
        $\Simulator_{\MalVerifier}(x)$:
        \begin{enumerate}[nolistsep]
            \item Run $\ZKSimulator_{\Adversary_{\MalVerifier}}(x)$ (that is, simulate the view of  $\Adversary_{\MalVerifier}$ (\cref{construc:composed-adversary})) to obtain a query-answer set $\OracleTable_0$ for $\ZKProof$ and random coins $\Rand_0$.
            \item \label{step:run-composed-adv} Run $\Adversary_{\MalVerifier}(x, \Rand)$ (\cref{construc:composed-adversary}) using $\OracleTable_0$ to answer its queries to $\ZKProof$ to obtain $(r,\OracleTable)$.
            \item Output $(r, \OracleTable)$.
        \end{enumerate}
    \end{construction}
\end{mdframed}

\begin{proof}[Proof of \cref{lem:locally-computable-implies-zk}]
    We show that \cref{construc:full-sim} is a perfect zero-knowledge simulator for $(\Prover, \Verifier)$. We first analyse \cref{construc:composed-adversary}, and show that its output is identically distributed to $\View_{\MalVerifier, \Prover}$. Then we analyse \cref{construc:full-sim}, and show that provided $\Adversary_{\MalVerifier}$ does not break the query bound on $\ZKProof\gets\ZKProver$, we can simulate its view of $\ZKProof$. Combining these two facts yields the result. 
    
    Let $\MalVerifier$ be an arbitrary (possibly adaptive) malicious verifier of $(\Prover, \Verifier)$, and let $\Proof$ and $\ZKProof$ be random variables denoting the output of $\Prover$ and $\ZKProver$ respectively. Consider the hybrid simulator $\Adversary_{\MalVerifier}^{\ZKProof}$ defined in \cref{construc:composed-adversary}. As $\Prover$ is $\ProverLocality$-locally computable by $\LocalComp$, we have that $\Proof$ is identically distributed to $(\LocalComp^{\ZKProof}(\Query))_{\Query\in\Domain(\Proof)}$. Thus the output of \cref{construc:composed-adversary} is distributed identically to $\View_{\MalVerifier, \Prover}$.

    Now we analyse \cref{construc:full-sim}. As $(\ZKProver, \ZKVerifier)$ is a PZK-PCP, and $\Adversary_{\MalVerifier}$ is a verifier of $\ZKProof \gets \ZKProver$, there exists a simulator $\ZKSimulator_{\Adversary_{\MalVerifier}}$ for $\Adversary_{\MalVerifier}^{\ZKProof}$, whose output is identically distributed to $\View_{\Adversary_{\MalVerifier}, \ZKProver}(x,w)$ provided that $\Adversary_{\MalVerifier}$ makes at most $\QueryBound$ queries to $\ZKProof$. Note that for every query made by $\MalVerifier$, $\Adversary_{\MalVerifier}$ makes at most $\ProverLocality$ queries to $\ZKProof$. Thus, the output of $\ZKSimulator_{\Adversary_{\MalVerifier}}$ is identically distributed to $\View_{\Adversary_{\MalVerifier}, \ZKProver}(x,w)$ provided that $\MalVerifier$ makes at most $\QueryBound/\ProverLocality$ queries. Therefore, as $\Adversary_{\MalVerifier}(x)$ is receiving answers to its queries to $\ZKProof$ identically distributed to the real proof, by the first part, its output is identically distributed to $\View_{\MalVerifier, \Prover}$.
\end{proof}

\subsection{Alphabet reduction of PZK-PCPs}
\label{sec:alphabet-reduction}
In this section, we use the notion of $\ProverLocality$-locally computable proofs to show that standard alphabet reduction techniques preserve zero knowledge. In particular, we will show this for the alphabet reduction construction in \cite[Lemma 2.13]{BenSassonGHSV06}. For the sake of clarity, we include a self-contained presentation of their construction below (\cref{construc:alphabet-reduction-pcp}). 

\begin{lemma}[Zero-knowledge alphabet reduction]
\label{lem:alphabet-reduction}
    If a language $\Language$ has a PZK-PCP over the alphabet $\Bits^\AnswerLength$, query complexity $\NumQueries$, randomness complexity $\Rand$, query bound $\QueryBound$, decision complexity $\DecComplexity$ and robust-soundness error $\SoundErr$ with robustness parameter $\Robustness$, then $\Language$ has a boolean PZK-PCP with query complexity $O(\AnswerLength\cdot \NumQueries)$, randomness complexity $\Rand$, query bound $\QueryBound$, decision complexity $\DecComplexity+O(\AnswerLength\cdot\NumQueries)$ and robust-soundness error $\SoundErr$ with robustness parameter $\Omega(\Robustness)$. 
\end{lemma}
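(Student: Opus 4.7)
The plan is to instantiate the standard PCP alphabet reduction of \cite{BenSassonGHSV06} (their Lemma 2.13) and then to transfer zero knowledge to the reduced construction by applying \cref{lem:locally-computable-implies-zk}. Concretely, I would fix a systematic binary error-correcting code $\Ecc \colon \Bits^\AnswerLength \to \Bits^\BlockLength$ of constant rate and constant relative distance $\delta_{\Ecc} > 0$, so that $\BlockLength = O(\AnswerLength)$. Define the boolean prover $\Prover$ to first sample $\ZKProof \gets \ZKProver(\Instance,\Witness)$ and then output the proof $\tau \colon \Domain(\ZKProof) \times [\BlockLength] \to \Bits$ given by $\tau(\alpha, i) \eqdef \Ecc(\ZKProof(\alpha))_i$. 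The boolean verifier $\Verifier$ reuses the randomness of $\ZKVerifier$, reads all $\BlockLength$ bits of $\tau$ at each of the $\NumQueries$ symbol-locations $\VQueryAlg(\Instance;\Randomness)$, locally decodes each $\BlockLength$-bit block, and then runs the original decision predicate on the result.

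The standard (non-ZK) parameters follow from the \cite{BenSassonGHSV06} analysis. Completeness is immediate from the systematic structure of $\Ecc$; the randomness complexity is unchanged; the total query complexity is $\NumQueries\cdot\BlockLength = O(\AnswerLength\cdot \NumQueries)$; and the decision complexity grows by $O(\AnswerLength \cdot \NumQueries)$ when the code is chosen to admit linear-time decoding. For robustness, I would take the set of accepting boolean views to consist of codeword-concatenations $(\Ecc(v_1), \ldots, \Ecc(v_\NumQueries))$ indexed by $(v_1,\ldots,v_\NumQueries) \in \VAcc(\ZKVerifier(\Instance;\Randomness))$. A boolean view at relative Hamming distance $\gamma$ from this set corresponds, by the distance of $\Ecc$, to a symbol-view at relative distance at most $\gamma/\delta_{\Ecc}$ from $\VAcc$; so robustness $\Robustness$ of the original PCP yields robustness $\Omega(\Robustness)$ of the reduced PCP at the same robust-soundness error $\SoundErr$.

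For the zero-knowledge part, I would apply \cref{lem:locally-computable-implies-zk} with $\ProverLocality = 1$. The key observation is that $\Prover$ is $1$-locally computable from $\ZKProver$ via the oracle algorithm
\[
    \LocalComp^{\ZKProof}(\alpha,i) \eqdef \Ecc(\ZKProof(\alpha))_i,
\]
which makes exactly one query to its oracle (namely $\alpha$) and then performs a local deterministic computation (reading the $i$-th bit of the encoding). Since $\ZKProver$ is PZK with query bound $\QueryBound$, \cref{lem:locally-computable-implies-zk} yields that $\Prover$ is PZK with query bound $\QueryBound/1 = \QueryBound$, as claimed.

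The only real subtlety, and the main place where care is needed, lies in the robust-soundness reduction: one should measure Hamming distance directly against the encoded accepting-view set rather than decoding first and measuring against $\VAcc$, because an adversarial boolean proof need not lie near any codeword. Once the accepting set is formulated this way, the distance analysis is routine, and combining it with the zero-knowledge transfer above yields the lemma.
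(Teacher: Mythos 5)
Your zero-knowledge transfer via \cref{lem:locally-computable-implies-zk} with $\ProverLocality = 1$ matches the paper's argument exactly, and is correct. The gap is in the construction itself and the robustness analysis. The paper (following \cite{BenSassonGHSV06}'s \cref{construc:alphabet-reduction-pcp}) outputs the proof $(\NonBooProof, \EccProof)$, i.e., it keeps the \emph{raw} symbol proof alongside the encoded one; the verifier reads the raw symbols, runs the decision circuit on them, \emph{re-encodes} them, and cross-checks the re-encoding against $\EccProof$. Your construction outputs only the encoded proof $\tau$, and your verifier \emph{decodes} each block and runs the decision predicate on the decoded symbols.

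The trouble is that the accepting set of a PCP verifier is, by definition (as in the paper's ``accepting view'' definition), the set of local views on which the decision algorithm accepts. With a decode-then-check decision algorithm, that set is the decoding preimage of $\VAcc(\ZKVerifier)$: every tuple $(\beta_1,\ldots,\beta_\NumQueries)$ of blocks such that each $\beta_j$ decodes to some $v_j$ with $(v_1,\ldots,v_\NumQueries)\in\VAcc$. You cannot simply decree that distance is measured against the smaller set $\{(\Ecc(v_1),\ldots,\Ecc(v_\NumQueries)) : (v)\in\VAcc\}$; that is a different (smaller) accepting set than your decision algorithm actually has, and the robust-soundness condition is stated with respect to the true accepting set. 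Concretely, a cheating prover can place each block $\beta_j$ at the boundary of a decoding region (e.g., at relative distance nearly $\delta_{\Ecc}/2$ from a codeword), so that flipping a tiny fraction of bits changes which symbol it decodes to; then even if the decoded symbol view is $\Robustness$-far from $\VAcc$, the boolean view can be $o(1)$-close to a view that decodes to an accepting tuple, and your construction is not robust. The fix is to make your decision algorithm additionally reject whenever some block $\beta_j$ is not an exact codeword (equivalently, reject unless $\Ecc(\mathsf{Dec}(\beta_j)) = \beta_j$). With that check, the accepting set is exactly the codeword-concatenation set you wanted, and your distance argument --- a factor $\delta_{\Ecc}/2$ loss via unique decoding and the triangle inequality --- goes through, yielding robustness $\Omega(\Robustness)$. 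Since $\Ecc$ is systematic, this codeword check is essentially the cross-check performed in \cite{BenSassonGHSV06}'s construction (the first $\AnswerLength$ bits of each block play the role of $\NonBooProof|_Q$), so after the fix your construction and the paper's coincide up to cosmetic repackaging.
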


Following \cite{BenSassonGHSV06}, the construction relies on the existence of good error-correcting codes, which have constant relative distance and rate. To obtain the desired decision complexity, a code which is also computable by linear-sized circuits is used, such as \cite{Spielman96}.

\newcommand{\NonBooQueryAlg}{Q_a}
\newcommand{\NonBooDecAlg}{D_a}

\begin{mdframed}[nobreak=true]
    \begin{construction}
        \label{construc:alphabet-reduction-pcp}
        A boolean PZK-PCP for the language $\Language$, given a PZK-PCP $(\NonBooProver, \NonBooVerifier)$ over the alphabet $\Bits^\AnswerLength$ for $\Language$, where the verifier $\NonBooVerifier$ has decision algorithm $\NonBooDecAlg$ and query algorithm $\NonBooQueryAlg$. Let $\Ecc\colon \Bits^\AnswerLength\to\Bits^b$ where $b = O(\AnswerLength)$ be a systematic binary error-correcting code of constant relative minimum distance, computable by an explicit circuit of size $O(\AnswerLength)$.
        
        \noindent\textbf{Proof:}
        \begin{enumerate}[nolistsep]
            \item Run $\NonBooProver(x)$ to obtain a proof $\NonBooProof$.
            
            \item Define the proof oracle $\EccProof$, by $\EccProof(\QueryLoc) \eqdef \Ecc(\NonBooProof(\QueryLoc))$, for all $\QueryLoc\in \Domain(\NonBooProof)$.
            \item Output $(\NonBooProof, \EccProof$). 
        \end{enumerate}
        \vspace{0.5cm}
        \noindent\textbf{Verifier:}
        \begin{enumerate}[nolistsep] 
            \item Compute the query set $Q \eqdef \NonBooQueryAlg(x) \subseteq \Domain(\NonBooProof)$.
            \item Run $\NonBooDecAlg(x, \NonBooProof|_{Q})$ and reject if it rejects.
            \item For each $\gamma \in Q$, compute $\Ecc(\NonBooProof(\gamma))$, query $\EccProof(\gamma)$ and reject if $\Ecc(\NonBooProof(\gamma)) \neq \EccProof(\gamma)$.
            \item If all of the above checks pass, then accept.
        \end{enumerate}
    \end{construction}
\end{mdframed}

\begin{proof}[Proof of \cref{lem:alphabet-reduction}]
    As \cref{construc:alphabet-reduction-pcp} is entirely unchanged from \cite{BenSassonGHSV06}, a proof of completeness, soundness and all of the parameters (apart from the query bound) can be found in Lemma 2.13 of \cite{BenSassonGHSV06}. Here we argue zero knowledge.

    We show zero knowledge by demonstrating that $\Prover$, as defined in \cref{construc:alphabet-reduction-pcp}, is $1$-locally computable from the PZK-PCP $\NonBooProver$, which has query bound $\QueryBound$. Hence, by \cref{lem:locally-computable-implies-zk}, the PCP $\Prover$ inherits its zero-knowledge property from $\NonBooProver$. Towards this end, observe that $\Prover$ is $1$-locally computable by the function $\LocalComp\colon \Domain(\Proof) \to \Bits^b$, defined as
    \begin{equation*}
        f^{\NonBooProof}(\Oracle, \Query, i) = \begin{cases}
            \NonBooProof(\Query) ~&\text{if}~\Oracle = \NonBooProof
            \\
            \Ecc(\NonBooProof(\Query))_i ~&\text{if}~\Oracle = \EccProof
        \end{cases}
        .
    \end{equation*}
    Thus, by \cref{lem:locally-computable-implies-zk}, \cref{construc:alphabet-reduction-pcp} is perfect zero knowledge, with query bound $\QueryBound$.
\end{proof}

\subsection{Proof composition}
\label{sec:proof-comp}
We extend the proof composition theorem in \cite[Theorem 2.7]{BenSassonGHSV06} to account for zero knowledge. Recall that proof composition is performed with a robust outer PCP and an inner PCP of proximity. We show that if proof composition is carried out with a robust outer PCP that is zero knowledge, then the composed PCP will be zero knowledge. Crucially, the inner PCP of proximity is \emph{not} required to be zero knowledge. 

\begin{theorem}[Zero-knowledge proof composition]
\label{thm:zk-proof-comp}
    Suppose that for functions $\Rout, \Rin, \Dout, \Din, \Qin, \QueryBound\colon \N\to \N$ and $\Eout, \Ein,$ $\RobOut, \ProxIn \colon \N \to [0,1]$ the following hold:
    \begin{itemize}
        \item Language $\Language$ has a robust boolean PZK-PCP $(\ProverOut,\VerifierOut)$ with randomness complexity $\Rout$, query complexity $\Qout$, decision complexity $\Dout$, robust-soundness error $1 - \Eout$, robustness parameter $\RobOut$ and zero-knowledge query bound $\QueryBound$.
        \item $\CircVal$ has a boolean PCPP $(\ProverIn, \VerifierIn)$ where $\ProverIn$ is deterministic and efficient, with randomness complexity $\Rin$, query complexity $\Qin$, decision complexity $\Din$, proximity parameter $\ProxIn$, and soundness error $1-\Ein$.
        \item $\ProxIn(\Dout(n)) \leq \RobOut(n)$ for every $n$.
    \end{itemize}
    Then $\Language$ has a PZK-PCP, denoted $(\ProverComp,\VerifierComp)$ (cf. \cref{construc:comp-pcp}) with:
    \begin{itemize}
        \item randomness complexity $\Rout(n) + \Rin(\Dout(n))$;
        \item query complexity $\Qin(\Dout(n))$;
        \item decision complexity $\Din(\Dout(n))$;
        \item soundness error $1 - \Eout(n)\cdot \Ein(\Dout(n))$; and
        \item query bound $\QueryBound/\Qout$;
    \end{itemize}
\end{theorem}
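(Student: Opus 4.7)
The plan is to reduce everything except the zero-knowledge clause to the standard (non-ZK) composition theorem of \cite{BenSassonGHSV06}, and then obtain perfect zero knowledge via \cref{lem:locally-computable-implies-zk}. Concretely, completeness, robust soundness with error $1-\Eout\cdot\Ein$, and the stated randomness/query/decision complexities follow verbatim from \cite[Theorem 2.7]{BenSassonGHSV06}, using the hypothesis $\ProxIn(\Dout(n))\le \RobOut(n)$ to glue the robust outer PCP to the inner PCPP for $\CircVal$. These parts are routine; I would simply cite them.

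The new content is the zero-knowledge clause with query bound $\QueryBound/\Qout$. For this I would argue that $\ProverComp$ is $\Qout$-locally computable from $\ProverOut$ on $\Language$, in the sense of \cref{def:loc-proofs}. Recall from the intro discussion that the composed proof has the form $(\ProofOut,(\Proof_r)_{r\in\Bits^{\Rout}})$, where $\ProofOut\gets\ProverOut(\Instance,\Witness)$ and $\Proof_r\eqdef\ProverIn(\VerifierOut,\ProofOut|_{Q(r)})$ for the outer verifier's query set $Q(r)$; since $\ProverIn$ is deterministic (by hypothesis), each $\Proof_r$ is a deterministic function of the at most $\Qout$ symbols $\ProofOut|_{Q(r)}$.

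I would then exhibit the local computation function $f$ explicitly: on a composed-proof index of the form $(\ProofOut,\QueryLoc)$, $f$ makes the single oracle query $\ProofOut(\QueryLoc)$ and returns it; on an index of the form $(\Proof_r,i)$, $f$ queries $\ProofOut$ at every location in $Q(r)$ (at most $\Qout$ queries), runs $\ProverIn$ on input $(\VerifierOut,\ProofOut|_{Q(r)})$, and outputs the $i$-th symbol of the result. Because $\ProverIn$ is efficient and deterministic, $f$ is an efficient deterministic oracle algorithm making at most $\Qout$ queries, and by construction the distribution $(f^{\ProofOut}(\alpha))_{\alpha\in\Domain(\ProofComp)}$ with $\ProofOut\gets\ProverOut$ is identical to $\ProverComp(\Instance,\Witness)$. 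Hence $\ProverComp$ is $\Qout$-locally computable from $\ProverOut$, and \cref{lem:locally-computable-implies-zk} applied with $(\ZKProver,\ZKVerifier)=(\ProverOut,\VerifierOut)$ delivers perfect zero knowledge for $(\ProverComp,\VerifierComp)$ with query bound $\QueryBound/\Qout$, completing the proof.

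I do not expect real obstacles here: the main subtlety is just bookkeeping the index structure of the composed proof so that the single function $f$ uniformly handles both the ``outer copy'' portion (a trivial one-query identity) and each ``inner block'' $\Proof_r$ (at most $\Qout$ queries and a call to $\ProverIn$). The determinism and efficiency of $\ProverIn$ in the hypotheses are exactly what make $f$ meet the requirements of \cref{def:loc-proofs}; without determinism of $\ProverIn$ one would need a strengthening of \cref{lem:locally-computable-implies-zk} to randomized local computations, which is not needed here.
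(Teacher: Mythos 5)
Your proposal matches the paper's proof essentially verbatim: the paper also delegates completeness, soundness, and the complexity parameters to the BGHSV composition theorem, then establishes zero knowledge by showing (\cref{claim:comp-pcp-locality}) that $\ProverComp$ is $\Qout$-locally computable from $\ProverOut$ and invoking \cref{lem:locally-computable-implies-zk}. The explicit description of the local computation function $f$ you give is exactly what is implicit in the paper's \cref{claim:comp-pcp-locality} and its introductory discussion.
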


\cref{construc:comp-pcp} is entirely unchanged from \cite{BenSassonGHSV06} (apart from the stipulation that the outer PCP be zero knowledge), and is reproduced  here for completeness.

\newcommand{\QueryAlgOut}{Q_{\text{out}}}
\newcommand{\QueryAlgIn}{Q_{\text{in}}}
\newcommand{\CktOut}{C_{\text{out}}}

\begin{mdframed}[nobreak=true]
    \begin{construction}[Composed PCP]
        \label{construc:comp-pcp}
        A composed PZK-PCP $(\ProverComp, \VerifierComp)$ for the language $\Language$, given a robust boolean outer PZK-PCP $(\ProverOut, \VerifierOut)$ for $\Language$ (with randomness complexity $\Rout$ and query complexity $\Qout$) and an inner boolean PCPP $(\ProverIn, \VerifierIn)$ for $\CircVal$ (with query complexity $\Qin$). Let $\DecisionCktOut$ and $\QueryAlgOut$  denote the decision and query algorithms for $\VerifierOut$ respectively. Both parties receive common input $x \in \Bits^n$.
        
        \noindent\textbf{Proof:}
        \begin{enumerate}[nolistsep]
            \item Run $\ProverOut(x)$ to obtain a proof $\ProofOut$.
            
            \item \label{step:comp-prover-2} For each choice of randomness $\Rand \in \Bits^{\Rout}$, compute the query set $\QuerySetOutR{\Rand} \eqdef \QueryAlgOut(x;\Rand)$.
            
            \item Compile $\DecisionCktOut$ into a circuit $\CktOut\colon \Bits^n \times \Bits^{\QueryOut} \times \Bits^{\Rout} \to \Bits$, and for each $\Rand \in \Bits^{\Rout}$, define the restricted circuit $C_{\text{out}, \Rand} \eqdef \CktOut(x,\cdot, \Rand)$.
            
            \item \label{step:comp-prover-3} For each $\Rand\in\Bits^{\Rout}$, run $\ProverIn(C_{\text{out},\Rand}, \ProofOut|_{\QuerySetOutR{\Rand}})$ to obtain a proof $\Proof_\Rand$.
            
            \item Output $\Proof \eqdef (\ProofOut, (\Proof_\Rand)_{\Rand\in\Bits^{\Rout}})$.
        \end{enumerate}
        \vspace{0.5cm}
        \noindent\textbf{Verifier:} 
        \begin{enumerate}[nolistsep]
            \item Compile $\DecisionCktOut$ into a circuit $\CktOut\colon \Bits^n \times \Bits^{\QueryOut} \times \Bits^{\Rout}\to\Bits$.

            \item Sample $\Rand\gets\Bits^{\Rout}$, and define the restricted circuit $C_{\text{out}, \Rand} \eqdef \CktOut(x,\cdot, \Rand)$.
            
            \item Run $\VerifierIn^{\ProofOut, \Proof_r}(C_{\text{out}, \Rand})$, treating $\ProofOut$ as the input oracle and $\Proof_\Rand$ as the proof oracle. Accept if and only if $\VerifierIn$ accepts.
            
        \end{enumerate}
    \end{construction}
\end{mdframed}

In order to show that proof composition preserves zero-knowledge, we will show that $\ProverComp$ (defined in \cref{construc:comp-pcp}) is locally computable from $\ProverOut$.

\begin{claim}
    \label{claim:comp-pcp-locality}
    Let $(\ProverComp, \VerifierComp)$ be the composed PCP system defined in \cref{construc:comp-pcp}, and let $(\ProverOut, \VerifierOut)$ be the perfect zero-knowledge outer PCP of $(\ProverComp, \VerifierComp)$. $\ProverComp$ is $\Qout$-locally computable from $\ProverOut$.
\end{claim}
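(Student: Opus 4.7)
The plan is to exhibit an explicit oracle algorithm $\LocalComp$ with the required properties. The domain of a composed proof $\Proof = (\ProofOut, (\Proof_r)_{r \in \Bits^{\Rout}})$ is partitioned into indices belonging to $\ProofOut$ and indices of the form $(r, j)$ where $r \in \Bits^{\Rout}$ and $j$ is an index into the inner proof $\Proof_r$. I will define $\LocalComp^{\ProofOut}$ on each of these two cases separately, handling the two types of indices by case analysis.

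For an index belonging to $\ProofOut$, $\LocalComp$ simply issues a single query to its oracle and returns the answer. For an index of the form $(r, j)$, $\LocalComp$ first computes the query set $\QuerySetOutR{r} = \QueryAlgOut(x; r)$ (which can be done efficiently from $x$ and $r$ alone), then issues the $\Qout$ queries in $\QuerySetOutR{r}$ to its oracle to obtain $\ProofOut|_{\QuerySetOutR{r}}$, then compiles the restricted circuit $C_{\text{out}, r}$, and finally runs the deterministic efficient prover $\ProverIn(C_{\text{out},r}, \ProofOut|_{\QuerySetOutR{r}})$ and returns the $j$-th symbol of the resulting proof. In both cases $\LocalComp$ uses at most $\Qout$ oracle queries and runs in polynomial time, since $\ProverIn$, the circuit compilation, and the outer query algorithm are all polynomial time.

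To conclude, I will verify that the two required distributions coincide. Fix $x$ and let $\ProofOut \gets \ProverOut(x)$. By construction of $\ProverComp$ in \cref{construc:comp-pcp}, the composed proof is the deterministic function of $\ProofOut$ given by $(\ProofOut, (\ProverIn(C_{\text{out},r}, \ProofOut|_{\QuerySetOutR{r}}))_{r \in \Bits^{\Rout}})$. By inspection, $\LocalComp^{\ProofOut}$, evaluated at every index of the composed proof, yields exactly this same deterministic function of $\ProofOut$. Hence the distribution of $\Proof \gets \ProverComp(x)$ is identical to the distribution of $(\LocalComp^{\ProofOut}(\cdot) \mid \ProofOut \gets \ProverOut(x))$, as required by \cref{def:loc-proofs}.

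There is essentially no obstacle here: the claim is a direct reading of the construction, with the only content being the observation that the inner prover is deterministic and takes as input only $C_{\text{out},r}$ (computable from $x$ and $r$) together with $\ProofOut|_{\QuerySetOutR{r}}$ ($\Qout$ symbols of $\ProofOut$). The main thing to be careful about is making the case analysis on index types explicit so that the locality bound $\Qout$ holds uniformly over all indices of the composed proof, rather than, say, needing to make a single query to $\ProofOut$ on outer indices and $\Qout$ on inner indices; the stated bound of $\Qout$ accommodates both.
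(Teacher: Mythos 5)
Your proposal is correct and follows exactly the paper's approach: it splits indices of the composed proof into outer-proof indices (handled with one query) and inner-proof indices $(r,j)$ (handled by recomputing $\QuerySetOutR{r}$, querying the $\Qout$ symbols of $\ProofOut$ it names, and rerunning the deterministic $\ProverIn$), observing that determinism and efficiency of $\ProverIn$ and $\QueryAlgOut$ give the claimed locality and distributional equality. The paper's own proof is just a more terse version of this same argument.
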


\begin{proof}
    Clearly queries to the outer proof are $1$-locally computable. To locally compute queries to $\Proof_{\Rand}$ for any $\Rand \in \Bits^\Rout$, run \crefrange{step:comp-prover-2}{step:comp-prover-3} of the prover in \cref{construc:comp-pcp} to obtain a proof $\Proof_\Rand$, then output $\Proof_{\Rand}(\Query)$.
    Correctness is clear. In the worst case, we make $\max_{\Rand\in \Bits^{\Rout}}|\QuerySetOutR{\Rand}| = \Qout$ many queries to $\ProofOut$. Finally, the efficiency and determinism follow from the efficiency and determinism of $\ProverIn$ and $\VerifierOut$ (which is deterministic for any fixed choice of randomness). 
\end{proof}

\begin{proof}[Proof of \cref{thm:zk-proof-comp}]
    The randomness complexity, query complexity and decision complexity follow straightforwardly from the construction. A proof that \cref{construc:comp-pcp} satisfies completeness and the prescribed soundness error can be found in \cite{BGHSV05}. The construction is perfect zero-knowledge with query bound $\QueryBound/\Qout$ by \cref{lem:locally-computable-implies-zk} and \cref{claim:comp-pcp-locality}.
\end{proof}

We finish this section by instantiating the inner PCPP for $\CircVal$ to prove a corollary of \cref{thm:zk-proof-comp}, in which the parameters of the composed ZKPCP only depend on the parameters of the outer ZKPCP. We shall need Theorem 3.3 in \cite{BGHSV05}, which we restate below, setting the parameter $t(n) = 2/\varepsilon$, while noting that the honest prover strategy is efficient.
\begin{theorem}[\cite{BGHSV05}]
\label{thm:ckt-val-pcpp}
    For any $\varepsilon > 0$, $\CircVal$ has a PCPP $(\Prover, \Verifier)$ where $\Prover$ is deterministic and efficient, with randomness complexity $\log n + O(\log^\varepsilon(n))$, query complexity $O(1/\varepsilon)$, proximity parameter $\Theta(\varepsilon)$ and soundness error $1/2$.
\end{theorem}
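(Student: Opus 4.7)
The plan is to construct the PCPP via recursive proof composition, following the classical BGHSV paradigm. First I would establish a base algebraic PCPP for $\CircVal$: arithmetise the circuit via Reed--Muller encoding and use the sumcheck protocol to obtain a PCPP with randomness $O(\log n)$, query complexity $\polylog(n)$, constant proximity, constant robust-soundness error, and an efficient deterministic honest prover (the low-degree extensions of the relevant functions can be computed in time $\Poly(n)$ from the circuit and the implicit input). A standard query-bundling step ensures that this base PCPP is not merely sound but \emph{robust}, which is essential for composition.

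Next, I would iteratively apply the (non-ZK) proof composition theorem, using the base PCPP as both outer and inner. After one composition, the query complexity drops from $\polylog(n)$ to $\polylog(\polylog(n))$, while the randomness grows only by the randomness of the inner applied to a polylog-sized decision circuit, i.e.\ by $O(\log \log n)$. Iterating this composition $k = \Theta(1/\varepsilon)$ times reduces the query complexity to $O(1/\varepsilon)$, since the $k$-fold iterated polylog eventually becomes constant, while the cumulative randomness overhead can be controlled to $O(\log^\varepsilon n)$ by balancing the depth $k$ against the per-level additive overhead. The honest prover stays efficient at every level: the composed prover runs the efficient outer prover once and then invokes the inner prover on $2^{\Rout}$ polylog-sized decision circuits, each in time $\Poly(\polylog n)$, for total time $\Poly(n)$.

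The main obstacle is parameter bookkeeping rather than a single hard step. I would need to ensure that at every level of the recursion: (i) the proximity parameter of the inner is at most the robustness of the outer divided by the outer's decision complexity, as required by the composition hypothesis; (ii) the robust-soundness error stays bounded away from $1$ so that the final soundness is $1/2$; and (iii) the telescoping sum of inner randomness terms $\sum_i \log n_i$, where $n_i$ is the decision complexity at level $i$, totals to $O(\log^\varepsilon n)$. Tuning the base PCPP's robustness (via bundling parameters) and the recursion depth against $\varepsilon$ simultaneously meets all three constraints and yields the stated parameters.
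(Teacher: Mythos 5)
The paper does not prove this theorem: it is cited directly from \cite{BGHSV05} (Theorem 3.3 there, instantiated with $t(n) = 2/\varepsilon$), with only the side remark that the honest prover in that construction is deterministic and efficient. So there is no internal proof to compare against; you are in effect sketching the \cite{BGHSV05} argument itself.

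Your sketch has the right overall shape---a robust algebraic base PCPP followed by iterated composition---but it contains a concrete error in the step that is supposed to yield constant query complexity. You claim that iterating composition $k = \Theta(1/\varepsilon)$ times brings the query complexity down to $O(1/\varepsilon)$ because ``the $k$-fold iterated polylog eventually becomes constant.'' For a \emph{constant} number of iterations this is false: $\polylog^{(k)}(n)$ is $\omega(1)$ for every fixed $k$, so composing a polylog-query inner with itself a bounded number of times cannot, by itself, reach $O(1)$ queries. The missing ingredient in \cite{BGHSV05} is a final composition, at the bottom of the recursion, with a \emph{constant-query} ``long'' PCPP (e.g., the Hadamard/quadratic-based PCPP), which has $O(1)$ query complexity but randomness polynomial in the decision-circuit size; the iterated composition exists precisely to shrink that decision circuit so that the long PCPP's randomness is affordable, and the depth $t$ is tuned to balance the cumulative per-level overhead against the cost of that bottom-level PCPP. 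Without this base-case PCPP your recursion stalls at $\polylog^{(k)}(n)$ queries, and the bookkeeping you describe for the $O((\log n)^\varepsilon)$ randomness term no longer has a mechanism to produce that bound.
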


\begin{corollary}
\label{cor:zk-proof-comp}
	$\RobustPZKPCP_{\QueryBound, \Bits}[r,q] \subseteq \PZKPCP_{\QueryBound/q, \Bits}[r+\log n, 1]$.
\end{corollary}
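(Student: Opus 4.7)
The plan is to instantiate \cref{thm:zk-proof-comp} using the given robust PZK-PCP as the outer PCP and the $\CircVal$ PCPP from \cref{thm:ckt-val-pcpp} as the inner PCPP, and then apply $O(1)$ sequential repetitions of the composed verifier to bring the soundness error down to $\tfrac{1}{2}$.

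Given $\Language \in \RobustPZKPCP_{\QueryBound, \Bits}[r, q]$, I take the corresponding robust boolean PZK-PCP $(\ProverOut, \VerifierOut)$ with randomness $\Rout = O(r)$, query complexity $\Qout = O(q)$, robust-soundness error $\tfrac{1}{2}$, robustness $\RobOut = \Omega(1)$, query bound $\QueryBound$, and decision complexity $\Dout(n) = \Poly(n)$ (since $\VerifierOut$ is PPT). I then invoke \cref{thm:ckt-val-pcpp} with a constant $\varepsilon > 0$ chosen small enough that $\ProxIn = \Theta(\varepsilon) \leq \RobOut$, producing a deterministic-prover boolean PCPP for $\CircVal$ with inner randomness $\Rin(\Dout(n)) = \log \Dout(n) + O(\log^{\varepsilon} \Dout(n)) = O(\log n)$, query complexity $\Qin = O(1)$, and soundness error $\tfrac{1}{2}$.

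All hypotheses of \cref{thm:zk-proof-comp} are then satisfied, so applying it yields a PZK-PCP with randomness complexity $\Rout + \Rin(\Dout(n)) = O(r + \log n)$, query complexity $\Qin(\Dout(n)) = O(1)$, soundness error $1 - \tfrac{1}{2}\cdot\tfrac{1}{2} = \tfrac{3}{4}$, and query bound $\QueryBound/\Qout = \Omega(\QueryBound/q)$. To meet the $\PZKPCP$ definition's requirement that the soundness error be at most $\tfrac{1}{2}$, I would finally run the composed verifier a constant number of times (say three, since $(3/4)^3 < 1/2$) sequentially on the same proof, accepting only if all runs accept.

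The amplification step preserves all the other guarantees: randomness and query complexity increase by only constant factors, and perfect zero knowledge at the same query bound is immediate, since the proof distribution (and hence the simulator) is unaffected by how many times the \emph{honest} verifier is invoked---the adversary's oracle access is independent of this. The main (mild) obstacle is purely bookkeeping: verifying that the proximity, robustness, and soundness constants line up across the two PCPs, but no new conceptual ideas are needed beyond \cref{thm:zk-proof-comp} and \cref{thm:ckt-val-pcpp}.
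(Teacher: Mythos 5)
Your proof is correct and takes the same approach as the paper: compose the given robust PZK-PCP with the $\CircVal$ PCPP of \cref{thm:ckt-val-pcpp}, choosing the inner proximity parameter so that $\ProxIn \leq \RobOut$, and apply \cref{thm:zk-proof-comp}. Your observation that the composed verifier has soundness error $3/4$ and hence needs $O(1)$ sequential repetitions to meet the soundness-error-$\tfrac{1}{2}$ requirement in the definition of a PCP is a valid (and strictly speaking necessary) addition that the paper's proof elides, and your justification that this preserves perfect zero knowledge at the same query bound is correct, since the PZK guarantee is stated with respect to arbitrary query-bounded adversaries and does not depend on the honest verifier's strategy.
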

\begin{proof}
    Let $\Language \in \RobustPZKPCP_{\QueryBound, \Bits}[r,q]$. That is, there is a PZK-PCP $(\Prover, \Verifier)$ for $\Language$ with randomness complexity $r(n)$, query complexity $q(n)$, query bound $\QueryBound(n)$ and constant robustness $\Robustness$. \cref{thm:ckt-val-pcpp} guarantees the existence of a PCPP for $\CircVal$ with proximity parameter $k\varepsilon$, for some universal constant $k$ and any $\varepsilon>0$. Set $\varepsilon \eqdef \Robustness/k$.
	Then by \cref{thm:zk-proof-comp} (note that our choice of $\varepsilon$ ensures that the third bullet of the hypothesis is met), $\Language$ has a PZK-PCP with randomness complexity $r(n) + \log(n) + O(\log^\varepsilon(n))$ and query complexity $O(1/\varepsilon)$, as required.
\end{proof}

\section{Robust PCPPs for polynomial summation}
\label{sec:rob-sumcheck}
In this section, we construct a PCP of proximity for the $\SCLang$ pair language, defined below, which achieves constant \emph{robust} soundness.

\begin{definition}
\label{def:sc-lang}
	We define the language $\SCLang$ to be the set of all pairs $((\Field,1^\NumVars,1^\Degree,\Subcube,\SumcheckTotal), F)$
    where $\Field$ is a finite field, $\NumVars,\Degree \in \N$, $\Subcube \subseteq \Field$, $\SumcheckTotal \in \Field$, and
	$F \in \ReedMuller[\Field,\NumVars,\Degree]$ with $\sum_{\vec b \in \Subcube^\NumVars} \SCPoly(\vec b) = \SumcheckTotal$.
\end{definition}

Note that $\Degree$ and $\NumVars$ are specified in unary, so that the verifier and simulator for our PCPP can run in time $\Poly(\log |\Field|, \NumVars, \Degree, |\Subcube|)$. Below, for readability, we will write explicit inputs of $\SCLang$ as $(\Field, \NumVars, \Degree, \Subcube, \SumcheckTotal)$; this should be understood as $(\Field, 1^\NumVars, 1^\Degree, \Subcube, \SumcheckTotal)$.

\begin{lemma}
\label{lem:robust-sumcheck-pcp}
    Let $\Field$ be a finite field, and let $\NumVars, \Degree \in \N$, $\Subcube \subseteq \Field$ and $\delta > 0$ be such that $\Proximity > \frac{\NumVars \Degree}{|\Field|}$ and $\Degree \geq |\Subcube|+1$. Then \cref{construc:robust-sumcheck} is a robust PCP of proximity for $\SCLang$, over the alphabet $\Field^{\NumVars-1}$, with proximity parameter $\Proximity$ and robustness parameter $\Robustness = \Omega(\Proximity)$. The verifier makes $O(|\Field|)$-many queries to $\SumcheckInput$ and $|\Field|$-many queries to $\Proof$. The proof length is $|\Field|^{\NumVars-1}$.
\end{lemma}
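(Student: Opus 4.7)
The plan is to describe the construction as a bundling of the sumcheck partial-sum polynomials into a vector-valued Reed--Muller codeword, together with a single axis-parallel line check that combines the bundled low-degree test of \cref{thm:bundled-ldt} with the sumcheck transition checks; robustness is then analysed via a two-case split on the distance of the oracle from the bundled code.

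For $i \in [\NumVars-1]$, let $g_i(x_1,\ldots,x_i) \eqdef \sum_{\vec b \in \Subcube^{\NumVars-i}} \SumcheckInput(x_1,\ldots,x_i,\vec b)$, which has total degree at most $\Degree$ since $\SumcheckInput$ does. The proof oracle $\Proof \colon \Field^{\NumVars-1} \to \Field^{\NumVars-1}$ bundles these polynomials by $\Proof(\vec r)_i \eqdef g_i(r_1,\ldots,r_i)$, so the honest proof lies in $\ReedMuller^{\NumVars-1}[\Field,\NumVars-1,\Degree]$, and the proof length is $|\Field|^{\NumVars-1}$ as claimed. The verifier samples a uniformly random axis-parallel line $L \subseteq \Field^{\NumVars-1}$, reads $\Proof$ on all $|\Field|$ points of $L$, and reads $\SumcheckInput$ along the matching axis-parallel line in $\Field^\NumVars$ obtained by extending $L$ in the $\NumVars$-th coordinate. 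It then runs three checks: (i) the bundled low-degree test of \cref{thm:bundled-ldt} on $\Proof|_L$; (ii) the sumcheck transition associated with the direction of $L$ (including the base case $\sum_{b \in \Subcube} g_1(b) = \SumcheckTotal$ when $L$ is in the first coordinate); and (iii) the final transition $\sum_{b\in\Subcube} \SumcheckInput(r_1,\ldots,r_{\NumVars-1},b) = g_{\NumVars-1}(r_1,\ldots,r_{\NumVars-1})$ using the matched line of $\SumcheckInput$ and the last bundled coordinate of $\Proof$. Completeness is immediate since every check holds identically for the honest proof.

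For robust soundness, suppose the input is not in $\SCLang$ and fix any malicious $\MalProof$. \emph{Case 1:} if $\MalProof$ is $c\Proximity$-far from $\ReedMuller^{\NumVars-1}[\Field,\NumVars-1,\Degree]$ for a suitably small absolute constant $c$, then \cref{thm:bundled-ldt} gives that the expected distance of $\MalProof|_L$ from any accepting view of the low-degree tester is $\Omega(\Proximity)$, which directly gives expected distance $\Omega(\Proximity)$ from any accepting view of the overall verifier. \emph{Case 2:} otherwise $\MalProof$ is close to a unique codeword $\hat\Proof$ whose unbundling yields implicit polynomials $\hat g_1,\ldots,\hat g_{\NumVars-1}$. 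Since the sumcheck claim is false, at least one of the polynomial identities $\sum_{b \in \Subcube} \hat g_i(\cdot,b) = \hat g_{i-1}(\cdot)$ (or the final identity involving $\SumcheckInput$) must fail identically, and by Schwartz--Zippel together with the hypothesis $\Proximity > \NumVars\Degree/|\Field|$ a uniformly random point of $\Field^{\NumVars-1}$ witnesses this failure with probability at least $1 - \Proximity$. I would then argue that this polynomial-level failure forces the line view to disagree with any accepting view on an $\Omega(1)$-fraction of its $|\Field|$ coordinates, giving expected robustness $\Omega(\Proximity)$. Applying \cref{prop:exp-rob-to-rob} then converts expected robustness to worst-case robustness $\Robustness = \Omega(\Proximity)$.

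The main obstacle is Case 2: a priori the verifier only explicitly checks the sumcheck transition in the direction of the sampled line, which would naively lose a factor of $\NumVars-1$ when the failing transition and the sampled direction are misaligned. The plan is to avoid this loss by leveraging the nested structure of the bundling: a transition failure at level $i$ in the decoded $\hat\Proof$ propagates through every bundled coordinate $j \geq i$, so at a Schwartz--Zippel-generic point the disagreement between $\hat\Proof$ and the ``corrected'' proof is visible on an axis-parallel line in \emph{every} direction. A secondary technical point is transferring this robustness from $\hat\Proof$ back to the oracle $\MalProof$, which I would handle by a triangle inequality with $c$ chosen small enough that Case 1 absorbs the additive slack.
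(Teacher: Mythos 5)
Your proposal diverges from the paper's construction in a way that creates a genuine gap, which your own ``main obstacle'' paragraph correctly identifies but then fails to resolve.

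The crucial difference is in the bundling. You define $\Proof(\vec r)_i = g_i(r_1,\ldots,r_i)$, so that along a line in direction $j$ only the coordinates $i \geq j$ of $\Proof$ vary, and coordinate $i$ gives access to $g_i$ as a univariate function only when $j = i$. This is why you are forced to sample a uniformly random direction and check only the single transition aligned with it. The paper instead bundles as $\Proof(r_1,\ldots,r_{\NumVars-1})_i = g_i(r_1,\ldots,r_{i-1},\,r_{\NumVars-1})$, i.e., the \emph{last} coordinate of the proof index is fed to \emph{every} $g_i$ as its last argument. The verifier then always queries the line in the last-coordinate direction, $\{(c_1,\ldots,c_{\NumVars-2},\alpha)\}_{\alpha\in\Field}$, and from this single line obtains the full univariate restriction $g_i(c_1,\ldots,c_{i-1},\cdot)$ of \emph{every} $g_i$ simultaneously. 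This lets the verifier (a) run a per-coordinate univariate degree check (no multivariate LDT on $\Proof$ is needed) and (b) check \emph{all} sumcheck transitions $\sum_b g_{i+1}(c_1,\ldots,c_i,b) = g_i(c_1,\ldots,c_i)$, the base case, and the final transition on $F$, in a single shot. In the robustness analysis, once the LDT guarantees $F$ is close to some $\hat F$, the event $E$ that some transition fails for the decoded univariate polynomials holds with probability $\geq 1 - \NumVars\Degree/|\Field|$, and conditioned on $E$ the verifier can only be pushed to accept by replacing some $g_i(c_1,\ldots,c_{i-1},\cdot)$ (or $F|_{\vec c}$) by a \emph{different} degree-$\Degree$ univariate polynomial, which forces disagreement on a $(1-\Degree/|\Field|-\RMDist)$ fraction of the $|\Field|$ queried points.

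Your attempted workaround for the factor-$\NumVars$ loss does not hold up. You claim that a transition failure at level $i$ in the decoded $\hat\Proof$ ``propagates through every bundled coordinate $j \geq i$'' so that it is ``visible on an axis-parallel line in every direction.'' But the decoded $\hat g_1,\ldots,\hat g_{\NumVars-1}$ are fixed low-degree polynomials, and a failure of the identity $\sum_b \hat g_{i+1}(\cdot,b) = \hat g_i(\cdot)$ is a property of those two polynomials alone; it is not reflected in the values of $\hat g_j$ for $j \neq i, i+1$, and in particular the transition check you perform in direction $j \neq i+1$ involves entirely different polynomials and passes whenever the corresponding identity happens to hold. Comparing the decoded proof to a hypothetical ``corrected'' codeword is also not what your verifier does: the verifier runs a transition test, not a codeword-consistency test, so the large Hamming distance between two distinct codewords is irrelevant. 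Consequently, for a uniformly random direction the verifier hits the failing transition with probability only $O(1/\NumVars)$, and your Case~2 gives expected robustness $\Omega(\Proximity/\NumVars)$, not $\Omega(\Proximity)$. To close the gap you would need the paper's bundling (or some other device letting a single query pattern test all transitions at once), not the ``propagation'' heuristic.

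Two smaller issues in your construction description: when the sampled line in $\Field^{\NumVars-1}$ is not in the first coordinate, you cannot check the base case $\sum_b g_1(b) = \gamma$ at all, since $g_1(c_1)$ is then constant along the line; and when the line is not in the last coordinate, the ``matching axis-parallel line in $\Field^\NumVars$'' you use for the final transition check is a different line from the one parallel to $L$, so the query pattern needs to be stated more carefully.
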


While the robust PCP of \cite{BenSassonGHSV06} uses (a variant of) sumcheck, to the best of our knowledge \cref{lem:robust-sumcheck-pcp} has not previously been shown explicitly.
Our proof here is substantially simpler than that of \cite{BenSassonGHSV06}, in part because the structure of our PCP permits a simpler ``bundling'' of symbols. We will build on this construction when constructing robust PZK-PCPPs for $\SCLang$, in \cref{sec:robust-pzk-pcp}.

\begin{mdframed}[nobreak=true]
    \begin{construction}
    \label{construc:robust-sumcheck}
        A robust PCP of proximity $\RobSC$ for $\SCLang$ with proximity parameter $\Proximity >0$ over the alphabet $\Field^{\NumVars-1}$. Both parties receive the explicit input $(\Field, \NumVars, \Degree, \Subcube, \SumcheckTotal, \Proximity)$, and oracle access to the evaluation table of the implicit input $\SumcheckInput\colon \Field^\NumVars \to \Field$.

        \noindent\textbf{Proof:}
        \begin{enumerate}[nolistsep]
            \item For each $i \in [\NumVars-1]$, define the $i$-th sumcheck polynomial to be the $i$-variate polynomial $\ProverFunc_i(\vec{X}) \in \Polys{\Field}{\Degree}{i}$ given by
            \begin{equation*}
                \ProverFunc_i(x_1,\dots,x_i) \eqdef \sum_{\vec{b}\in \Subcube^{\NumVars-i}}\SumcheckInput(x_1,\dots,x_i, \vec{b}).
            \end{equation*}
            
            \item Define $\Proof\colon\Field^{\NumVars-1}\to \Field^{\NumVars-1}$ by 
            \begin{equation*}
                \Proof(\VerRand_1,\dots,\VerRand_{\NumVars-2}, \Query) \eqdef (\ProverFunc_1(\Query), \ProverFunc_2(\VerRand_1,\Query), \dots, \ProverFunc_{\NumVars-1}(\VerRand_1,\dots,\VerRand_{\NumVars-2}, \Query)),
            \end{equation*}
            for every $(\VerRand_1,\dots,\VerRand_{\NumVars-2}, \Query)\in \Field^{\NumVars-1}$.

            \item Output $\Proof$.

        \end{enumerate}

        \noindent\textbf{Verifier:}
        \begin{enumerate}[nolistsep]
            \item Sample $\VerRandVec\eqdef(\VerRand_1,\dots,\VerRand_{\NumVars-1})\gets\Field^{\NumVars-1}$ uniformly at random.

            \item For all $\Query\in\Field$, query $\Proof(\VerRand_1,\dots,\VerRand_{\NumVars-2},\Query) \eqdef (\ProverFunc_1(\Query), \ProverFunc_2(\VerRand_1, \Query),\dots, \ProverFunc_{\NumVars-1}(\VerRand_1,\dots,\VerRand_{\NumVars-2}, \Query))$.

            \item For all $\Query\in\Field$, query $\SumcheckInput(\VerRand_1,\dots,\VerRand_{\NumVars-1},\Query)$.

            \item \label{step:check-proof-ld} For each $i \in [\NumVars-1]$, reject if $\ProverFunc_i(\VerRand_1,\dots,\VerRand_{i-1}, X)$ is not a polynomial of degree at most $\Degree$.

            \item \label{step:sumcheck-test1} Check that 
            \begin{equation*}
                \sum_{b \in \Subcube}g_1(b) = \SumcheckTotal.
            \end{equation*}

            \item \label{step:sumcheck-test2} For each $i \in [\NumVars-2]$, check that 
            \begin{equation*}
                \sum_{b\in \Subcube} \ProverFunc_{i+1}(\VerRand_1,\dots,\VerRand_i, b) = \ProverFunc_i (\VerRand_1,\dots,\VerRand_i).
            \end{equation*}

            \item \label{step:sumcheck-test3} Check that 
            \begin{equation*}
                \sum_{b \in \Subcube} \SumcheckInput(\VerRandVec, b) = \ProverFunc_{\NumVars-1}(\VerRandVec).
            \end{equation*}

            \item \label{step:sumcheck-ldt}Perform a robust low total degree test (\cref{thm:bundled-ldt}) on $\SumcheckInput$, with proximity parameter $\RMDist = \min(\Proximity, 1/5)$, and reject if the test fails.

            \item If none of the above checks fail, then accept.
        \end{enumerate}
    \end{construction}
\end{mdframed}

We now prove that \cref{construc:robust-sumcheck} is robust, but first outline our strategy for doing so. Conceptually speaking, \cref{construc:robust-sumcheck} consists of two tests: the sumcheck tests in \crefrange{step:check-proof-ld}{step:sumcheck-test3} (which includes a check that the proof itself has low-degree structure), and the low-degree test on the input oracle in \cref{step:sumcheck-ldt}. The low-degree test is indeed robust (by \cref{thm:bundled-ldt}), so the key technical component in proving the robustness of \cref{construc:robust-sumcheck} is \cref{lem:robust-sumcheck-tech}, which asserts that the sumcheck tests (\crefrange{step:check-proof-ld}{step:sumcheck-test3}) reject robustly, under the assumption the input oracle is close to a low-degree polynomial. Since the verifier makes roughly the same number of queries when performing both of these tests, this implies the overall construction is robust.

\cref{lem:robust-sumcheck-tech} holds due to the assumed low-degree structure of $F$ and our specific choice of bundling. If the view of the sumcheck verifier is rejecting, then, since both the proof $\Proof$ and the input oracle $F$ are close to low degree, at least one of $F, g_1,\dots, g_{\NumVars-1}$ needs to be modified to equal a different polynomial. By the Schwartz-Zippel lemma, this requires a substantial fraction of modifications. Our bundling strategy ensures that these modifications form a substantial fraction of the view of the verifier. We now proceed with a formal proof.

\begin{lemma}
\label{lem:robust-sumcheck-tech}
    Let $\BadF\colon \Field^\NumVars \to \Field$ be $\SumDist$-far from $\SCLang[\Field, \NumVars, \Degree, \Subcube, \SumcheckTotal]$, but $\RMDist$-close to $\ReedMuller[\Field, \NumVars, \Degree]$ for some $\frac{\NumVars \Degree}{|\Field|} < \RMDist \leq \SumDist$ and $\Degree \geq |\Subcube|+1$. Then for all proofs $\MalProof$,
    \begin{equation*}
        \mathop{\mathbb{E}}_{\VerRandVec\gets \Field^{\NumVars-1}}\left[\HammingDist\left((\MalProof(\VerRand_1,\dots,\VerRand_{\NumVars-2}, \Query)_{\Query\in\Field}, \BadF(\VerRand_1,\dots,\VerRand_{\NumVars-1},\Query)_{\Query\in \Field}), ~\VAcc(\Verifier)\right)\right] \geq \frac{1}{2}\min\left(\RMDist, 1-4\RMDist\right).
    \end{equation*}
\end{lemma}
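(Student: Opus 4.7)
The argument will show that on any line (in the randomness direction $\vec c$), the real view must differ from every accepting view substantially in the \emph{first} column of the bundled proof, i.e.\ the one carrying the claimed univariate polynomial $g_1$. The first column is special because its associated check $\sum_b g_1(b)=\SumcheckTotal$ is the only sumcheck check that is not ``chain-adjustable'' by the prover: its right-hand side is a fixed scalar rather than a value derived from other proof entries.

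First, I would decode $\BadF$ to a nearest Reed--Muller codeword $F$, so that $\HammingDist(\BadF,F)\le\RMDist$; since $\BadF$ is $\SumDist\ge\RMDist$-far from $\SCLang$, we must have $F\notin\SCLang$ and hence $S_F:=\sum_{\vec b\in\Subcube^\NumVars}F(\vec b)\ne\SumcheckTotal$. Set $g_1^F(X):=\sum_{\vec b\in\Subcube^{\NumVars-1}}F(X,\vec b)$, a univariate polynomial of degree $\le\Degree$ whose $\Subcube$-sum is $S_F$. Fixing any $\VerRandVec$ and any accepting view $A=(\hat g_1,\dots,\hat g_{\NumVars-1},\hat F)$, the boundary check forces $\sum_b \hat g_1(b)=\SumcheckTotal\ne S_F$ and hence $\hat g_1\ne g_1^F$ as degree-$\le\Degree$ polynomials; by the Reed--Solomon distance bound, $\HammingDist(\hat g_1,g_1^F)\ge 1-\Degree/|\Field|>1-\RMDist/\NumVars$. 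Writing $\tilde g_1(Q)$ for the first entry of $\MalProof(\VerRand_1,\dots,\VerRand_{\NumVars-2},Q)$, letting $\alpha:=\HammingDist(\tilde g_1,g_1^F)$ and $\beta:=\HammingDist(\tilde g_1,\mathcal V)$, where $\mathcal V$ denotes the set of degree-$\le\Degree$ polynomials whose $\Subcube$-sum equals $\SumcheckTotal$, I obtain two complementary lower bounds on the first-column contribution $|X_1|/|\Field|:=\HammingDist(\tilde g_1,\hat g_1)$ to the proof-view distance: triangle inequality against $g_1^F$ yields $|X_1|/|\Field|\ge(1-\Degree/|\Field|)-\alpha$, and $\hat g_1\in\mathcal V$ yields $|X_1|/|\Field|\ge\beta$. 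Moreover $\alpha+\beta\ge 1-\Degree/|\Field|$, because every element of $\mathcal V$ is at Reed--Solomon distance from $g_1^F$ (they have different $\Subcube$-sums).

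Combining $|X_1|/|\Field|\ge\max(\beta,\,1-\Degree/|\Field|-\alpha)$ with the inequality $\alpha+\beta\ge 1-\Degree/|\Field|$, a short case split---either $\alpha\le 1-\Degree/|\Field|-\RMDist$, in which case the triangle bound directly gives $|X_1|/|\Field|\ge\RMDist$, or $\alpha>1-\Degree/|\Field|-\RMDist$, in which case $\beta\ge\RMDist$ by the sum constraint, so again $|X_1|/|\Field|\ge\RMDist$---yields the uniform lower bound $|X_1|/|\Field|\ge\min(\RMDist,\,1-4\RMDist)$; the second branch is worst at the crossover $\RMDist\approx 1/5$, where the bound transitions from $\RMDist$ to $1-4\RMDist$ because $\alpha$ and $\beta$ can cluster around $(1-\Degree/|\Field|)/2$. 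Since the entire view (for fixed $\VerRandVec$) has $2|\Field|$ symbols---$|\Field|$ bundled proof queries and $|\Field|$ input-oracle queries---the view-to-accepting distance is at least $|X_1|/(2|\Field|)$, and this bound holds for every accepting $A$ and every $\VerRandVec$, so the stated expectation bound $\frac12\min(\RMDist,1-4\RMDist)$ follows immediately. The main obstacle I expect is precisely the case analysis combining the two lower bounds on $|X_1|$: one must be careful to extract the correct crossover point and to note that the bound depends only on the first column (so the input-oracle contribution $d_F$, which an adversary can drive down to $1$ by adjusting $\hat F$ on $\Subcube$, plays no essential role).
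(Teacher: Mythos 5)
Your plan hinges on the claim that the honest view must differ from every accepting view \emph{in the first column alone}, because the check $\sum_b g_1(b)=\SumcheckTotal$ is the only ``non-chain-adjustable'' constraint. This is the gap, and it is fatal. The malicious prover is free to set the first column $\tilde g_1$ to any polynomial $v\in\mathcal V$ that actually sums to $\SumcheckTotal$; then $\beta=\HammingDist(\tilde g_1,\mathcal V)=0$, the first column of the view is \emph{exactly} the first column of some accepting view, and the discrepancy between $\sum_{\vec b}\hat F(\vec b)$ and $\SumcheckTotal$ has been pushed into the later chain constraints $\sum_b g_{i+1}(\VerRandVec_i,b)=g_i(\VerRandVec_i)$ and ultimately into the final check against $\BadF$. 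Concretely, your case split breaks: when $\alpha>1-\Degree/|\Field|-\RMDist$ you assert $\beta\ge\RMDist$ ``by the sum constraint'' $\alpha+\beta\ge 1-\Degree/|\Field|$, but that inequality only lower-bounds $\beta$ when $\alpha$ is \emph{small}; when $\alpha$ is large it gives nothing, and indeed $\tilde g_1=v$ has $\alpha\ge 1-\Degree/|\Field|$ and $\beta=0$ simultaneously.

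A second, related issue: you claim a pointwise bound of $\frac12\min(\RMDist,1-4\RMDist)$ that ``holds for every accepting $A$ and every $\VerRandVec$,'' which would make the expectation bound trivial. But no pointwise bound exists: for an $O(\NumVars\Degree/|\Field|)$-fraction of $\VerRandVec$ the sumcheck chain can be consistent with the decoded polynomials, and on those randomness strings the view can be within $\RMDist$ (or less) of accepting. The lemma is genuinely an averaging statement.

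The paper's proof handles exactly what your argument skips: it decodes each column $g_i|_{\VerRandVec_{i-1}}$ to a nearest degree-$\le\Degree$ polynomial $\hat g_i$, defines the event $\Event$ that the decoded chain $(\hat g_1,\dots,\hat g_{\NumVars-1},\hat F)$ fails \emph{some} sumcheck consistency check at $\VerRandVec$, uses sumcheck soundness to show $\Pr[\neg\Event]\le\NumVars\Degree/|\Field|$, and then argues that conditioned on $\Event$, to reach an accepting view \emph{one} of the $\NumVars$ columns (or the $\BadF$ row) must be altered to a \emph{different} degree-$\le\Degree$ polynomial, which costs a $\sim 1-\Degree/|\Field|-\RMDist$ fraction of that column, hence $\frac12(1-\Degree/|\Field|-\RMDist)$ of the full view. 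Taking expectations over $\VerRandVec$ and subtracting $\Pr[\neg\Event]$ yields $\frac12(1-4\RMDist)$; the separate $\frac12\RMDist$ branch comes from the degree check when some column of $\MalProof$ is $\RMDist$-far from all low-degree polynomials. You would need to adopt this chain-tracking structure; restricting attention to the first column cannot work.
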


\begin{proof}
    Let $\hat F \in \Polys{\Field}{\Degree}{\NumVars}$ be the some polynomial of total degree at most $\Degree$ which is $\RMDist$-close to $\BadF$, as guaranteed by the hypothesis. Note that by the condition that $\BadF$ is $\SumDist$-far from $\SCLang[\Field, \NumVars, \Degree, \Subcube, \SumcheckTotal]$ and $\RMDist \leq \SumDist$, it must be that $\sum_{\vec{b}\in \Subcube^{\NumVars}} \hat F(\vec{b}) \neq \SumcheckTotal$. Before proceeding, we remind the reader of some notation: for any $\VerRandVec = (\VerRand_1,\dots,\VerRand_{\NumVars-1}) \in \Field^{\NumVars-1}$, and any $i \in [\NumVars-1]$, we denote $\VerRandVec_i \eqdef (\VerRand_1,\dots,\VerRand_i)$. Also, for a polynomial $f \in \Polys{\Field}{\Degree}{i}$, we denote the restriction of $f$ to the axis-parallel line defined by $\VerRandVec_{i-1}$ by $f|_{\VerRandVec_{i-1}}(X) \eqdef f(\VerRandVec_{i-1}, X)$.

    First, fix some $\VerRandVec = (\VerRand_1, \dots, \VerRand_{\NumVars-1}) \in \Field^{\NumVars-1}$, and  suppose that for some $i \in [\NumVars-1]$, $\ProverFunc_i(\VerRand_1,\dots,\VerRand_{i-1}, X)$ is $\RMDist$-far from every univariate polynomial of degree at most $\Degree$. Then since the verifier inspects each $\ProverFunc_i(\VerRand_1,\dots,\VerRand_{i-1}, X)$ at every point in $\Field$ and rejects if $\ProverFunc_i(\VerRand_1,\dots,\VerRand_{i-1}, X)$ is not a polynomial of degree at most $\Degree$ (\cref{step:check-proof-ld}), at least a $\RMDist$-fraction of symbols of $\MalProof(\VerRand_1,\dots,\VerRand_{\NumVars-2}, \Query)_{\Query\in\Field}$ would need to be modified in order for the verifier to accept. Thus $(\MalProof(\VerRand_1,\dots,\VerRand_{\NumVars-2}, \Query)_{\Query\in\Field}, \BadF(\VerRand_1,\dots,\VerRand_{\NumVars-1},\Query)_{\Query\in \Field})$ has distance at least $\RMDist/2$ to any accepting view.
    
    Hence we can proceed under the assumption that for all $i \in [\NumVars-1]$, $\ProverFunc_i(\VerRand_1,\dots,\VerRand_{i-1}, X)$ is $\RMDist$-close to some univariate polynomial $\ProverPoly_i(X)\in \Field^{\leq \Degree}[X]$. Let $\Event$ be the event that either
    \begin{equation*}
         \sum_{b \in \Subcube}\ProverPoly_1(b) \neq \SumcheckTotal,
    \end{equation*}
    or
    \begin{equation*}
    \label{eqn:ver-rej-proof}
        \sum_{b\in \Subcube} \ProverPoly_{i+1}(b) \neq \ProverPoly_i (\VerRand_i)
    \end{equation*}
    for some $i \in [\NumVars-2]$, or 
    \begin{equation*}
    \label{eqn:ver-rej-inp}
        \sum_{b \in \Subcube} \hat\SumcheckInput(\VerRandVec, b) \neq \ProverPoly_{\NumVars-1}(c_{m-1}).
    \end{equation*}
    Note that since $\sum_{\vec{b}\in \Subcube^{\NumVars}} \hat F(\vec{b}) \neq \SumcheckTotal$, by the soundness of sumcheck, $\Pr_{\VerRandVec}[\neg \Event] \leq \NumVars \Degree/|\Field|$.

    Conditioned on $\Event$, in order for the verifier to accept, one of $\ProverFunc_1(X), \ProverFunc_2|_{\VerRandVec_{1}}(X), \dots, \ProverFunc_{\NumVars-1}|_{\VerRandVec_{\NumVars-2}}(X), \BadF|_{\VerRandVec}(X)$ must be modified to equal a degree $\Degree$ polynomial other than $\ProverPoly_1(X), \ProverPoly_2|_{\VerRandVec_{1}}(X), \dots, \ProverPoly_{\NumVars-1}|_{\VerRandVec_{\NumVars-2}}(X), \hat{F}|_{\VerRandVec}(X)$ respectively. This would require modifying $\ProverFunc_i|_{\VerRandVec_{i-1}}(X)$, for some $i \in [\NumVars-1]$, in at least a $1 - \Degree/|\Field| - \HammingDist(\ProverFunc_i|_{\VerRandVec_{i-1}} ,\ProverPoly_i|_{\VerRandVec_{i-1}}) \geq 1 - \Degree/|\Field| -\RMDist$ fraction of points, or modifying $\BadF_{\VerRandVec}(X)$ in at least a $1 - \Degree/|\Field| - \HammingDist(\BadF|_{\VerRandVec}, \hat{F}|_{\VerRandVec})$ fraction of points. Thus, the verifier's view of the proof would require modification in at least a $\frac{1}{2}(1 - \frac{\Degree}{|\Field|} - \max(\RMDist, \HammingDist(\BadF|_{\VerRandVec}, \hat{F}|_{\VerRandVec})))$ fraction of points. 

    Thus the expected distance of $(\MalProof(\VerRand_1,\dots,\VerRand_{\NumVars-2}, \Query)_{\Query\in\Field}, \BadF(\VerRand_1,\dots,\VerRand_{\NumVars-1},\Query)_{\Query\in \Field})$ to an accepting view is at least 
    \begin{align*}
        \frac{1}{2} \cdot\mathop{\mathbb{E}}_{\VerRandVec\gets \Field^{\NumVars-1}}\left[1 - \frac{\Degree}{|\Field|} - \max(\RMDist, \HammingDist(\BadF|_{\VerRandVec}, \hat{F}|_{\VerRandVec}))\right] - \Pr_{\vec{c}}[\neg \Event] &=
        \frac{1}{2} \left(1 - \frac{\Degree}{|\Field|} - \RMDist\right) - \Pr_{\vec{c}}[\neg \Event]
        \\
        &\geq \frac{1}{2}\left(1 - \frac{\Degree}{|\Field|} - \RMDist\right) - \frac{\NumVars\Degree}{|\Field|}
        \\
        &\geq \frac{1}{2}(1 - 4\RMDist),
    \end{align*}
    where we have used that 
    $\mathbb{E}[\HammingDist(\BadF|_{\VerRandVec}, \hat{F}|_{\VerRandVec}))] \leq \RMDist$ and that
    $\frac{\NumVars \Degree}{|\Field|} \leq \RMDist$ by hypothesis.
\end{proof}

Now we employ \cref{lem:robust-sumcheck-tech} to prove that \cref{construc:robust-sumcheck} is robust, following the approach we outlined above.

\begin{proof}[Proof of \cref{lem:robust-sumcheck-pcp}]
    Completeness is straightforward; we prove robust soundness and efficiency.
    
    \parhead{Robust soundness} Suppose that $\BadF$ is $\Proximity$-far from  $\SCLang[\Field, \NumVars, \Degree, \Subcube, \SumcheckTotal]$. First, if $\SumcheckInput$ is $\RMDist$-far from $\ReedMuller[\Field, \NumVars,\Degree]$, then by \cref{thm:bundled-ldt}, the expected distance of the view of the low degree test (\cref{step:sumcheck-ldt}) from any accepting view is $\Omega(\Proximity)$ (since $\RMDist \leq \Proximity$). As the low-degree test makes $O(|\Field|)$ queries to $\SumcheckInput$, and (apart from the low-degree test) the verifier in \cref{construc:robust-sumcheck} makes $|\Field|$ queries to each of $\SumcheckInput$ and $\Proof$, the queries made by the low-degree test form a constant fraction of the queries made overall by the verifier. Hence the expected distance of the view of the verifier from any accepting view is also $\Omega(\Proximity)$.

    Second, if $\SumcheckInput$ is $\Proximity$-far from $\SCLang[\Field, \NumVars, \Degree, \Subcube, \SumcheckTotal]$ but $\RMDist$-close to $\ReedMuller[\Field, \NumVars, \Degree]$, then by \cref{lem:robust-sumcheck-tech}, the expected distance of the view of the verifier (when ignoring the low-degree test) to any accepting view is at least $\frac{1}{2}\min(\RMDist, (1-4\RMDist)) \geq \min(\delta/2,1/10) = \Omega(\Proximity)$ (since $\RMDist \eqdef \min(\Proximity, 1/5)$).
    Once again, the queries made by the verifier apart from the low-degree test form a constant fraction of the overall number of queries made by the verifier, so the expected distance of the view of the verifier from any accepting view is $\Omega(\Proximity)$. In other words, \cref{construc:robust-sumcheck} has expected robustness $\Omega(\Proximity)$. 
    
    By \cref{prop:exp-rob-to-rob}, for an appropriate choice of constant $\varepsilon$, \cref{construc:robust-sumcheck} has robustness $\Omega(\Proximity)$.
    
    \parhead{Efficiency} The proof length is clear from construction. The query complexity follows from the fact that the low-degree test (\cref{step:sumcheck-ldt}) makes $O(|\Field|)$ queries to $F$, and the rest of \cref{construc:robust-sumcheck} makes $|\Field|$ queries to each of $F$ and $\Proof$.
\end{proof}

\section{Robust PZK-PCPPs for polynomial summation}
\label{sec:robust-pzk-pcp}
In this section, we construct PZK-PCPs of proximity for $\SCLang$ (\cref{def:sc-lang}) with constant robust soundness. The construction makes black-box use of the robust sumcheck PCPP we obtain in \cref{lem:robust-sumcheck-pcp} and builds on the PZK-PCP for $\SharpP$ constructed in \cite{GurOS2024}. 

\begin{lemma}
    \label{lem:rob-zk-pcp-sumcheck}
    Let $\Proximity > 0$, $\Field$ be a finite field, $\Subcube\subseteq \Field$, $\SumcheckTotal \in \Field$, and $\NumVars, \Degree \in \N$ be such that $\frac{\NumVars \Degree}{|\Field|} < \Proximity$ and $\Degree \geq |\Subcube| + 1$. \cref{construc:robust-sumcheck-pzkpcp} is a robust (black-box) perfect zero-knowledge PCP of proximity for $\SCLang[\Field,\NumVars,\Degree,\Subcube,\gamma]$ over the alphabet $\Field^{\NumVars+1}$ with proximity parameter $\Proximity$ and robustness parameter $\Robustness = \Omega(\Proximity)$. The verifier complexity is $\Poly(|\Field|, \NumVars, \Degree)$, the 
    verifier makes $O(|\Field|)$ queries to both $F$ and $\Proof$, and the proof length is $O(|\Field|^{\NumVars})$.
\end{lemma}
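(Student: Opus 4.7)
The plan is to adapt the bundling technique of \cref{construc:robust-sumcheck} to the zero-knowledge sumcheck PCPP of \cite{GurOS2024}, then invoke the locally-computable-proofs framework of \cref{sec:loc-proofs} to inherit zero knowledge from the \cite{GurOS2024} construction. Concretely, I would take the \cite{GurOS2024} honest prover---whose output consists of evaluation tables for the sumcheck polynomials $\ProverFunc_i$ together with auxiliary masking polynomials used to hide partial-sum information---and re-package these tables into a bundled oracle whose $(\VerRand_1, \ldots, \VerRand_{\NumVars-2}, \Query)$-th entry simultaneously exposes $\ProverFunc_i(\VerRand_1, \ldots, \VerRand_{i-1}, \Query)$ for every $i \in [\NumVars-1]$, together with the corresponding masking evaluations along the same axis-parallel trajectory (accounting for the alphabet size $\Field^{\NumVars+1}$). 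The verifier then samples $\VerRandVec \gets \Field^{\NumVars-1}$, queries $\Proof$ at every point along the line $(\VerRand_1, \ldots, \VerRand_{\NumVars-2}, \cdot)$ and $\SumcheckInput$ at every point along $(\VerRand_1, \ldots, \VerRand_{\NumVars-1}, \cdot)$, runs the \cite{GurOS2024} sumcheck-plus-masking consistency checks, and performs a robust low-degree test on $\SumcheckInput$ via \cref{thm:bundled-ldt}.

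For zero knowledge, I would appeal to \cref{lem:locally-computable-implies-zk}. The bundled prover $\Prover$ is $O(1)$-locally computable from the unbundled \cite{GurOS2024} prover $\ZKProver$, since each coordinate of a bundled symbol is exactly one entry of the unbundled oracle. Because the \cite{GurOS2024} PZK-PCPP achieves black-box perfect zero knowledge against any polynomially-query-bounded adversary with a simulator running in polynomial time, and the locality is $O(1)$, the bundled construction inherits perfect zero knowledge against a (constant-factor smaller) polynomial query bound, yielding the required polynomial-time black-box simulator.

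For robustness, I would mirror the proof of \cref{lem:robust-sumcheck-pcp}. The low-degree test on $\SumcheckInput$ contributes $\Omega(\Proximity)$ expected distance to any accepting view by \cref{thm:bundled-ldt}, and it inspects a constant fraction of the verifier's overall queries. Conditioned on $\SumcheckInput$ being $\RMDist$-close to $\ReedMuller[\Field, \NumVars, \Degree]$, I would prove an analog of \cref{lem:robust-sumcheck-tech} for the remaining sumcheck-plus-masking tests: if every $\ProverFunc_i(\VerRand_1, \ldots, \VerRand_{i-1}, X)$ is $\RMDist$-close to a univariate polynomial of degree $\Degree$, sumcheck soundness together with Schwartz--Zippel forces at least one of these approximants to be incorrect, so at least one actual bundle coordinate must disagree with an accepting view on an $\Omega(1)$ fraction of points along the verifier's line. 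Converting expected robustness to worst-case robustness via \cref{prop:exp-rob-to-rob} then yields robustness $\Omega(\Proximity)$.

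The main obstacle I anticipate is correctly handling the \cite{GurOS2024} masking polynomials within the robustness analysis. That construction introduces extra oracles and linear consistency checks between the masking polynomial, a verifier-chosen random coefficient, and the sumcheck trace; these checks must themselves be made ``bundle-friendly'' so that each check inspects $|\Field|$ correlated symbols rather than a single symbol---otherwise a cheating prover could corrupt the masking oracle in a single bundled position without meaningfully affecting the view's relative distance to an accepting view. Ensuring that the masking consistency test, like the sumcheck test, forces a constant fraction of the bundled view to be modified is the key technical step that must be verified carefully; beyond this, everything else should follow routinely from the analysis in \cref{sec:rob-sumcheck} combined with the zero-knowledge guarantees of the \cite{GurOS2024} construction.
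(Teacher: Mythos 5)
Your high-level plan---bundle the \cite{GurOS2024} construction, derive zero knowledge from local computability via \cref{lem:locally-computable-implies-zk}, and derive robustness from a low-degree test on $F$ plus the robustness of the underlying non-ZK sumcheck PCPP of \cref{construc:robust-sumcheck}---is the same route the paper takes. Two points need correction, and one of them is a real gap.

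First, the locality is not $O(1)$. Each bundled symbol lives in $\Field^{\NumVars+1}$, and each of its $\NumVars+1$ coordinates requires a separate query to the unbundled GOS oracles; the paper accordingly proves $(\NumVars+1)$-local computability. This does not break the argument, since the \cite{GurOS2024} PZK-PCPP is zero knowledge against every polynomial-time adversary irrespective of its query count, but you should not claim that the query bound only degrades by a constant factor.

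Second, and this is the substantive gap: you propose a single bundle at index $(\VerRand_1,\dots,\VerRand_{\NumVars-2},\Query)$ carrying the sumcheck trace together with ``the corresponding masking evaluations along the same axis-parallel trajectory.'' But the GOS mask is $\MaskPoly(\vec X) = \RandPoly(\vec X) - \RandPoly(\revvec X) + \sum_i \ZeroP_\Subcube(X_i)\RandLDPoly_i(\vec X)$, so to evaluate $F+\MaskPoly$ at a point $\vec\alpha$ on the line $(\VerRand_1,\dots,\VerRand_{\NumVars-1},\cdot)$ the verifier needs $\RandPoly(\revvec\alpha)$, which lies on the entirely different line $(\cdot,\VerRand_{\NumVars-1},\dots,\VerRand_1)$. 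Worse, the sumcheck-trace bundle is indexed by $\Field^{\NumVars-1}$ and does not even carry $\VerRand_{\NumVars-1}$, which you would need in order to know which mask evaluations belong in the bundle. A single trajectory-indexed bundle therefore cannot expose the data required to run the masked sumcheck checks. The paper resolves this by keeping the mask data in a \emph{separate} $\NumVars$-dimensional bundled oracle $\Proof_P \colon \Field^\NumVars \to \Field^{\NumVars+1}$ holding $(\RandPoly(\vec x),\RandLDPoly_1(\vec x),\dots,\RandLDPoly_\NumVars(\vec x))$, querying it along \emph{both} the lines $(\VerRand_1,\dots,\VerRand_{\NumVars-1},\cdot)$ and $(\cdot,\VerRand_{\NumVars-1},\dots,\VerRand_1)$, and running a robust vector-valued low-degree test directly on $\Proof_P$; the sumcheck trace is a second bundled oracle $\Proof_\Sigma$ on $\Field^{\NumVars-1}$, and the verifier emulates $\Verifier_\Sigma$ on the masked input $F + \MaskPoly$, reconstructing each masked evaluation from one query to $F$ and two queries to $\Proof_P$.

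Finally, ``mirroring \cref{lem:robust-sumcheck-tech}'' is not quite enough for the robustness claim. The verifier's view now consists of four length-$|\Field|$ substrings (the sumcheck trace, two mask lines, and $F$ restricted to a line), and the paper's \cref{lem:pzk-sumcheck-tech} explicitly tracks how expected distance degrades as one passes from the verifier's view of $(F,\Proof_P,\Proof_\Sigma)$ to $\Verifier_\Sigma$'s view of $(F+\MaskPoly,\Proof_\Sigma)$ (a factor-$4$ loss) and then corrects for the low-degree error in $\Proof_P$ via the reverse triangle inequality. Once the two-oracle bundling is in place, this bookkeeping is routine, but it needs to be carried out to get the stated $\Omega(\Proximity)$.
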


We note that (as in \cite{GurOS2024}) the zero-knowledge property for this PCPP holds against all polynomial-time malicious verifiers regardless of their query complexity (in fact, under an appropriate definition, ZK holds against \emph{all} verifiers regardless of their complexity).

\cref{construc:robust-sumcheck-pzkpcp} closely follows \cite{GurOS2024}, with only a few modifications to achieve constant robustness. We provide a brief high-level description of the \cite{GurOS2024} construction, and then present our construction in detail. We refer the reader to \cite{GurOS2024} for further exposition.

The \cite{GurOS2024} ZKPCP for $\SCLang$ is a (standard) sumcheck PCPP for a ``masked'' version of the polynomial $\SCPoly$; specifically, a sumcheck PCPP for the claim
\begin{equation}
	\label{eq:gos-claim}
	\sum_{\vec a \in \Subcube^\NumVars} \SCPoly(\vec a) + \underbrace{\RandPoly(\vec a) - \RandPoly(\revvec a)}_{\text{antisymmetric}} + \underbrace{\sum_{i=1}^{\NumVars} \ZeroP_\Subcube(a_i) \RandLDPoly_i(\vec a)}_{\text{nullstellensatz}} = \gamma
\end{equation}
where
\begin{itemize}[itemsep=2pt]
	\item $\RandPoly$ is a uniformly random polynomial in $\Polys{\Field}{\vec \Degree}{\NumVars}$;
	\item $\ZeroP_\Subcube \eqdef \prod_{a \in \Subcube} (X - a)$ is the nonzero univariate polynomial of minimal degree that is zero on $\Subcube$; and
	\item $\RandLDPoly_i$, for each $i \in [\NumVars]$, is a uniformly random polynomial in $\Polys{\Field}{\vec \Degree_i}{\NumVars}$, where $\vec \Degree_i$ taking value $\Degree - |\Subcube|$ in the $i$-th coordinate and $\Degree$ in all other coordinates.
\end{itemize}
The polynomials $Q, T_1,\ldots,T_\NumVars$ are provided in the proof. It is not difficult to see that, for any functions $Q, T_1,\ldots,T_\NumVars$, \eqref{eq:gos-claim} is true if and only if $\sum_{\vec a \in \Subcube^\NumVars} F(\vec a) = \gamma$.

The mask polynomials make the PCPP zero knowledge. The purpose of the ``antisymmetric'' term $Q(\vec X) - Q(\revvec X)$ is to hide the values of intermediate sums in the sumcheck proof, which are $\SharpP$-hard to compute, while summing to zero over $\Subcube^\NumVars$ to maintain soundness.
The ``nullstellensatz'' term computes, by the combinatorial nullstellensatz \cite{Alon99}, a uniformly random polynomial $N$ such that $N(\vec{h}) = 0$ for all $\vec{h} \in \Subcube$. By adding this term, we retain the antisymmetric structure of the mask on $\Subcube^\NumVars$ (necessary for soundness) while maximising the randomness outside of $\Subcube^\NumVars$ (necessary for simulation).

We are now ready to present our construction.

\begin{mdframed}[nobreak=true]
    \begin{construction}
    \label{construc:robust-sumcheck-pzkpcp}
    A robust PZK-PCP of proximity for $\SCLang[\Field,\NumVars,\Degree,\Subcube,\gamma]$ with proximity parameter $\Proximity>0$ over the alphabet $\Field^{\NumVars+1}$. Both parties receive common input $(\Field, \NumVars, \Degree, \Subcube, \SumcheckTotal, \Proximity)$, and oracle access to the evaluation table of $\SumcheckInput\colon \Field^\NumVars \to \Field$.
    
         \noindent\textbf{Proof:} 
    \begin{enumerate}[nolistsep]
        \item  Sample a polynomial $\RandPoly \gets \Polys{\Field}{\vec{\Degree}}{\NumVars}$ uniformly at random, where $\vec \Degree = (\Degree, \ldots, \Degree)$. 

        \item For each $i \in [\NumVars]$, sample $\RandLDPoly_i \gets \Polys{\Field}{\vec{\Degree}_i}{\NumVars}$ uniformly at random, where $\vec{\Degree}_i = (\Degree, \dots, \Degree-|\Subcube|, \dots, \Degree)$ is the vector which takes the value $\Degree$ in every coordinate except for the $i$-th, which takes the value $\Degree-|\Subcube|$. 

        \item Define $\Proof_P(\vec{x}) = (\RandPoly(\vec{x}), \RandLDPoly_1(\vec{x}), \dots, \RandLDPoly_\NumVars(\vec{x}))$ for all $\vec{x} \in \Field^\NumVars$, written over the alphabet $\Field^{\NumVars+1}$.

        \item Define $\MaskPoly(\vec{X}) \eqdef \RandPoly(\vec{X}) - \RandPoly_\rev(\vec{X}) + \sum_{i=1}^{\NumVars} \ZeroP_\Subcube(X_i) \RandLDPoly_i(\vec{X})$, where $\ZeroP_\Subcube \eqdef \prod_{a \in \Subcube} (X - a)$, and $\RandPoly_\rev(\vec{X}) \eqdef \RandPoly(\revvec{X})$.

        \item Compute $\Proof_\Sigma \eqdef \RobSC[(\Field, \NumVars, \Degree, \Subcube, \SumcheckTotal, \Proximity), F+R]$, where $\RobSC[x, f]$ denotes the robust sumcheck PCP with explicit input $x$ and implicit input $f$ (\cref{construc:robust-sumcheck}). Write this proof over the alphabet $\Field^{\NumVars+1}$ (rather than $\Field^{\NumVars-1}$) by padding each symbol with two zeros at the end.

        \item Output $\Proof \eqdef (\Proof_\Sigma,\Proof_P)$, written over the alphabet $\Field^{\NumVars+1}$.
    \end{enumerate}
    \vspace{0.5cm}
    \noindent\textbf{Verifier:} 
    \begin{enumerate}[nolistsep]
        \item \label{step:zk-rob-sumcheck-step}Emulate the robust sumcheck verifier $\Verifier_\Sigma$ (\cref{construc:robust-sumcheck}) on input $\SumcheckInput+\MaskPoly$  and proof $\Proof_\Sigma$. To query the input $\SumcheckInput+\MaskPoly$ at some point $\vec{\alpha}\in \Field^\NumVars$, query $F(\vec{\alpha})$, $\Proof_{P}(\vec\alpha)\eqdef (\RandPoly(\vec{\alpha}), \RandLDPoly_1(\vec{\alpha}), \dots, \RandLDPoly_\NumVars(\vec{\alpha}))$ and $\Proof_P(\revvec \alpha)\eqdef (\RandPoly(\revvec{\alpha}), \RandLDPoly_1(\revvec{\alpha}), \dots, \RandLDPoly_\NumVars(\revvec{\alpha}))$, then compute $(F+R)(\vec\alpha) \eqdef F(\vec{\alpha}) + \RandPoly(\vec{\alpha}) - \RandPoly(\revvec{\alpha}) + \sum_{i=1}^{\NumVars} \ZeroP_\Subcube(\alpha_i) \RandLDPoly_i(\vec{\alpha})$. Reject if $\Verifier_\Sigma$ rejects.
        
        \item Perform a robust low-degree test (\cref{thm:bundled-ldt}) on $\SumcheckInput$, with proximity parameter $\RMDist \eqdef \min(\Proximity, 1/5)$, and reject if the test fails.

        \item \label{step:bundled-ldt} Perform a robust vector-valued low-degree test on $\Proof_P$ (\cref{thm:bundled-ldt}), with proximity parameter\\ $\LDTProx \eqdef \frac{1}{8}\RMDist$ and degree parameter $d_P = \NumVars\Degree$, and reject if the test fails. 
        \item Accept if and only if none of the above tests rejected.
    \end{enumerate}
    \end{construction}
\end{mdframed}

\newcommand{\MalSCProof}{\MalProof_\Sigma}
\newcommand{\MalPolyProof}{\MalProof_P}

To prove the robustness of our construction, we show a technical lemma (see \cref{lem:pzk-sumcheck-tech}) which asserts that the PCPP is robust under the assumption that the input oracle is close to some Reed-Muller codeword that does not sum to $\gamma$ over $\Subcube^\NumVars$ (and indeed is far from all Reed-Muller codewords that do). This will imply that the whole construction is robust, since the low-degree test is robust.

We also need to argue that the zero-knowledge property holds. We accomplish this using the notion of locally computable proofs developed in \cref{sec:loc-proofs}. Essentially, we show that the PCP in \cref{construc:robust-sumcheck-pzkpcp} is locally computable from the perfect zero-knowledge PCPP for sumcheck defined in \cite{GurOS2024}, and so it inherits zero knowledge from the GOS construction by \cref{lem:locally-computable-implies-zk}.

Throughout the remainder of this section we will use the following notation. We denote the non-ZK robust sumcheck PCP verifier (\cref{construc:robust-sumcheck}) by $\Verifier_{\Sigma}$. We denote the verifier that performs \cref{step:zk-rob-sumcheck-step} of \cref{construc:robust-sumcheck-pzkpcp} by $\Verifier_1$. That is, $\Verifier_1$ is given $F\colon \Field^{\NumVars} \to \Field$ as input, and runs $\Verifier_\Sigma$ on input oracle $F+R\colon\Field^\NumVars\to\Field$ (computing queries to $F+R$ via $F$ and the proof $\Proof_P$) and proof oracle $\Proof_\Sigma$. $\Verifier_1$ then outputs the output of $\Verifier_{\Sigma}$. For any $\VerRandVec \in \Field^{\NumVars-1}$, we denote the view of $\Verifier_{\Sigma}$, on randomness $\VerRandVec \in \Field^{\NumVars-1}$ and input $G\colon \Field^{\NumVars} \to\Field$, by $\SCView(\VerRandVec)\eqdef (\Proof_{\Sigma}(\VerRandVec_{\NumVars-2}, \alpha)_{\alpha \in \Field}, G(\VerRandVec, \alpha)_{\alpha \in \Field})$ and we denote the view of $\Verifier_1$, on randomness $\VerRandVec\in\Field^{\NumVars-1}$ and input $\SumcheckInput\colon \Field^{\NumVars}\to\Field$, by $\HonView(\VerRandVec) \eqdef (\Proof_{\Sigma}(\VerRandVec_{\NumVars-2}, \Query)_{\Query\in\Field},  \Proof_P(\VerRandVec, \Query)_{\Query\in \Field}, \Proof_P(\Query, \revvec{\VerRand})_{\Query\in \Field}, \SumcheckInput(\VerRandVec,\Query)_{\Query\in \Field})$.

\begin{lemma}
    \label{lem:pzk-sumcheck-tech}
    Let $\BadF \colon \Field^\NumVars \to \Field$ be $\SumDist$-far from  $\SCLang[\Field, \NumVars, \Degree, \Subcube, \SumcheckTotal]$, but $\RMDist$-close to $\ReedMuller[\Field, \NumVars, \Degree]$ for $\SumDist > \RMDist \geq \frac{\NumVars \Degree}{|\Field|}$, $\RMDist < 1/5$ and $\Degree \geq |\Subcube| + 1$. Then for all proofs $(\MalSCProof, \MalPolyProof)$,
    \begin{equation*}
        \mathop{\mathbb{E}}_{\VerRandVec\gets \Field^{\NumVars-1}}\left[\HammingDist\left((\MalSCProof(\VerRandVec_{\NumVars-2}, \Query)_{\Query\in\Field},  \MalPolyProof(\VerRandVec, \Query)_{\Query\in \Field}, \MalPolyProof(\Query, \revvec{\VerRand})_{\Query\in \Field},
        \BadF(\VerRandVec,\Query)_{\Query\in \Field}),~ \VAcc(\Verifier)\right)\right] =  \Omega(\RMDist).
    \end{equation*}
\end{lemma}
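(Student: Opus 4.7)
The plan is to reduce to the non-ZK robust sumcheck analysis (\cref{lem:robust-sumcheck-tech}), applied to the ``effective input'' $F+R$ that the sumcheck step (\cref{step:zk-rob-sumcheck-step} of \cref{construc:robust-sumcheck-pzkpcp}) emulates upon. The argument has two parts: a structural reduction to the case where $\MalPolyProof$ is close to a vector-valued Reed--Muller codeword, and a bookkeeping step translating the non-ZK view into the PZK view that appears in the lemma.

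First I would restrict attention to the case that $\MalPolyProof$ is $\LDTProx$-close to some vector-valued RM codeword $(\hat Q, \hat T_1, \ldots, \hat T_\NumVars)$, with $\LDTProx \eqdef \RMDist/8$. The complementary case---$\MalPolyProof$ is $\LDTProx$-far from vector-valued RM---lies outside the view considered in the statement, and it will be absorbed in the proof of the main lemma \cref{lem:rob-zk-pcp-sumcheck} by appealing directly to \cref{thm:bundled-ldt} applied in \cref{step:bundled-ldt}; this is the analog of how \cref{lem:robust-sumcheck-pcp} handles the ``$F$ far from RM'' case by invoking the LDT on $F$.

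Assuming this closeness, define the honest mask $\hat R(\vec X) \eqdef \hat Q(\vec X) - \hat Q(\revvec X) + \sum_{i=1}^\NumVars \ZeroP_\Subcube(X_i) \hat T_i(\vec X)$. Because each value $R(\vec x)$ is determined by $\MalPolyProof$ at $\vec x$ and $\revvec x$, a union bound shows $R = \hat R$ on at least a $1 - 2\LDTProx$ fraction of points of $\Field^\NumVars$. Let $\hat F$ be the RM codeword $\RMDist$-close to $F$ promised by the hypothesis; since $\RMDist < \SumDist$, we have $\hat F \notin \SCLang$, so $\sum_{\vec a \in \Subcube^\NumVars} \hat F(\vec a) \neq \SumcheckTotal$. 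Moreover $\sum_{\vec a \in \Subcube^\NumVars} \hat R(\vec a) = 0$, because the nullstellensatz term vanishes on $\Subcube^\NumVars$ (as $\ZeroP_\Subcube(x_i)=0$ for $x_i \in \Subcube$) and the antisymmetric term $\hat Q - \hat Q \circ \rev$ cancels upon reindexing. Hence $\hat F + \hat R \notin \SCLang$, and by Schwartz--Zippel any polynomial in $\SCLang$ is $\bigl(1 - O(\NumVars \Degree/|\Field|)\bigr)$-far from $\hat F + \hat R$. Combined with the $(\RMDist + 2\LDTProx)$-closeness of $F+R$ to $\hat F + \hat R$, this yields that $F+R$ is $\Omega(1)$-far from $\SCLang$ while being $O(\RMDist)$-close to Reed--Muller.

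Now I would apply \cref{lem:robust-sumcheck-tech} to input oracle $F+R$ and proof $\MalSCProof$, obtaining that the expected distance of the non-ZK sumcheck view $\bigl(\MalSCProof(\VerRandVec_{\NumVars-2},\alpha)_\alpha,\,(F+R)(\VerRandVec,\alpha)_\alpha\bigr)$ to any accepting view of $\Verifier_\Sigma$ is $\Omega(\RMDist)$. To translate this into the PZK view from the statement, observe that each symbol $(F+R)(\VerRandVec,\alpha)$ is a deterministic function of at most three PZK positions: $F(\VerRandVec,\alpha)$, $\MalPolyProof(\VerRandVec,\alpha)$, and $\MalPolyProof(\alpha,\revvec c)$; while each $\MalSCProof$ position corresponds bijectively. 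Hence any PZK change affects at most one non-ZK position (the two queried lines $(\VerRandVec,\cdot)$ and $(\cdot,\revvec c)$ generically do not collide), so the number of changes required in the PZK view to reach an accepting view is at least the number required in the non-ZK view. Since the PZK view has only a constant factor more positions than the non-ZK view, the $\Omega(\RMDist)$ fractional bound survives up to a constant factor.

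The main obstacle I anticipate is precisely the excluded case where $\MalPolyProof$ is far from vector-valued RM: an adversarial choice of $\MalPolyProof$ on the two queried lines can in principle make the sumcheck step accept with no change, so the lemma's bound fails without the LDT on $\MalPolyProof$. A secondary subtlety is that the effective degree of $F + R$ (and hence of the sumcheck polynomials) is $O(\NumVars \Degree)$ rather than $\Degree$, so when invoking \cref{lem:robust-sumcheck-tech} one must ensure the field-size hypothesis $\NumVars \Degree'/|\Field| < \RMDist$ continues to hold for this inflated degree $\Degree'$---this is guaranteed by the overall parameter regime of \cref{lem:rob-zk-pcp-sumcheck}.
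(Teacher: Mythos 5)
Your proof follows essentially the same approach as the paper's: both reduce to the non-ZK robustness lemma (\cref{lem:robust-sumcheck-tech}) applied to the effective summand $F+R$, and then translate the resulting bound on $\Verifier_\Sigma$'s view to $\Verifier_1$'s view, losing only a constant factor from the observation that each symbol of $(F+R)$ depends on a bounded number of symbols of $(F,\Proof_P)$. Your translation step argues ``at most one non-ZK position per PZK change (generically)'' rather than the paper's safer ``at most two''; both give a constant factor, but the paper's bound holds unconditionally whereas yours needs the caveat about collisions between the lines $(\VerRandVec,\cdot)$ and $(\cdot,\revvec{\VerRand})$, which over a finite field should be handled by a crude bound rather than dismissed.

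The main organizational divergence is your treatment of the case where $\MalPolyProof$ is $\LDTProx$-far from $\ReedMuller^{\NumVars+1}$. You push this to the proof of \cref{lem:rob-zk-pcp-sumcheck}, whereas the paper handles it \emph{inside} the proof of \cref{lem:pzk-sumcheck-tech}, by appealing to the robustness of the vector-valued LDT of \cref{step:bundled-ldt} together with the fact that the LDT queries constitute a constant fraction of the full verifier's view. Since the lemma as stated quantifies over \emph{all} proofs $(\MalSCProof,\MalPolyProof)$, you cannot simply exclude this case: without the LDT the bound genuinely fails, as you yourself point out. So to make your proof match the stated lemma you would need either to argue (as the paper does) that $\VAcc(\Verifier)$ here refers to the full verifier whose view includes the $\Proof_P$ low-degree test, or to restate the lemma with a hypothesis that $\MalPolyProof$ is $\LDTProx$-close to vector-valued Reed--Muller and account for the complementary case in \cref{lem:rob-zk-pcp-sumcheck}. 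Either route works, but as written your argument proves a strictly weaker statement than the one given.

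A second, more substantive difference: you apply \cref{lem:robust-sumcheck-tech} directly to the adversarial input $F+R$, first establishing that $F+R$ is far from $\SCLang$ and close to Reed--Muller via the agreement of $R$ with the polynomial mask $\hat R$ on a $1-2\LDTProx$ fraction of points. The paper instead applies the lemma to the ``corrected'' input built from $\ProofLD$, bounds the distance of the corrected view $\LDView$ to accepting, and then transfers the bound to $\HonView$ via a reverse-triangle-inequality step. Your version is arguably more direct and avoids the book-keeping of two separate views, at the cost of having to verify the lemma's hypotheses for $F+R$ explicitly. The degree-inflation concern you flag at the end---that the summand $F+R$ has total degree $\Theta(\NumVars\Degree)$ rather than $\Degree$, so the condition $\NumVars\Degree'/|\Field| < \RMDist$ needs to be re-checked with the inflated degree---is a genuine subtlety that applies equally to both arguments; do not assume ``the overall parameter regime of \cref{lem:rob-zk-pcp-sumcheck} guarantees'' this, since that lemma's hypothesis only provides $\NumVars\Degree/|\Field| < \Proximity$, not $\NumVars^2\Degree/|\Field| < \Proximity$. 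This needs to be addressed by an explicit choice of field size or by tracking individual rather than total degree.
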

Here we view $(\MalSCProof(\VerRandVec_{\NumVars-2}, \Query)_{\Query\in\Field},  \MalPolyProof(\VerRandVec, \Query)_{\Query\in \Field}, \MalPolyProof(\Query, \revvec{\VerRand})_{\Query\in \Field},\BadF(\VerRandVec,\Query)_{\Query\in \Field}) \in (\Field^{\NumVars+1})^{4\Field}$ as a string over the alphabet $\Field^{\NumVars+1}$, of length $4|\Field|$, via some natural embedding of $\Field$ into $\Field^{\NumVars+1}$. The metric $\HammingDist$ refers to Hamming distance over $\Field^{\NumVars}$.

\begin{proof}
    First, suppose that $\MalPolyProof$ is $\LDTProx$-far (where $\LDTProx \eqdef \frac{1}{8}\RMDist$, see \cref{step:bundled-ldt}), with respect to the alphabet $\Field^{\NumVars+1}$, from $\ReedMuller^{\NumVars+1}[\Field,\NumVars,\NumVars \cdot \Degree]$. In this case, then by the robustness of the vector-valued low-degree test (\cref{thm:bundled-ldt}), the view of the verifier in \cref{step:bundled-ldt} has expected distance $\Omega(\LDTProx) = \Omega(\RMDist)$ to any accepting view. Since this test forms a constant fraction of the overall view of the verifier, the expected distance of the view of the verifier from any accepting view would be $\Omega(\RMDist)$.

    Hence we can assume that $\MalPolyProof$ is $\LDTProx$-close, with respect to the alphabet $\Field^{\NumVars+1}$, to some vector of $\NumVars$-variate polynomials each of degree at most $\NumVars\Degree$, which we denote by $(\hat\RandPoly(\vec{X}), \hat\RandLDPoly_1(\vec{X}), \dots,\hat\RandLDPoly_\NumVars(\vec{X}))$. We denote the ``corrected low-degree'' proof by
    \begin{equation*}
        \ProofLD(\vec{x}) \eqdef (\hat\RandPoly(\vec{x}), \hat\RandLDPoly_1(\vec{x}), \dots, \hat\RandLDPoly_\NumVars(\vec{x}))
    \end{equation*}
    and note that by definition of $\ProofLD$ we have $\HammingDist(\ProofLD, \Proof_P) \leq \LDTProx$, where this distance is with respect to $\Field^{\NumVars+1}$. Lastly, we denote $\Verifier_1$'s view of the ``corrected'' proof $\ProofLD$ by 
    \begin{equation*}
        \LDView(\VerRandVec) \eqdef (\Proof_\Sigma(\VerRandVec_{\NumVars-2}, \Query)_{\Query\in\Field}, \hat{F}(\VerRandVec,\Query)_{\Query\in \Field},\\ \ProofLD(\VerRandVec, \Query)_{\Query\in\Field}, \ProofLD(\Query, \revvec{\VerRand})_{\alpha \in \Field})~.
    \end{equation*}

    We show next that the verifier's view of the corrected proof $\LDView(\VerRandVec)$ is far (in expectation) from any accepting view. We accomplish this step by appealing to the robustness of the sumcheck PCPP (\cref{construc:robust-sumcheck}), and by arguing that $\Verifier_1$ inherits this robustness. Then as the view of the actual proof $\HonView(\VerRandVec)$ is close (in expectation) to the view of the corrected proof $\LDView(\VerRandVec)$, $\HonView(\VerRandVec)$ must be far (in expectation) from any accepting view.
    
    By \cref{lem:robust-sumcheck-tech}, if $\Verifier_\Sigma$ has oracle access to the evaluation table of a function $G\colon \Field^\NumVars \to \Field$, which is $\SumDist$-far from $\SCLang[\Field, \NumVars, \Degree, \Subcube, \SumcheckTotal]$, but $\RMDist$-close to $\ReedMuller[\Field, \NumVars, \Degree]$, then for all proofs $\Proof_\Sigma$, we have
    \begin{equation}
        \label{eqn:exp-rob-sumcheck}\mathop{\mathbb{E}}_{\VerRandVec\gets \Field^{\NumVars-1}}\left[\HammingDist(\SCView(\VerRandVec), \VAcc(\Verifier_\Sigma))\right] \geq \frac{1}{2}\min\left(\RMDist, 1-4\RMDist\right) = \frac{1}{2}\RMDist~,
    \end{equation} 
    where the final equality holds since $\RMDist \leq 1/5$. Now consider the verifier $\Verifier_1$, which uses its access to $F$ and $\Proof_P$ to run $\Verifier_{\Sigma}$ on input oracle $F+R$ and proof $\Proof_\Sigma$. Each location of $(F, \Proof_P)$ is used to compute at most two locations of $F+R$. This means that if $\Verifier_\Sigma$'s view of $F + R$ is an absolute distance of $k$ away from accepting, then $\Verifier_1$'s view of $(F,\Proof_P)$ must be at least an absolute distance of $k/2$ away from accepting in the worst case.
    
    Combining this observation with the fact that the full view $\LDView(\VerRandVec)$ of $\Verifier_1$ is twice the length of the full view $\SCView(\VerRandVec)$ of $\Verifier_\Sigma$, we see that the expected relative distance degrades by at most a factor of $1/4$. Thus, for all proofs $\Proof_\Sigma$,
    \begin{equation*}
        \mathop{\mathbb{E}}_{\VerRandVec\gets \Field^{\NumVars-1}}\left[\HammingDist(\LDView(\VerRandVec), \VAcc(\Verifier_1))\right] \geq \frac{1}{8}\RMDist.
    \end{equation*}
   Next, we upper bound the expected distance of $\HonView(\VerRandVec)$ and $\LDView(\VerRandVec)$. Since $\HammingDist(\ProofLD, \Proof_P) \leq \LDTProx$, we have that 
    \begin{equation*}
        \mathop{\mathbb{E}}_{\VerRandVec\gets \Field^{\NumVars-1}}\left[\HammingDist(\ProofLD(\VerRandVec, \Query)_{\Query \in \Field}, \Proof_P(\VerRandVec, \Query)_{\Query\in \Field})\right] \leq \LDTProx = \frac{1}{8}\RMDist.
    \end{equation*}
    Since $\ProofLD(\VerRandVec, \alpha)_{\alpha \in \Field}$ and $\Proof_P(\VerRand,\alpha)_{\alpha \in \Field}$ are substrings of $\LDView(\VerRandVec)$ and $\HonView(\VerRandVec)$ (respectively) of half the length, and the other components of the two views are identical, it follows that
    \begin{equation*}
        \mathop{\mathbb{E}}_{\VerRandVec\gets \Field^{\NumVars-1}}\left[\HammingDist\left(\LDView(\VerRandVec), \HonView(\VerRandVec) \right)\right] \leq \frac{1}{16}\RMDist.
    \end{equation*}
    
    Finally, we put everything together to obtain the result:
    \begin{align*}
        \mathop{\mathbb{E}}_{\VerRandVec\gets \Field^{\NumVars-1}}\left[\HammingDist\left(\HonView(\VerRandVec), \VAcc(\Verifier_1)
        \right)\right] &\geq
        \mathop{\mathbb{E}}_{\VerRandVec\gets \Field^{\NumVars-1}}\left[\left|\HammingDist\left(\LDView(\VerRandVec), \VAcc(\Verifier_1) \right) - \HammingDist\left(\HonView(\VerRandVec),\LDView(\VerRandVec)
        \right)\right|\right]
        \\
        & \geq \left|\mathop{\mathbb{E}}_{\VerRandVec\gets \Field^{\NumVars-1}}\left[ \HammingDist\left (\LDView(\VerRandVec), \VAcc(\Verifier_1) \right) \right] -  \mathop{\mathbb{E}}_{\VerRandVec\gets \Field^{\NumVars-1}}\left[\HammingDist\left(\HonView(\VerRandVec),\LDView(\VerRandVec)
        \right) \right]
        \right|
        \\
        &\geq \frac{1}{8}\RMDist - \frac{1}{16}\RMDist
        \\
        &= \frac{1}{16}\RMDist,
    \end{align*}
    where the first inequality is a consequence of the reverse triangle inequality, and the second is due to linearity of expectation and the fact that the $\mathbb{E}|X| \geq |\mathbb{E}[X]|$ for any random variable $X$.
\end{proof}

We now prove \cref{lem:rob-zk-pcp-sumcheck}. We prove robustness by splitting the argument into two cases: one in which the input is far from even being low-degree (in which case the low-degree test guarantees robustness), and the other in which the input is close to low-degree, but far from summing to the correct value (in which case \cref{lem:pzk-sumcheck-tech} guarantees robustness). As alluded to earlier, we prove zero knowledge using the fact that this robust PCP is locally computable from the PZK-PCP constructed in \cite{GurOS2024}.

\begin{proof}[Proof of \cref{lem:rob-zk-pcp-sumcheck}]
    Completeness is straightforward. Here we argue robust soundness and perfect zero-knowledge.

    \parhead{Robust soundness} Suppose that $\BadF$ is $\Proximity$-far from  $\SCLang[\Field, \NumVars, \Degree, \Subcube, \SumcheckTotal]$. First, if $\BadF$ is $\RMDist$-far from $\ReedMuller[\Field, \NumVars, \Degree]$, then by the robustness of the low-degree test (\cref{thm:bundled-ldt}), the expected distance of the verifier's view of $\BadF$ to any accepting view is $\Omega(\RMDist) = \Omega(\Proximity)$. As the low-degree test makes $O(|\Field|)$ many queries to $\BadF$, and the verifier makes $O(|\Field|)$ many queries to both $\Proof_\Sigma$ and $\Proof_P$, this forms a constant fraction of the view of the verifier. Therefore the expected distance of the view of the verifier from any accepting view is also $\Omega(\Proximity)$.

    Second, if $\BadF$ is $\Proximity$-far from $\SCLang[\Field, \NumVars, \Degree, \Subcube, \SumcheckTotal]$, but $\RMDist$-close to $\ReedMuller[\Field, \NumVars, \Degree]$, then by \cref{lem:pzk-sumcheck-tech}, the view of the verifier (when ignoring the low-degree test) to any accepting view is $\Omega(\RMDist) = \Omega(\Proximity)$. Once again, this forms a constant fraction of the number of queries made by the verifier overall, so the expected distance of the view of the verifier from any accepting view is $\Omega(\Proximity)$. In other words, \cref{construc:robust-sumcheck-pzkpcp} has expected robustness $\Omega(\Proximity)$.

    Thus by \cref{prop:exp-rob-to-rob}, for an appropriate choice of constant, \cref{construc:robust-sumcheck-pzkpcp} has robustness $\Omega(\Proximity)$.

    \parhead{Zero knowledge} We show that \cref{construc:robust-sumcheck-pzkpcp} is perfect zero knowledge by showing that it is $(\NumVars+1)$-locally computable from the perfect zero-knowledge PCP described in Construction 8.9 of \cite{GurOS2024}. The proof generated by \cref{construc:robust-sumcheck-pzkpcp} is $(\NumVars+1)$-locally computable by the following function:
    \begin{equation*}
        f^{(\Proof'_{\Sigma}, \Proof'_{Q}, \Proof'_{T_{1}},\dots,\Proof'_{T_{\NumVars}})}(\Oracle, \vec{\Query}) = \begin{cases}
            (\Proof'_{Q}(\vec{\Query}), \Proof'_{T_{1}}(\vec{\Query}),\dots,\Proof'_{T_{\NumVars}}(\vec{\Query}))~&\text{if}~\Oracle = \Proof_P
            \\
            (\Proof'_{\Sigma}(\Query_{\NumVars-1}), \Proof'_{\Sigma}(\Query_1,\Query_{\NumVars-1}), \dots, \Proof'_{\Sigma}(\vec{\Query}_{i-1}, \Query_{\NumVars-1}), \dots, \Proof'_{\Sigma}(\vec{\Query}))~&\text{if}~\Oracle = \Proof_{\Sigma}
        \end{cases}
        ,
    \end{equation*}
    where $(\Proof'_{\Sigma}, \Proof'_{Q}, \Proof'_{T_{1}},\dots,\Proof'_{T_{\NumVars}})$ is a proof sampled by the prover of Construction 8.9 of \cite{GurOS2024}. Thus, by \cref{lem:locally-computable-implies-zk} (in this work), and Lemma 8.11 from \cite{GurOS2024}, \cref{construc:robust-sumcheck-pzkpcp} is perfect zero knowledge.
    
    \parhead{Efficiency} The proof length and verifier complexity are clear from construction. For every query that $\Verifier_\Sigma$ makes to its input, the verifier in \cref{construc:robust-sumcheck-pzkpcp} makes three queries to $F$ and $\Proof_P$. For every query $\Verifier_\Sigma$ makes to its proof, the verifier in \cref{construc:robust-sumcheck-pzkpcp} makes one query to $\Proof_\Sigma$. By \cref{lem:rob-zk-pcp-sumcheck}, $\Verifier_\Sigma$ makes $O(|\Field|)$ queries to its input and $|\Field|$ queries to $\Proof$, and the desired query complexity follows.
\end{proof}

\section{Robust PZK-PCPs for NP and NEXP}
\label{sec:nexp}

In this section we show how to build robust PZK-PCPs for NP and NEXP using a robust PZK-PCPP for sumcheck as a building block.

\begin{definition}[Oracle 3-SAT]
	Let $B \colon \{0,1\}^{r + 3s + 3} \to \{0,1\}$ be a 3-CNF. We say that $B$ is \emph{implicitly satisfiable} if there exists $A \in \Bits^s \to \Bits$ such that for all $z \in \Bits^r, b_1,b_2,b_3 \in \Bits^s$, $B(z,b_1,b_2,b_3,A(b_1),A(b_2),A(b_3)) = 1$. Let $\OSAT$ be the language of implicitly satisfiable 3-CNFs.
\end{definition}

\begin{theorem}[Cook-Levin]
	For any $T(n)$-time nondeterministic Turing machine $M$, $T = \Omega(n)$, there is a polynomial-time reduction $R_M$ such that for any input $x \in \{0,1\}^n$,
	\begin{equation*}
		R_M(x,T) \in \OSAT \Leftrightarrow \exists w, M(x,w) = 1~.
	\end{equation*}
	Moreover, $R_M(x,T)$ is a formula in $O(\log T(n))$ variables of size $\Poly(n,\log T(n))$.
\end{theorem}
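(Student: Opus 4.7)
The plan is to apply a succinct Cook--Levin reduction, exploiting the high uniformity of the Turing machine tableau. Given $M$ and input $x \in \Bits^n$, I would view an accepting computation of $M(x, \cdot)$ of length $T(n)$ as a tableau whose cells (encoding tape symbol, head position, and internal state information at each time step) are indexed by strings of length $s = O(\log T(n))$, and whose local consistency is enforced by a family of $\Poly(T(n))$ clauses each indexable by a string of length $r = O(\log T(n))$. The oracle $A \colon \Bits^s \to \Bits$ would then encode the tableau assignment; the theorem reduces to exhibiting a single $3$-CNF $B$ on $r + 3s + 3$ variables whose implicit satisfiability by $A$ coincides with the existence of an accepting computation of $M$ on $x$.

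To construct $B$, I would leverage the constant-size description of $M$'s transition function and the regular index arithmetic of tableau clauses. The index $z$ would be partitioned into a constant-size type tag (initialization, transition, or acceptance) together with arithmetic parameters such as time step $t$ and cell index $c$. For each clause type, the three cells constrained by the clause are simple bit-level functions of the parameters --- for instance, $(t, c)$, $(t, c+1)$, $(t+1, c)$ for a transition clause. The $3$-CNF $B$ would then consist, for each clause type, of bit-level equality tests between $(b_1, b_2, b_3)$ and the parameterised cell positions, together with a constant-size local satisfaction check on $(a_1, a_2, a_3)$, combined so that a failure of the equality test trivially satisfies the constraint. Each equality between two $s$-bit strings is expressible as $O(s)$ short clauses, and the local checks have constant size, so $B$ has $r + 3s + 3 = O(\log T(n))$ variables and $\Poly(n, \log T(n))$ clauses (the $n$ dependence arising from the need to hardcode the bits of $x$ and the transition table of $M$ into the initialisation and transition templates). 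Polynomial running time of the reduction $R_M$ follows since it need only write out these clauses.

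The equivalence $R_M(x,T) \in \OSAT \Leftrightarrow \exists w, M(x,w) = 1$ would then follow from the usual Cook--Levin correspondence: any $A$ satisfying $B$ on all $(z, b_1, b_2, b_3)$ induces a consistent accepting tableau (and hence a valid nondeterministic witness $w$), and conversely any accepting run of $M(x, \cdot)$ gives rise to a tableau $A$ that satisfies every clause of $B$ by design.

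The main obstacle I anticipate is combining the equality ``guard'' and the local check into a genuine $3$-CNF without auxiliary variables beyond the $r + 3s + 3$ allowed by the definition of $\OSAT$: a naive guarded clause ``if the positions match then the check holds'' is a long disjunction whose conversion to $3$-CNF ordinarily demands Tseitin-style auxiliary wires. I would resolve this by the standard succinct Cook--Levin trick of absorbing these auxiliary wires into the oracle $A$ itself --- enlarging $s$ by $O(\log n + \log \log T)$ additional bits so that $A$ additionally commits to the intermediate wire values of each guarded check --- and then having $B$ verify these commitments through structured queries whose indices $(b_1, b_2, b_3)$ are specific arithmetic functions of $z$. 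This bookkeeping is the delicate but well-understood step underlying the succinct Cook--Levin constructions of \cite{BabaiFLS91}, and with it in place the final parameters fall out directly.
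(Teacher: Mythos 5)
The paper states this theorem without proof; it is a known result deferred to the succinct Cook--Levin reduction of \cite{BabaiFLS91}. Your sketch correctly captures the standard ingredients of that reduction: encoding the $T(n)\times T(n)$ nondeterministic tableau into an oracle $A$ on $O(\log T)$-bit addresses, indexing the $\Poly(T)$ local consistency constraints by strings $z$ of length $r = O(\log T)$ with bit-level address arithmetic determining $(b_1,b_2,b_3)$ from $z$, and reducing each guarded local check to genuine $3$-clauses. You also correctly identify the one delicate step: the ``position guard'' for each constraint is a wide disjunction of $\Theta(\log T)$ bit-comparisons, and its Tseitin-style conversion to $3$-CNF requires auxiliary wire variables, which must be absorbed into the oracle $A$ (and whose correctness $B$ must itself check clause-by-clause for the appropriate $z$ ranges) since $\OSAT$ admits no free auxiliary variables outside $A$. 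This is the same route the standard reference takes.

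Two minor points worth flagging. First, your count of the extra address bits is off: the auxiliary wires are instance-specific (one chain per constraint index $z$, each of length $\Theta(\log T)$), so indexing them costs $\Theta(r + \log\log T) = \Theta(\log T)$ additional bits, not $O(\log n + \log\log T)$; this does not change the asymptotic $s = O(\log T)$. Second, your description leaves implicit the point that $B$ must also check consistency of the Tseitin wire values themselves via dedicated $z$-indices, since a single application of $B$ sees only three oracle values $(a_1,a_2,a_3)$ and cannot traverse an entire wire chain in one shot; the conjunction over all $z \in \Bits^r$ is what makes the chain-by-chain verification go through. Spelling out this bookkeeping is exactly what makes the BFLS-style construction work, and your proposal correctly signals where the difficulty lies even if it does not carry out the bookkeeping in full.
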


\begin{theorem}
\label{thm:robust-pzk-pcp-np-nexp}
	The following inclusions hold:
	\begin{enumerate}[label=(\roman*)]
		\item \label{item:pzk-pcp-np} for any $\QueryBound \leq 2^{\Poly(n)}$, $\NP \subseteq \RobustPZKPCP_{\QueryBound, \Alphabet(n)}[\log n + \log \QueryBound, \Poly(\log n + \log \QueryBound)]$, where $|\Alphabet(n)| = \Poly(n, \QueryBound)$, and
		\item \label{item:pzk-pcp-nexp} for any $\QueryBound \leq 2^{\Poly(n)}$, $\NEXP \subseteq \RobustPZKPCP_{\QueryBound, \Alphabet(n)}[\Poly(n)+ \log \QueryBound, \Poly(n)]$, where $|\Alphabet(n)| = \Poly(2^n, \QueryBound)$.
	\end{enumerate}
\end{theorem}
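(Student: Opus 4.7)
The plan is to implement the strategy outlined in \cref{sec:tech}: follow the \cite{BabaiFLS91} PCP for $\OSAT$, replace the multilinear extension of the witness with a \cite{ChiesaFGS22} sumcheck commitment to hide the witness, linearize the satisfiability check into a single large sumcheck instance, and prove that claim using the robust PZK-PCPP of \cref{lem:rob-zk-pcp-sumcheck}. By Cook--Levin, reduce the input problem to $\OSAT$, obtaining a 3-CNF $B$ on $r+3s+3$ variables with $r,s = O(\log n)$ for NP or $r,s = \Poly(n)$ for NEXP. Fix a finite field $\Field$ of odd characteristic (with size chosen so that \cref{lem:rob-zk-pcp-sumcheck} yields the correct parameters), and set $k = \Theta(\log \QueryBound)$. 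Given a witness $A \colon \Bits^s \to \Bits$, the honest prover samples a uniformly random polynomial $C \in \Polys{\Field}{\vec d}{s+k}$ of individual degree $\geq 2$ in each of the last $k$ variables subject to the linear constraints $\sum_{\vec c \in \Bits^k} C(\vec x,\vec c) = \hat A(\vec x)$ for all $\vec x \in \Bits^s$, where $\hat A$ is the multilinear extension of $A$. By \cite{ChiesaFGS22}, $C$ perfectly hides $\hat A$ against any set of fewer than $2^k$ queries.

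By the linearisation identities in \cref{sec:tech}, implicit satisfiability of $B$ by $A$ is equivalent to the sumcheck claim
\begin{equation*}
    \sum_{\substack{\vec z \in \Bits^r,\, \vec b_i \in \Bits^s \\ \vec a \in \Bits^3,\, \vec c_i \in \Bits^k}} \hat B(\vec z,\vec b_1,\vec b_2,\vec b_3,\vec a) \prod_{i=1}^3 \left(C(\vec b_i,\vec c_i) - \tfrac{1-a_i}{2^k}\right) = 2^{r+3s},
\end{equation*}
an instance of $\SCLang$ on $N = r+3s+3+3k$ variables with total degree polynomial in $|B|$. The PCP consists of the evaluation table of $C$ together with the robust PZK-PCPP $\Proof_\Sigma$ from \cref{lem:rob-zk-pcp-sumcheck} for this claim. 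The verifier emulates the sumcheck verifier, answering each query to the summand $F$ with three queries to $C$ and an explicit evaluation of $\hat B$, and additionally runs a robust vector-valued low-degree test (\cref{thm:bundled-ldt}) on $C$. Robustness follows by a case split as in the proof of \cref{lem:rob-zk-pcp-sumcheck}: when $C$ is far from low-degree the LDT rejects robustly, and otherwise the well-defined summand $F$ is $\Omega(1)$-far from $\SCLang$, so $\Proof_\Sigma$ rejects robustly. Bundling the alphabet so that the three subtests each contribute a constant fraction of the verifier's view yields expected robustness $\Omega(1)$, hence constant robust soundness by \cref{prop:exp-rob-to-rob}.

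Zero knowledge is handled via local computability: the composite prover is $O(1)$-locally computable from the prover that outputs $(C,\Proof_\Sigma)$ (every symbol of the PCP is literally a symbol of $C$ or of $\Proof_\Sigma$), so by \cref{lem:locally-computable-implies-zk} it suffices to simulate $(C,\Proof_\Sigma)$. The inner simulator guaranteed by \cref{lem:rob-zk-pcp-sumcheck} makes $\Poly(|\Field|,N)$ queries to $F$, each answered using three queries to $C$ and an explicit evaluation of $\hat B$. We simulate queries to $C$ by lazily sampling $C$ as a uniformly random polynomial of the prescribed shape, using the algorithm of \cite{BenSassonCFGRS17} adapted to the linear constraints on $C$ as in \cite{ChiesaFGS22}. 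Since $C$ perfectly hides $\hat A$ against $<2^k$ queries and $k = \Theta(\log \QueryBound)$, any $\QueryBound$-query-bounded adversary is perfectly simulable in time $\Poly(n,\QueryBound)$. Efficiency of the honest NP prover follows from Gaussian elimination on the linear system defining $C$.

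The main obstacle is the parameter balancing: \cref{lem:rob-zk-pcp-sumcheck} has randomness $(N-1)\log|\Field|$ and query complexity $O(|\Field|)$, while the theorem targets randomness $\log n + \log \QueryBound$ (resp.\ $\Poly(n) + \log \QueryBound$) and query complexity $\Poly(\log n + \log \QueryBound)$ (resp.\ $\Poly(n)$) with alphabet $\Poly(n,\QueryBound)$ (resp.\ $\Poly(2^n,\QueryBound)$). For NEXP this matches straightforwardly by taking $|\Field| = \Poly(2^n, \QueryBound)$ and $N = \Poly(n)$. For NP, the required compression of $(N-1)\log|\Field|$ down to $O(\log n + \log \QueryBound)$ requires either a constant-variate sumcheck over a field of size $\Poly(n,\QueryBound)$, or a standard BFLS-style trick that packs several boolean variables into each field variable; verifying that such a compression preserves both the hiding structure of $C$ and the robustness guarantee of \cref{lem:rob-zk-pcp-sumcheck} is the delicate step of the proof.
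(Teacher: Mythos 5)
Your high-level plan matches the one sketched in \cref{sec:tech} (commit to the witness via a sumcheck commitment, linearise, invoke the robust PZK-PCPP for $\SCLang$ from \cref{lem:rob-zk-pcp-sumcheck}, and LDT the commitment), but there is a genuine soundness gap in your reduction, and the ``delicate step'' you flag at the end is where the paper's proof actually departs from the introductory sketch.

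\textbf{The counting-form claim is unsound.} You reduce $\OSAT$ to the single claim $\sum_{\vec z,\vec b,\vec a,\vec c}\hat B(\vec z,\vec b,\vec a)\prod_i\bigl(C(\vec b_i,\vec c_i)-\tfrac{1-a_i}{2^k}\bigr)=2^{r+3s}$. The PZK-PCPP and the LDT on $C$ only enforce that the committed $\hat C$ (hence $\hat A:=\sum_{\vec c}\hat C(\cdot,\vec c)$) is low-degree and that the sum is as claimed; they do \emph{not} enforce that $\hat A$ takes boolean values on $\Bits^s$. Once $\hat A$ may be non-boolean, the identity $\sum_{\vec a\in\Bits^3}\hat B(\cdots,\vec a)\prod_i(\hat A(b_i)-(1-a_i)) = \hat B(\cdots,\hat A(b_1),\hat A(b_2),\hat A(b_3))$ breaks, and the constraint ``sum $=2^{r+3s}$'' is a single polynomial equation of degree $3$ in the exponentially many free values $\{\hat A(b)\}_{b\in\Bits^s}$, which generically has solutions over $\Field$ regardless of whether $B\in\OSAT$. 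The paper avoids this entirely: it reduces to the \emph{vanishing} claim that $g_{\hat A}$ is identically zero on the boolean cube, and proves (\cref{claim:osat-lde}) that this is equivalent to $B\in\OSAT$ for \emph{arbitrary} polynomials $\hat A$ (the proof rounds an arbitrary $\hat A$ to a boolean assignment). The verifier then samples a uniform $\vec\tau\in\Field^{m_1+3m_2+3}$ and checks that the Lagrange-interpolated polynomial $P(\vec X)=\sum_{z,\vec b,\vec a}\LagrangePoly{H^{m_1+3m_2}\times\Bits^3,(z,\vec b,\vec a)}(\vec X)\sum_{\vec c}h_{\hat C}(\vec X,\vec c)$ evaluates to $0$ at $\vec\tau$, which is sound by Schwartz--Zippel and is precisely the sumcheck claim \eqref{eqn:sc-claim}. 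This forces a \emph{family} of sumcheck proofs $(\Proof_{\vec\tau})_{\vec\tau}$, one per $\vec\tau$, rather than the single $\Proof_\Sigma$ in your construction.

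\textbf{The NP parameters require the BFLS packing you defer.} As you correctly observe, over odd characteristic with variables drawn from $\Bits$ the sumcheck would have $\Theta(\log n+\log\QueryBound)$ variables, and \cref{lem:rob-zk-pcp-sumcheck} spends $(\NumVars-1)\log|\Field|$ randomness, blowing the $O(\log n+\log\QueryBound)$ budget. The paper takes the packing route: a characteristic-$2$ field $\Field$ with a subfield $H$ of size $\Theta(\log T)$, so that $\NumVars_1,\NumVars_2 = O(\log T/\log\log T)$ and $\ZKVars$ only needs $\ZKVars>\log\tilde q/\log|H|$. In characteristic $2$ one cannot divide by $|H|^k$, which is why the paper's summand $h_{\hat C}$ uses a Lagrange polynomial $\LagrangePoly{H^k,0^k}(c_i)\cdot(a_i-1)$ in place of $\frac{1-a_i}{2^k}$ (see the footnote before \eqref{eq:sum-g-h}). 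So the odd-characteristic choice in your proposal and the packing are mutually incompatible, and carrying out the packing is not merely a matter of ``verifying it preserves hiding and robustness'': it changes both the summand and the hiding lemma to the $H^\ZKVars$ version (\cref{lemma:sum-indep}) rather than the $\Bits^k$ version. The NEXP case tolerates more slack, but the paper still uses the same $H$-based construction in both parts so the argument is uniform.

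The remaining ingredients in your proposal (LDT on $\Proof_C$ with proximity $\Theta(1)$, balancing the three subtests so each is a constant fraction of the view, lazy simulation of $\hat C$ via \cite{BenSassonCFGRS17}, hiding via \cite{ChiesaFS17,ChiesaFGS22}, and local-computability to invoke \cref{lem:locally-computable-implies-zk}) do match the paper's proof and would go through once the two issues above are fixed.
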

\begin{proof}
	Fix a language $\Language \in \NTIME(T)$, where for \ref{item:pzk-pcp-np}, $T = \Poly(n)$, and for \ref{item:pzk-pcp-nexp}, $T = 2^{\Poly(n)}$. Let $\Field$ be a field of characteristic $2$ of size to be fixed, $H$ a subfield of $\Field$ of size $\Theta(\log T)$. We will fix an arithmetisation $\hat B$ of $1-B$ over $\Field$; i.e., for all $c \in \Bits^{r + 3s + 3}$, $\hat{B}(c) = 1-B(c)$. The PCP constructions which prove \ref{item:pzk-pcp-np} and \ref{item:pzk-pcp-nexp} differ only in the choice of $\hat{B}$. For \ref{item:pzk-pcp-np}, note that since $r + 3s + 3 = O(\log n)$, the evaluation table of $B$ is computable in polynomial time. Hence we choose $\hat{B}$ to be the unique multilinear extension of $\hat{B}$; this can be evaluated in polynomial time, and the total degree $d_B$ of $\hat{B}$ is $O(\log n)$. For \ref{item:pzk-pcp-nexp}, we will take $\hat B$ to be the function computed by the arithmetic formula corresponding to the boolean formula $B$; this has total degree $d_B = O(|B|) = \Poly(n)$. In both cases we choose $\Field$ so that $d_B \ll |\Field| = \Poly(\log T)$.
	
	Let $m_1 \eqdef r/\log |H|$, $m_2 \eqdef s/\log |H|$. Let $\gamma_1 \colon H^{m_1} \to \Bits^r$, $\gamma_2 \eqdef H^{m_2} \to \Bits^s$ be the lexicographic orderings of the elements of $H^{m_1},H^{m_2}$ respectively. By \cite[Claim 4.2]{GoldwasserKR15}, for $i \in [2]$ the unique minimal-degree extension $\hat\gamma_i$ of $\gamma_i$ has individual degree $|H|-1$ and can be evaluated in time $\Poly(|H|,m_i,\log |\Field|)$.
	
	For a polynomial $\hat A \colon \Field^{m_2} \to \Field$, we define $g_{\hat{A}} : \Field^{m_1+3m_2+3} \to \Field$:
	\begin{equation*}
		g_{\hat{A}}(z,b_1,b_2,b_3,a_1,a_2,a_3) \eqdef \hat{B}(\gamma_1(z),\gamma_2(b_1),\gamma_2(b_2),\gamma_2(b_3),a_1,a_2,a_3) \cdot \prod_{i=1}^{3} (\hat{A}(b_i)+a_i-1)~.
	\end{equation*}
	Next, for a polynomial $\hat C \colon \Field^{m_2 + \ZKVars} \to \Field$, we define\footnote{Here, since $H$ is a subfield of $\Field$, we cannot divide $a_i-1$ by $|H|^k$ as suggested in \cref{sec:tech}. Instead we use a Lagrange polynomial so that the term $a_i-1$ appears only once in the sum over $H^k$.} $h_{\hat{C}} : \Field^{m_1+3m_2+3+3\ZKVars} \to \Field$:
	\begin{align*}
		h_{\hat{C}}(z,b_1,b_2,b_3,a_1,a_2,a_3,c_1,c_2,c_3) \eqdef \hat{B}(\gamma_1(z),\gamma_2(b_1),\gamma_2(b_2),&\gamma_2(b_3),a_1,a_2,a_3) \\ &\cdot \prod_{i=1}^{3} \left(\hat{C}(b_i,c_i)+\LagrangePoly{H^k,0^k}(c_i) \cdot (a_i-1)\right)~.
	\end{align*}
	Observe that $h_{\hat{C}}$ is a polynomial of total degree $\Degree = O((d_B+k) \cdot |H| + \deg(C))$ in $O(k + \log T/\log \log T)$ variables, and that if $\hat{A}(X) \eqdef \sum_{c \in H^\ZKVars} \hat{C}(X,c)$,
	
	\begin{equation}
		\label{eq:sum-g-h}
		\sum_{c_1,c_2,c_3 \in H^\ZKVars} h_{\hat{C}}(z,b_1,b_2,b_3,a_1,a_2,a_3,c_1,c_2,c_3) \equiv g_{\hat A}(z,b_1,b_2,b_3,a_1,a_2,a_3).
	\end{equation}
	
	We are now ready to specify the construction. First, we choose the parameter $k$. In the construction, $\deg(C)$ will be $O((\log T + k)|H|)$. Denote by $\Simulator'$ the simulator guaranteed by \cref{lem:rob-zk-pcp-sumcheck}. Let $\tilde{q}$ be a polynomial such that, for all $((\Field, \NumVars, \Degree, \Subcube, \gamma), F) \in \SCLang$, and for all PCPP verifiers $\MalVerifier$ that make $\QueryBound$ queries to the input $F$ and the proof, $\Simulator'$ makes $p(\QueryBound, \NumVars, \Degree, |H|, \log |\Field|)$ queries to $F$. Define $\tilde q \eqdef \tilde{q}(\QueryBound, \NumVars_1 + 3 \NumVars_2 + 3 + \ZKVars, \Degree, |H|, \log |\Field|) = \Poly(\QueryBound, \log T, k)$. Choose $\ZKVars = O(\log q^* + \log \log T)$ large enough such that $\ZKVars > \log \tilde q / \log |H|$.
	
	\begin{mdframed}[nobreak=true]
		\begin{construction}
            \label{cons:pzkpcp}
			A PZK-PCP for $\OSAT$.
			
			\noindent\textbf{Proof:}
			\begin{enumerate}
				\item \label{step:pzkpcp-prover-c} Let $A \colon \Bits^s \to \Bits$ be a satisfying assignment for $B$. Choose a uniformly random polynomial $\hat{C} \colon \Field^{m_2 + \ZKVars}$ of individual degree $2(|H|-1)$ such that for all $b \in H^{m_2}$, $\sum_{c \in H^\ZKVars} \hat C(b,c) = A(\gamma_2(b))$. Compute the full evaluation table $\Proof_C$ of $\hat C$.
				\item For each $\vec \tau \in \Field^{m_1+3m_2+3}$, let $\Proof_{\vec \tau}$ be a PZK-PCPP (\cref{lem:rob-zk-pcp-sumcheck}) for the claim
				\begin{equation}
                        \label{eqn:sc-claim}
					\sum_{\substack{z \in H^{m_1} \\ b_1,b_2,b_3 \in H^{m_2}}} \sum_{a_1,a_2,a_3 \in \Bits} \sum_{c_1,c_2,c_3 \in H^{\ZKVars}} \LagrangePoly{H^{m_1+3m_2} \times \Bits^3,(z,\vec{b},\vec{a})}(\vec\tau) \cdot h_{\hat C}(\vec\tau,c_1,c_2,c_3) = 0~.
				\end{equation}
                    \item Output $(\Proof_C, (\Proof_{\vec\tau})_{\vec\tau \in \Field^{\NumVars_1+3\NumVars_2+3}})$.
			\end{enumerate}
			
			\noindent\textbf{Verifier:}
			\begin{enumerate}
				\item \label{step:pzk-pcp-v-ldt} Perform a low total degree test (\cref{thm:bundled-ldt}) on $\Pi_C$ with $\varepsilon = 1/100$ and $\delta = 1/2$. If the test rejects, then reject.
				\item Choose $\vec\tau = (\zeta,\nu_1,\nu_2,\nu_3,\xi) \in \Field^{m_1} \times (\Field^{m_2})^3 \times \Field^{3}$ uniformly at random.
				\item \label{step:pzk-pcp-v-sc} Run the verifier for the sumcheck PZK-PCPP on $\Proof_{\vec\tau}$, yielding a claim 
					\begin{equation}
						\label{eq:zkpcp-sc-claim}
					\LagrangePoly{H^{m_1+3m_2} \times \Bits^3,\beta}(\vec\tau) \cdot h_{\hat C}(\vec\tau,\eta_1,\eta_2,\eta_3) = \gamma
					\end{equation}
					for some $\beta \in \Field^{m_1+3m_2+3}, \eta_1,\eta_2,\eta_3 \in \Field^{3\ZKVars}$, $\gamma \in \Field$. Query $\hat{C}$ at $(\nu_i,\eta_i)$ for $i \in \{1,2,3\}$ in order to compute $h_{\vec C}(\vec\tau,\eta_1,\eta_2,\eta_3)$. Accept if \eqref{eq:zkpcp-sc-claim} is true, else reject.
			\end{enumerate}

                \noindent We ``balance'' the verifier's queries by repeating \cref{step:pzk-pcp-v-ldt,step:pzk-pcp-v-sc} sufficiently many times so that each test accounts for at least $1/3$ of the verifier's view.
		\end{construction}
	\end{mdframed}

        \parhead{Proof length, alphabet size and query complexity}
        The proof length is dominated by the sumcheck proofs $\Proof_{\vec \tau}$; each such proof is of size $|\Field|^{O(\NumVars_1+\NumVars_2+\ZKVars+3)} = \Poly(T(n),\QueryBound)$, and there are $|\Field|^{O(\NumVars_1+\NumVars_2+3)} = \Poly(T(n))$ of them. The proof alphabet is $\Field^{\NumVars_1 + \NumVars_2 + k + 4}$, so the alphabet is of size $|\Field|^{O(\NumVars_1+\NumVars_2+\ZKVars+3)} = \Poly(T(n),\QueryBound)$. The query complexity is $\Poly(\log T(n), \log \QueryBound)$.
	
	\parhead{Completeness and robust soundness}
	Completeness and soundness rely on the following key observation.
	
	\begin{claim}
		\label{claim:osat-lde}
		The following are equivalent:
		\begin{enumerate}[label=(\roman*),noitemsep]
			\item \label{item:b-in-o3sat} $B \in \OSAT$;
			\item \label{item:a-sat-g} there exists a polynomial $\hat A \colon \Field^\NumVars \to \Field$ such that for all $z \in H^{m_1},b_1,b_2,b_3 \in H^{m_2},a_1,a_2,a_3 \in \Bits$, $g_{\hat{A}}(z,b_1,b_2,b_3,a_1,a_2,a_3) = 0$;
			\item \label{item:c-sat-h} there exists a polynomial $\hat C \colon \Field^{\NumVars_2+\ZKVars} \to \Field$ such that for all $z \in H^{m_1},b_1,b_2,b_3 \in H^{m_2},a_1,a_2,a_3 \in \Bits$, $\sum_{c_1,c_2,c_3 \in H^{\ZKVars}} h_{\hat{C}}(z,b_1,b_2,b_3,a_1,a_2,a_3,c_1,c_2,c_3) = 0$.
		\end{enumerate}

	\end{claim}
	\begin{proof}
		The equivalence of \ref{item:c-sat-h} and \ref{item:a-sat-g} is a direct consequence of \cref{eq:sum-g-h}.  It remains to prove that \ref{item:b-in-o3sat} and \ref{item:a-sat-g} are equivalent.
		
		Note first that for any $z \in \Bits^r, b_1,b_2,b_3 \in \Bits^s, a_1,a_2,a_3 \in \Bits$, $g_{\hat{A}}(z,b_1,b_2,b_3,a_1,a_2,a_3) = 0$ if and only if either $B(z,b_1,b_2,b_3,a_1,a_2,a_3) = 1$  (since $\hat B$ extends $1-B$) or, for some $i \in [3]$, $\hat{A}(b_i) = 1 - a_i$.
		
		\parhead{\ref{item:b-in-o3sat} $\Rightarrow$ \ref{item:a-sat-g}}
		Suppose that $B \in \OSAT$, and let $\hat{A}$ be the multilinear extension of a satisfying assignment $A$, and fix $z,b_1,b_2,b_3,a_1,a_2,a_3$. If $B(z,b_1,b_2,b_3,a_1,a_2,a_3) = 1$ then we are done, so suppose not. Then since $A$ is a satisfying assignment, there exists $i \in [3]$ such that $\hat{A}(b_i) = A(b_i) \neq a_i$, so $\hat{A}(b_i) = 1-a_i$ since $A(b_i) \in \{0,1\}$.
		
		\parhead{\ref{item:a-sat-g} $\Rightarrow$ \ref{item:b-in-o3sat}}
		Fix some polynomial $\hat A$ such that $g_{\hat A}$ is zero on $\Bits^{r+3s+3}$. Let $A$ be the assignment given by, for each $b \in \Bits^s$,
		\[
		A(b) = \begin{cases}
			\hat{A}(b) & \text{if $\hat{A}(b) \in \Bits$, or} \\
			0 & \text{otherwise.}
		\end{cases}
		\]
		Fix $z \in \Bits^r, b_1,b_2,b_3 \in \Bits^s$; we show that $B(z,b_1,b_2,b_3,A(b_1),A(b_2),A(b_3)) = 1$. Indeed, let $a_1,a_2,a_3 \in \Bits$ be such that $B(z,b_1,b_2,b_3,a_1,a_2,a_3) = 0$; if no such assignment exists then we are already done. Since $B(z,b_1,b_2,b_3,a_1,a_2,a_3) = 0$, it must be that there exists $i \in [3]$ such that $\hat{A}(b_i) = 1 - a_i$; and since $1-a_i \in \{0,1\}$, $A(b_i) = \hat{A}(b_i)$. Hence $A(b_i) \neq a_i$. Thus if $A(b_i) = a_i$ for all $i$, then $B(z,b_1,b_2,b_3,a_1,a_2,a_3) = 1$, from which the implication follows.
	\end{proof}
	
	For completeness, it suffices to observe that the claim \eqref{eqn:sc-claim} is true for $\hat{C}$ as in \cref{claim:osat-lde}. We proceed to show robust soundness. Suppose that $B \notin \OSAT$.
	
	Let $\varepsilon \eqdef 1/100$. First, suppose that $\Proof_C$ is $\varepsilon$-far from $\ReedMuller[\Field,\NumVars_2+\ZKVars,(\NumVars_2+\ZKVars)\Degree]$. Then the verifier's view in the low-degree test is on average $\Omega(\varepsilon)$-far from accepting by \cref{thm:bundled-ldt}.
	
	Otherwise, let $\hat{C} \in \ReedMuller[\Field,\NumVars_2+\ZKVars]$ be the unique closest codeword to $\Proof_C$; then $\Distance(\hat{C},\Proof_C) \leq \varepsilon$. It follows that $\Distance(h_{\hat{C}}, h_{\Proof_C}) \leq 3\varepsilon$, since a uniformly random evaluation of $h_{\hat{C}}$ depends on three uniformly random evaluations of $\hat{C}$.

	Since $B \notin \OSAT$, by \cref{claim:osat-lde} there exists $z \in H^{\NumVars_1}, b_1,b_2,b_3 \in H^{\NumVars_2}, a_1,a_2,a_3 \in \{0,1\}$ such that $\sum_{c_1,c_2,c_3 \in H^{\ZKVars}} h_{\hat{C}}(z,b_1,b_2,b_3,a_1,a_2,a_3,c_1,c_2,c_3) \neq 0$. It follows that
				\begin{equation*}
					\sum_{\substack{z \in H^{m_1} \\ b_1,b_2,b_3 \in H^{m_2}}} \sum_{a_1,a_2,a_3 \in \Bits} \LagrangePoly{H^{m_1+3m_2} \times \Bits^3,(z,\vec{b},\vec{a})}(\vec X) \cdot \sum_{c_1,c_2,c_3 \in H^{\ZKVars}} h_{\hat C}(\vec X,c_1,c_2,c_3)
				\end{equation*}
		is a nonzero polynomial in $\NumVars_1+\NumVars_2+3$ variables $\vec{X}$ of individual degree $O(d_B \cdot |H|)$. Thus, with probability $1-O((\NumVars_1+\NumVars_2+3)d_B \cdot |H|/|\Field|)$ over the choice of $\vec{\tau}$, the claim in \cref{eq:zkpcp-sc-claim} is false. By the expected robustness guarantee of the PZK-PCPP (\cref{lem:rob-zk-pcp-sumcheck}), the verifier's view of the sumcheck PCP in this case is $\Omega(\varepsilon)$-far on average from an accepting view. By a union bound, the overall expected robustness is $\Omega(\varepsilon)$.
	
	\parhead{Zero knowledge}
        Our simulator makes use of an algorithm due to Ben-Sasson et al. \cite{BenSassonCFGRS17} for efficiently lazily sampling random multivariate polynomials.

\newcommand{\PolySimAlgorithm}{\mathsf{PolySim}}
        \begin{lemma}[\cite{BenSassonCFGRS17}, Corollary 4.10]
\label{lemma:polysim}
There exists a probabilistic algorithm $\PolySimAlgorithm$ such that, for every finite field $\Field$, $\NumVars,\Degree \in \N$, set $S = \{(\alpha_{1},\beta_{1}), \dots, (\alpha_{\ell}, \beta_{\ell})\} \subseteq \Field^{\NumVars} \times \Field$, and $(\alpha,\beta) \in \Field^{\NumVars} \times \Field$,
\begin{equation*}
\Pr\Big[
\PolySimAlgorithm(\Field,\NumVars,\Degree,S,\alpha) = \beta
\Big]
=
\Pr_{\RandPoly \gets \Field^{\leq \Degree}[X_1,\ldots,X_\NumVars]}
\left[
\RandPoly(\alpha) = \beta
\pST
\begin{array}{c}
\RandPoly(\alpha_{1}) = \beta_{1} \\
\vdots \\
\RandPoly(\alpha_{\ell}) = \beta_{\ell}
\end{array}
\right]\enspace.
\end{equation*}
Moreover $\PolySimAlgorithm$ runs in time $\Poly(\log |\Field|, \NumVars, \Degree, \ell)$.
\end{lemma}

        We describe the construction of the simulator below, and then prove that its output is distributed identically to $\View_{\MalVerifier,\Prover}$.
        \begin{mdframed}
            \begin{construction}
                A simulator $\Simulator$ for \cref{cons:pzkpcp}.

                \noindent On query $\vec \Query$ to oracle $\Oracle$:
                \begin{enumerate}
                    \item \label{item:sim-c} If $\Oracle = \Proof_C$, sample $\beta \gets \PolySimAlgorithm(\Field,\NumVars+\ZKVars,\Degree,S,\vec{q})$, where $S$ is the set of prior queries to $\Proof_C$ and their answers, and respond with $\beta$.
                    
                    \item If $\Oracle = \Proof_{\vec \tau}$ for some $\vec \tau \in \Field^{\NumVars_1+3\NumVars_2+3}$:
                    \begin{enumerate}[nolistsep]
                    	\item If this is the first query to $\Proof_{\vec \tau}$, start a new instance $\Simulator'_{\vec \tau}$ of the  sumcheck PZK-PCPP simulator (\cref{lem:rob-zk-pcp-sumcheck}).
                    	\item Answer the query using $\Simulator'_{\vec \tau}$. This sub-simulator may make queries to the summand polynomial \eqref{eqn:sc-claim}. Each such query can be efficiently answered by making $3$ queries to $\Proof_C$, which are answered as in \cref{item:sim-c}.
                    \end{enumerate}
                \end{enumerate}
            \end{construction}
        \end{mdframed}
 
 	\newcommand{\HybSim}{\widetilde{\Simulator}}
 
 	We prove perfect zero knowledge via a hybrid argument. Consider a hybrid oracle simulator $\HybSim$ which behaves as $\Simulator$ except that it answers queries to $\Proof_C$ by querying an external oracle.
 		
 	We consider the following sequence of hybrid algorithms:
 	\begin{itemize}[noitemsep]
 		\item $H_0$: $\Simulator^{\MalVerifier}(x)$.
 		\item $H_1$: $\HybSim^{\MalVerifier,Z}(x)$, where $Z \colon \Field^{\NumVars_2+\ZKVars}$ is a uniformly random polynomial of individual degree $|\Subcube|-1$.
 		\item $H_2$: $\HybSim^{\MalVerifier,Z}(x)$, where $Z \colon \Field^{\NumVars_2+\ZKVars}$ is sampled as $\hat{C}$ in \cref{step:pzkpcp-prover-c} of the prover algorithm.
 		\item $H_3$: $\View_{\MalVerifier,\Prover}(x)$.
 	\end{itemize}
 	
 	\begin{itemize}
 		\item \ul{$H_0 \equiv H_1$.} This follows directly from the correctness of $\PolySimAlgorithm$ (\cref{lemma:polysim}).
 		
 		\item \ul{$H_1 \equiv H_2$.}
 		This claim relies crucially on the following lemma, originally shown in \cite{ChiesaFS17,ChiesaFGS22}.
 
	\begin{lemma}[{\cite[Corollary 5.3]{ChiesaFS17}}]
		\label{lemma:sum-indep}
		Let $\Field$ be a finite field, $H \subseteq \Field$, $\Degree,\Degree' \in \N$ with $\Degree' \geq 2(|H|-1)$. Let $Q \subseteq \Field^{\NumVars+\ZKVars}$ with $|Q| < |H|^\ZKVars$. Let $Z$ be chosen uniformly at random in $\Field[X_1,\ldots,X_\NumVars,Y_1,\ldots,Y_\ZKVars]$ such that $\deg_{X_i}(Z) \leq \Degree$ for $1 \leq i \leq \NumVars$ and $\deg_{Y_i}(Z) \leq \Degree'$ for all $1 \leq i \leq \ZKVars$.
		
		The ensembles $(\sum_{\vec y \in H^\ZKVars} Z(\vec \alpha, \vec y))_{\vec \alpha \in \Field^\NumVars}$ and $(Z(\vec q))_{\vec q \in Q}$ are independent.
	\end{lemma}
	
	Since $\MalVerifier$ makes at most $\QueryBound$ queries to the proof, $\HybSim$ makes at most $p(\QueryBound)$ queries to its oracle. Then since $p(\QueryBound) < |H|^\ZKVars$ by assumption, by \cref{lemma:sum-indep}, the answers to those queries are statistically independent from $(\sum_{\vec y \in H^\ZKVars} Z(\vec \alpha, \vec y))_{\vec \alpha \in \Field^\NumVars}$. Thus for all $v$,
	\begin{align*}
		\Pr[H_2 \to v] &= \Pr_Z\big[\HybSim^{\MalVerifier,Z}(x) \to v ~\big\vert~ \forall b \in H^{m_2},\, \sum_{c \in H^\ZKVars} \hat C(b,c) = A(\gamma_2(b))\big] \\
		&= \Pr_Z\big[\HybSim^{\MalVerifier,Z}(x) \to v \big]~.
	\end{align*}
	
	\item \ul{$H_2 \equiv H_3$.} This follows from the perfect zero knowledge guarantee of the sumcheck PZK-PCPP. \qedhere
	\end{itemize}
\end{proof}

\section{PZK-PCP for NP and NEXP with constant query complexity}
In this section, we combine our zero-knowledge proof composition theorem and our randomness-efficient, robust PZK-PCPs for $\NP$ and $\NEXP$ to obtain constant-query PZK-PCPs for $\NP$ and $\NEXP$.

\begin{theorem}
    The following inclusions hold:
    \begin{enumerate}[label=(\roman*)]
        \item $\NP \subseteq \PZKPCP[\log n, 1]$, and
        \item $\NEXP \subseteq \PZKPCP[\Poly(n), 1]$.
    \end{enumerate}
    Moreover, when $\QueryBound$ is polynomial, $\NP$ has efficient PZK-PCPs with constant query complexity and query bound $\QueryBound$.
\end{theorem}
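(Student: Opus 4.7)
The plan is to obtain both inclusions by pipelining the three main ingredients built up in the paper: the robust, large-alphabet PZK-PCPs of \cref{thm:robust-pzk-pcp-np-nexp}, the zero-knowledge alphabet reduction (\cref{lem:alphabet-reduction}), and the zero-knowledge proof composition corollary (\cref{cor:zk-proof-comp}). The point of the composition path is that each step is known to preserve perfect zero knowledge (via the locally-computable-proof framework) and constant robustness, so all that is really left is careful parameter tracking to match the definition of $\PZKPCP[\cdot,\cdot]$.

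For the $\NP$ case, fix an arbitrary query bound $\QueryBound \le 2^{\Poly(n)}$. I will first set an auxiliary bound $\tilde{\QueryBound}\eqdef \QueryBound \cdot p(\log n + \log \QueryBound)$ for a polynomial $p$ chosen below; note $\tilde{\QueryBound} \le 2^{\Poly(n)}$. Invoking \cref{thm:robust-pzk-pcp-np-nexp}\ref{item:pzk-pcp-np} with query bound $\tilde{\QueryBound}$ yields a robust PZK-PCP for $\NP$ with randomness $O(\log n + \log \QueryBound)$, query complexity $\Poly(\log n + \log \QueryBound)$, constant robustness, and alphabet $\Alphabet$ of size $\Poly(n,\QueryBound)$, i.e.\ $\log |\Alphabet| = O(\log n + \log \QueryBound)$. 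Applying \cref{lem:alphabet-reduction} converts this into a boolean robust PZK-PCP with the same randomness and query bound $\tilde \QueryBound$, constant robustness, and query complexity $O(\log|\Alphabet| \cdot \Poly(\log n + \log \QueryBound)) = \Poly(\log n + \log \QueryBound)$. Choose the polynomial $p$ in the definition of $\tilde \QueryBound$ to upper bound this boolean query complexity. Finally, applying \cref{cor:zk-proof-comp} yields a PZK-PCP over $\Bits$ with randomness $O(\log n + \log \QueryBound)$, query complexity $O(1)$, and query bound $\tilde \QueryBound / \Poly(\log n + \log \QueryBound) \geq \QueryBound$. Since this holds for every $\QueryBound \le 2^{\Poly(n)}$, we conclude $\NP \subseteq \PZKPCP[\log n, 1]$.

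The $\NEXP$ case is identical, starting from \cref{thm:robust-pzk-pcp-np-nexp}\ref{item:pzk-pcp-nexp} instead: the base PZK-PCP has randomness $\Poly(n) + \log \QueryBound$ and query complexity $\Poly(n)$ over an alphabet of size $\Poly(2^n, \QueryBound)$, so $\log|\Alphabet| = \Poly(n)$. Alphabet reduction gives a boolean robust PZK-PCP with $\Poly(n)$ queries, and composition yields constant query complexity with randomness $\Poly(n) + \log \QueryBound$; the auxiliary slack in the query bound is now a fixed polynomial in $n$, still harmless since we only need $\tilde{\QueryBound} \le 2^{\Poly(n)}$.

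For the efficiency ``moreover'' clause, note that the base robust PZK-PCP of \cref{thm:robust-pzk-pcp-np-nexp}\ref{item:pzk-pcp-np} has an efficient honest prover when $\QueryBound$ is polynomial: the GOS-style sumcheck PZK-PCPP prover of \cref{lem:rob-zk-pcp-sumcheck} is polynomial-time, and the outer construction (\cref{cons:pzkpcp}) wraps it with polynomial-time algebraic operations provided $|\Field|, |H|, k = \Poly(n)$, which holds exactly when $\QueryBound = \Poly(n)$. Neither alphabet reduction nor proof composition inflates prover time super-polynomially: the alphabet reduction applies a linear-size ECC symbol-by-symbol, and the composed prover invokes the (deterministic, efficient) inner PCPP prover of \cref{thm:ckt-val-pcpp} once per choice of outer randomness, of which there are only polynomially many. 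I expect the only subtle part of the write-up to be the bookkeeping in step one, making sure that the polynomial slack $p$ introduced to absorb the composition's query-bound loss is chosen consistently with the bound emerging from the alphabet-reduced query complexity; this is a fixed-point definition that is easy to verify but worth spelling out.
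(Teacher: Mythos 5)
Your proposal matches the paper's own proof: both pipeline \cref{thm:robust-pzk-pcp-np-nexp}, \cref{lem:alphabet-reduction}, and \cref{cor:zk-proof-comp} in that order, with the only freedom being how to set the auxiliary query bound $\tilde{\QueryBound}$ (the paper takes $\tilde{\QueryBound} = O((\QueryBound)^{1+\varepsilon})$ under the harmless assumption $\QueryBound \geq n$, whereas you define $\tilde{\QueryBound} = \QueryBound \cdot p(\log n + \log\QueryBound)$ via a fixed polynomial $p$; the fixed-point concern you flag is indeed benign since $\log\tilde{\QueryBound} = \Theta(\log\QueryBound)$). Your justification of the ``moreover'' efficiency clause is also correct and slightly more explicit than the paper, which leaves it implicit.
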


\begin{proof}
    \parhead{Part (i)} 
    We prove this via a chain of inclusions as follows. 
    
    We start by performing alphabet reduction on the robust PZK-PCP for $\NP$ we constructed in \cref{thm:robust-pzk-pcp-np-nexp}. Let $\QueryBound \leq 2^{\Poly(n)}$ be an arbitrary query bound; for simplicity, we will assume without loss of generality that $\QueryBound \geq n$. By part (i) of \cref{thm:robust-pzk-pcp-np-nexp}, for any $\tilde q \leq 2^{\Poly(n)}$, 
    \begin{equation*}
        \NP \subseteq \RobustPZKPCP_{\tilde{q}, \Alphabet(n)}[\log n + \log \tilde{q}, \Poly(\log n + \log \tilde{q})],
    \end{equation*}
    where $|\Alphabet(n)| = \Poly(n, \tilde{q})$. This can be viewed as a PCP over the alphabet $\Bits^a$, where $a \eqdef \log|\Alphabet(n)| = O(\log n + \log \tilde{q})$. Therefore, we can apply alphabet reduction (\cref{lem:alphabet-reduction}) to obtain
    \begin{equation*}
        \RobustPZKPCP_{\tilde{q}, \Alphabet(n)}[\log n + \log \tilde{q}, \Poly(\log n + \log \tilde{q})] \subseteq \RobustPZKPCP_{\tilde{q}, \Bits}[\log n + \log \tilde q, \Poly(\log n + \log \tilde{q})].
    \end{equation*}
    
    Next, we perform proof composition to obtain constant query complexity. Let $Q(n) = \Poly(\log n + \log \tilde q)$ be the query complexity of the boolean $\RobustPZKPCP$ and set $\tilde q(n)$ equal to a polynomial in $\QueryBound$ and $n$ large enough to ensure that $\tilde{q}(n)/Q(n) \geq \QueryBound(n)$ for all $n\in\N$; $\tilde q = O((q^*)^{1+\varepsilon})$ suffices for any $\varepsilon > 0$.
    Then by \cref{cor:zk-proof-comp}, $\RobustPZKPCP_{\tilde q, \Bits}[r,q] \subseteq \PZKPCP_{\tilde q / q, \Bits}[r+\log n, 1]$. Thus, we have that 
    \begin{equation*}
        \RobustPZKPCP_{\tilde{q}, \Bits}[\log n + \log \tilde q, \Poly(\log n + \log \tilde{q})] \subseteq \PZKPCP_{\QueryBound, \Bits}[\log n + \log \QueryBound, 1].
    \end{equation*}
    By chaining the above inclusions together we have that $\NP \subseteq \PZKPCP_{\QueryBound, \Bits}[\log n + \log \QueryBound, 1]$. Since $q^*$ was arbitrary, $\NP \subseteq \PZKPCP[\log n, 1]$.

    \parhead{Part (ii)} Let $\QueryBound \leq 2^{\Poly(n)}$ be an arbitrary query bound. By part (ii) of \cref{thm:robust-pzk-pcp-np-nexp}, for any $\tilde{q} \leq 2^{\Poly(n)}$, 
    \begin{equation*}
        \NEXP \subseteq \RobustPZKPCP_{\tilde{q}, \Alphabet(n)}[\Poly(n) + \log \QueryBound, \Poly(n)],
    \end{equation*}
    where $|\Alphabet(n)| = \Poly(2^n, \tilde{q})$. As before, this can be viewed as a PCP over the alphabet $\Bits^a$, where $a \eqdef \log|\Alphabet(n)| = O(n + \log\tilde{q})$. Therefore, we can apply alphabet reduction (\cref{lem:alphabet-reduction}) to obtain 
    \begin{equation*}
        \RobustPZKPCP_{\tilde{q}, \Alphabet(n)}[\Poly(n) + \log \QueryBound, \Poly(n)]\subseteq \RobustPZKPCP_{\tilde{q}, \Bits}[\Poly(n), \Poly(n, \log \tilde{q})].
    \end{equation*}

    As before, we perform proof composition to obtain constant query complexity. Let $Q(n) = \Poly(n)$ be the query complexity of the boolean $\RobustPZKPCP$ and set $\tilde{q} \eqdef \QueryBound(n)\cdot Q(n)$ for all $n \in \N$, so $\tilde{q} = \Poly(\QueryBound, n)$. Then by \cref{cor:zk-proof-comp}, $\RobustPZKPCP_{\hat{q}, \Bits}[r,q] \subseteq \PZKPCP_{\hat{q}/Q, \Bits}[r+\log n, 1]$, so 
    \begin{equation*}
        \RobustPZKPCP_{\tilde{q}, \Bits}[\Poly(n), \Poly(n, \log \tilde{q})] \subseteq \PZKPCP_{\QueryBound, \Bits}[\Poly(n) + \log \QueryBound, 1].
    \end{equation*}
    Since $\QueryBound$ was arbitrary, $\NP \subseteq \PZKPCP[\Poly, 1]$.
\end{proof}

\appendix

\clearpage

\printbibliography

\end{document}